\definecolor{darkgreen}{RGB}{0,160,0}
\definecolor{darkred}{RGB}{220,20,60}
\definecolor{darkblue}{RGB}{0,0,160}
\newcommand{\Oh}{\ensuremath{\mathcal{O}}\xspace}
\newcommand{\Ohtilde}{\ensuremath{\smash{\rlap{\raisebox{-0.2ex}{$\widetilde{\phantom{\Oh}}$}}\Oh}}\xspace}
\newcommand{\tOh}{\Ohtilde}
\newcommand{\on}[1]{\operatorname{#1}}
\DeclareMathOperator*{\argmin}{arg\,min}
\DeclareMathOperator{\poly}{poly}
\DeclareMathOperator{\dist}{dist}
\DeclareMathOperator{\polylog}{polylog}
\newcommand{\mds}{\mathsf{CMO}}
\newcommand{\dd}{.\,.}
\newcommand{\core}{\on{core}}
\newcommand{\dens}[1]{#1^{\square}}
\newcommand{\lo}{\textup{lo}}
\newcommand{\hi}{\textup{hi}}
\newcommand{\minplus}{\otimes}
\newcommand{\deltasigma}{\delta^{\Sigma}}
\newcommand{\wit}[2]{\mathcal{W}^{#1, #2}}
\newcommand{\corei}[1]{\core_{#1, \cdot}}
\newcommand{\corej}[1]{\core_{\cdot, #1}}
\newcommand{\deltai}[1]{\delta_{#1, \cdot}}
\newcommand{\deltaj}[1]{\delta_{\cdot, #1}}
\newcommand{\lco}{\mathsf{lco}}
\def\twoheadleadsto{\tikz[baseline=(a.base)]{\draw[%
    decorate,decoration={zigzag,segment length=4, amplitude=.9},%
    ] (0,0) -- (.25, 0);%
    \draw[%
    -{Classical TikZ Rightarrow}.{Classical TikZ Rightarrow},%
    ] (.25, 0) -- (.4, 0);%
    \node (a) at (.4/2,-.03) {\phantom{\(\leadsto\)}};%
}}
\newcommand{\onto}{\twoheadleadsto}
\def\aonto#1{\onto}
\newcommand{\fragmentco}[2]{[#1\dd #2)}
\newcommand{\fragmentoc}[2]{(#1\dd #2]}
\newcommand{\fragmentoo}[2]{(#1\dd #2)}
\newcommand{\fragmentcc}[2]{[#1\dd #2]}
\newcommand{\position}[1]{[#1]}
\newcommand{\floor}[1]{\lfloor{#1}\rfloor}
\newcommand{\ceil}[1]{\lceil{#1}\rceil}
\newcommand{\lis}{\mathsf{LIS}}
\newcommand{\plh}{\ensuremath{\bigstar}}
\newcommand{\pos}{\on{pos}}
\newcommand{\rpos}{\on{rpos}}
\newcommand{\maybe}{}
\crefname{observation}{Observation}{Observations}
\newtheorem{fact}[theorem]{Fact}
\crefname{fact}{Fact}{Facts}
\title{Core-Sparse Monge Matrix Multiplication}
\titlerunning{Core-Sparse Monge Matrix Multiplication: Improved Algorithm and Applications}
\author{Pawe\l{} Gawrychowski}{University of Wroc\l{}aw, Poland}{gawry@cs.uni.wroc.pl}{https://orcid.org/0000-0002-6993-5440}{}
\author{Egor Gorbachev}{Saarland University and Max Planck Institute for Informatics, Saarland Informatics Campus, Germany}{egorbachev@cs.uni-saarland.de}{https://orcid.org/0009-0005-5977-7986}{This work is part of the project TIPEA that has received funding from the European Research Council (ERC) under the European Unions Horizon 2020 research and innovation programme (grant agreement No.\ 850979).}
\author{Tomasz Kociumaka}{Max Planck Institute for Informatics, Saarland Informatics Campus, Germany}{tomasz.kociumaka@mpi-inf.mpg.de}{https://orcid.org/0000-0002-2477-1702}{}
\authorrunning{P. Gawrychowski, E. Gorbachev, T. Kociumaka} 
\keywords{Min-plus matrix multiplication, Monge matrix, longest increasing subsequence}
\begin{document}

\maketitle

\begin{abstract}
    \emph{Min-plus matrix multiplication} is a fundamental tool for designing algorithms operating on
distances in graphs and different problems solvable by dynamic programming.
We know that, assuming the APSP hypothesis, no subcubic-time algorithm exists for the case of general matrices. 
However, in many applications the matrices admit certain structural properties that can be used to design faster algorithms.
For example, when considering a planar graph, one often works with a \emph{Monge matrix}~$A$, meaning
that the \emph{density matrix} $\dens{A}$ has non-negative entries, that is, $\dens{A}_{i,j}\coloneqq A_{i+1,j}+A_{i,j+1}-A_{i,j}-A_{i+1,j+1} \ge 0$.
The min-plus product of two $n\times n$ Monge matrices can be computed in $\Oh(n^2)$ time using the famous SMAWK algorithm.

In applications such as longest common subsequence, edit distance, and longest increasing subsequence, the matrices are even more structured, as observed by Tiskin [J. Discrete Algorithms, 2008]: 
they are (or can be converted to) \emph{simple unit-Monge matrices}, meaning that the density matrix is a permutation matrix and, furthermore, the first column and the last row of the matrix consist of only zeroes.
Such matrices admit an implicit representation of size $\Oh(n)$ and, as shown by Tiskin [SODA 2010 \& Algorithmica, 2015], their
min-plus product can be computed in $\Oh(n\log n)$ time.
Russo [SPIRE 2010 \& Theor. Comput. Sci., 2012] identified a general structural property of matrices that admit such efficient
representation and min-plus multiplication algorithms: the \emph{core size} $\delta$, defined as the number of non-zero entries
in the density matrices of the input and output matrices. 
He provided an adaptive implementation of the SMAWK algorithm that runs in $\Oh((n+\delta)\log^3 n)$ or $\Oh((n+\delta)\log^2 n)$ time
(depending on the representation of the input matrices).

In this work, we further investigate the core size as the parameter that enables efficient min-plus matrix multiplication. 
On the combinatorial side, we provide a (linear) bound on the core size of the product matrix in terms of the core sizes of the input matrices. 
On the algorithmic side, we generalize Tiskin's algorithm (but, arguably, with a more elementary analysis) to
solve the core-sparse Monge matrix multiplication problem in $\Oh(n+\delta\log \delta) \subseteq \Oh(n + \delta \log n)$ time, matching the complexity
for simple unit-Monge matrices.
As witnessed by the recent work of Gorbachev and Kociumaka [STOC'25] for edit distance with integer weights, our generalization opens up the possibility of speed-ups for weighted sequence alignment problems.
Furthermore, our multiplication algorithm is also capable of producing an efficient data structure for recovering the witness for any given entry of the output matrix.
This allows us, for example, to preprocess an integer array of size $n$ in $\Ohtilde(n)$ time so that the longest increasing subsequence
of any sub-array can be reconstructed in $\Ohtilde(\ell)$ time, where $\ell$ is the length of the reported subsequence.
In comparison, Karthik C.\ S.\ and Rahul [arXiv, 2024] recently achieved $\Oh(\ell+n^{1/2}\polylog n)$-time reporting after
$\Oh(n^{3/2}\polylog n)$-time preprocessing.

\end{abstract}

\section{Introduction}\label{sec:introduction}
The \emph{min-plus product} (also known as the distance product or the tropical product) of two matrices $A$ and $B$ is defined as a matrix $C=A\minplus B$ such that $C_{i,k} = \min_{j} \left(A_{i,j}+B_{j,k}\right)$. 
The task of computing the min-plus product of two $n\times n$ matrices can be solved in \smash{$n^{3}/\exp(\Omega(\sqrt{\log n}))$} time~\cite{Wil18}, and it is fine-grained equivalent to the All-Pairs Shortest Path (APSP) problem~\cite{VW18}, asking to compute the distances between every pair of vertices in a directed weighted graph on $n$ vertices.
While it is conjectured that APSP, and hence also the min-plus product, do not admit $n^{3-\Omega(1)}$-time solutions, faster algorithms exist for many special cases arising in numerous applications of the min-plus product; see, e.g.,~\cite{AGM97,VW06,Yus09,Cha10,BGSW19,VWX20,GPVWZ21,CDX22,CDXZ22,Due23}.
Most of the underlying procedures rely on fast matrix multiplication for the standard $(+,\cdot)$-product.

\emph{Monge matrices} constitute a notable exception:
An $n\times n$ matrix $A$ is a Monge matrix if its \emph{density matrix} $\dens{A}$ is non-negative, that is, $\dens{A}_{i,j}\coloneqq A_{i+1,j}+A_{i,j+1}-A_{i,j}-A_{i+1,j+1}\ge 0$ holds for all $i,j\in [0\dd n-1)$.
The min-plus product of two $n\times n$ Monge matrices can be computed in $\Oh(n^2)$ time using the SMAWK algorithm~\cite{AKMSW87}, and the resulting matrix still satisfies the Monge property. 
Monge matrices arise in many combinatorial optimization problems; see \cite{BKR96,Bur07} for surveys.
One of the most successful applications is for planar graphs, where the distances between vertices on a single face satisfy the Monge property (see \cref{fct:monge-planar} and \cref{fig:monge-planar}).
This observation allowed for an $\Ohtilde(n)$-time\footnote{Throughout this paper, we use $\Ohtilde(\cdot)$ notation to suppress factors poly-logarithmic in the input size.} single-source shortest path algorithm for planar graphs (with negative real weights)~\cite{FR06}, and the resulting techniques now belong to the standard toolkit for designing planar graph algorithms; see, e.g.,~\cite{BKMNW17,IKLS17,CGLMPWW23}.

Another important application of Monge matrices is in sequence alignment problems, such as \emph{edit distance} and \emph{longest common subsequence} (LCS), as well as in the related \emph{longest increasing subsequence} (LIS) problem.
Already in the late 1980s, Apostolico, Atallah, Larmore, and McFaddin~\cite{AALM90} noted that the so-called \emph{DIST matrices}, which (among others) specify the weighted edit distances between prefixes of one string and suffixes of another string, satisfy the Monge property. 
A modern interpretation of this observation is that these matrices store boundary-to-boundary distances in planar \emph{alignment graphs}~\cite{Schm98}; see also~\cite{Ben95,KM96,LZ01} for further early applications of DIST matrices and their Monge property.

In the late 2000s, Tiskin~\cite{Tis08b,Tis08a} observed that the DIST matrices originating from the unweighted variants of edit distance, LCS, and LIS problems are more structured.
For this, he introduced the notions of \emph{unit-Monge matrices}, whose density matrices are permutation matrices (that is, binary matrices whose rows and columns contain exactly a single $1$ entry each) and \emph{simple Monge matrices}, whose leftmost column and bottommost row consist of zeroes.
He also proved that the product of two simple unit-Monge matrices still belongs to this class and can be computed in $\Oh(n\log n)$ time provided that each matrix $A$ is represented using the underlying permutation~$P_A$~\cite{Tiskin10}.
By now, the resulting algorithm has found numerous applications, including for computing LCS and edit distance of compressed strings~\cite{Gaw12,HLLW13,Tiskin10,GKLS22}, maintaining these similarity measures for dynamic strings~\cite{CKM20,GK24}, approximate pattern matching~\cite{Tis14,CKW22}, parallel and distributed algorithms for similarity measures~\cite{KT10,MBT21}, and oracles for substring similarity~\cite{Sak19,CGMW21,Sak22}.
Furthermore, Tiskin's algorithm has been used to solve the LIS problem in various settings, such as dynamic~\cite{KS21}, parallel~\cite{CHS23}, and distributed~\cite{Koo24}.
A disadvantage of Tiskin's original description (and even the later informal descriptions~\cite{CF}) is its dependence on the algebraic structure known as the monoid of seaweed braids, which natively supports unweighted LCS only (tasks involving edit distance need to be reduced to LCS counterparts).
This makes the algorithm difficult to generalize to weighted problems and extend even to seemingly
simple questions such as recovering (an implicit representation of) the witness indices $j$ such that $C_{i,k} = A_{i,j}+B_{j,k}$~\cite{KR24}.

Russo~\cite{Russo10} identified the number of non-zero elements in $\dens{A}$, $\dens{B}$, and $\dens{(A\minplus B)}$,
called the \emph{core size} $\delta$, as the parameter that enables fast min-plus matrix multiplication. It is easy to see
that $A$, $B$, and $A\minplus B$ can be stored in $\Oh(n+\delta)$ space using a \emph{condensed representation}, consisting of the
\emph{boundary entries} (the leftmost column and bottommost row, e.g., suffice) and \emph{core elements}, i.e., the non-zero entries of the density matrix.
Then, Russo's algorithm for the \emph{core-sparse Monge matrix multiplication} problem computes the condensed representation of
$A\minplus B$ in $\Oh((n+\delta)\log^2 n)$ time when provided with constant-time random access to $A$ and $B$, and in
$\Oh((n+\delta)\log^3 n)$ time given condensed representations of $A$ and $B$.\footnote{More precisely, Russo's algorithm builds a data structure that provides $\Oh(\log n)$-time random access to the entries of $A \minplus B$. Consequently, for repeated multiplication, we cannot assume constant-time access to the input matrices. Hence, $\Oh((n+\delta)\log^3 n)$ is a more realistic bound.}
Russo's algorithm has a very different structure
than Tiskin's: it relies on a clever adaptation of the SMAWK algorithm~\cite{AKMSW87} to provide efficient random access to $C$ and then employs
binary search to find individual non-zero entries of $\dens{C}$.
This brings the question of unifying both approaches and understanding the complexity of core-sparse Monge matrix multiplication.

\subparagraph{Our Results.}
We consider the core-sparse Monge matrix multiplication problem from the combinatorial and algorithmic points of view, and confirm
that the core size is the right parameter that enables fast min-plus matrix multiplication.
Let $\delta(A)$, or the core size of~$A$, denote the number of non-zero elements in $\dens{A}$.
We begin with analyzing, in \cref{sec:core-properties}, how the core size of $A\minplus B$ depends on the core sizes of $A$ and $B$.

\begin{restatable}{theorem}{lmcorepreservation} \label{lm:core-preservation}
    Let $A$ be a $p \times q$ Monge matrix, and let $B$ be a $q \times r$ Monge matrix.
    We have $\delta(A \minplus B) \le 2 \cdot (\delta(A) + \delta(B))$.
\end{restatable}
We stress that this the first bound on $\delta(A\minplus B)$ in terms of $\delta(A)$ and $\delta(B)$: the complexity analysis of Russo's algorithm~\cite{Russo10} requires a bound on all $\delta(A)$, $\delta(B)$, and $\delta(A\minplus B)$.

Next, in \cref{sec:core-based-multiplication-algorithm}, we generalize Tiskin's algorithm (but fully avoiding the formalism of the seaweed product) to solve the core-sparse Monge matrix multiplication problem.
We believe that the more elementary interpretation makes our viewpoint not only more robust but also easier to understand.
At the same time, the extension from simple unit-Monge matrices to general core-sparse Monge matrices introduces a few technical complications handled in~\cref{sec:appendix}.
Notably, we need to keep track of the leftmost column and the bottommost row instead of assuming they are filled with zeroes. Further, the core does not form a permutation so splitting it into two halves of the same size requires some calculations.

\newcommand{\matone}{A}
\newcommand{\mattwo}{B}
\newcommand{\dimone}{p}
\newcommand{\dimtwo}{q}
\newcommand{\dimthree}{r}
\newcommand{\myarticle}{a}
\newcommand{\tmpfootnote}{\footnote{In the technical sections of the paper we use the $\Oh(\cdot)$-notation conservatively. Specifically, we interpret $\Oh(f(x_1, \ldots, x_k))$ as the set of functions $g(x_1, \ldots, x_k)$ for which there are constants $c_g, N_g > 0$ such that $g(x_1, \ldots, x_k) \le c_g \cdot f(x_1, \ldots, x_k)$ holds for all valid tuples $(x_1, \ldots, x_k)$ satisfying $\max_i x_i \ge N_g$. Accordingly, whenever the expression inside $\Oh(\cdot)$ depends on multiple parameters, we sometimes add $1$ or $2$ to the arguments of logarithms to ensure formal correctness in corner cases.}}
\begin{restatable}{theorem}{lmcorebasedmultiplicationalgorithm} \label{lm:core-based-multiplication-algorithm}
    There is a (deterministic) algorithm that, given the condensed representations of \myarticle{} $\dimone \times \dimtwo$ Monge matrix $\matone$ and \myarticle{} $\dimtwo \times \dimthree$ Monge matrix $\mattwo$, in time $\Oh(\dimone + \dimtwo + \dimthree + (\delta(\matone) + \delta(\mattwo)) \log (1+\delta(\matone) + \delta(\mattwo)))$ computes the condensed representation of $\matone \minplus \mattwo$.\tmpfootnote
\end{restatable}
\renewcommand{\matone}{M_1}
\renewcommand{\mattwo}{M_2}
\renewcommand{\dimone}{n_1}
\renewcommand{\dimtwo}{n_2}
\renewcommand{\dimthree}{n_3}
\renewcommand{\myarticle}{an}
\renewcommand{\tmpfootnote}{}
The above complexity improves upon Russo's~\cite{Russo10} and matches Tiskin's~\cite{Tiskin10} for the simple unit-Monge case.
Thanks to the more direct description, we can easily extend our algorithm to build (in the same complexity)
an $\Oh(n+\delta)$-size data structure that, given $(i,k)$, in $\Oh(\log n)$ time computes the smallest \emph{witness} $j$ such that $C_{i,k}=A_{i,j}+B_{j,k}$.

\subparagraph{Applications.}
As an application of our witness recovery functionality, we consider the problem of range LIS queries.
This task asks to preprocess an integer array $s[0\dd n)$ so that, later, given two indices $0\le i < j \le n$, the longest increasing subsequence (LIS) of the sub-array $s[i\dd j)$ can be reported efficiently.
Tiskin~\cite{Tis07,Tiskin10} showed that the LIS size $\ell$ can be reported in $\Oh(\log n)$ time after $\Oh(n\log^2 n)$-time preprocessing. 
It was unclear, though, how to efficiently report the LIS itself. 
The recent work of Karthik C.\ S.\ and Rahul~\cite{KR24} achieved $\Oh(n^{1/2}\log^3 n+\ell)$-time reporting (correct with high probability) after $\Oh(n^{3/2}\log^3 n)$-time preprocessing. 
It is fairly easy to use our witness recovery algorithm to deterministically support $\Oh(\ell\log^2 n)$-time reporting after $\Oh(n\log^2 n)$-time preprocessing; see \cref{sec:range-lis} for an overview of this result.
As further shown in \cref{app:range-lis}, the reporting time can be improved to $\Oh(\ell \log n)$ and, with the preprocessing time increased to $\Oh(n\log^3 n)$, all the way to $\Oh(\ell)$.

\begin{restatable}{theorem}{thmlis}\label{lm:lis-reporting}
    For every parameter $\alpha \in [0, 1]$, there exists an algorithm that, given an integer array $s[0\dd n)$, in time $\Oh(n \log^{3 - \alpha} n)$ builds a data structure that can answer range LIS \emph{reporting} queries in time $\Oh(\ell \log^{\alpha} n)$, where $\ell$ is the length of the reported sequence.

    In particular, there is an algorithm with $\Oh(n \log^3 n)$ preprocessing and $\Oh(\ell)$ reporting time and an algorithm with $\Oh(n \log^2 n)$ preprocessing and $\Oh(\ell \log n)$ reporting time.
\end{restatable}

In parallel to this work, Gorbachev and Kociumaka~\cite{GK24} used core-sparse Monge matrix multiplication
for the \emph{weighted edit distance} with integer weights.
\Cref{lm:core-based-multiplication-algorithm} allowed for saving two logarithmic factors in the final time complexities compared to the initial preprint using Russo's approach~\cite{Russo10}.
Weighted edit distance is known to be reducible to unweighted LCS only for a very limited class of so-called uniform weight functions~\cite{Tis07}, so this application requires the general core-sparse Monge matrix multiplication.

\subparagraph{Open Problem.}

An interesting open problem is whether any non-trivial trade-off can be achieved for the \emph{weighted} version of range LIS queries, where each element of $s$ has a weight, and the task is to compute a maximum-weight increasing subsequence of $s[i\dd j)$: either the weight alone or the whole subsequence. 
Surprisingly, as we show in \cref{sec:open}, if our bound $\delta(A\minplus B)\le 2\cdot (\delta(A)+\delta(B))$ of \cref{lm:core-preservation} can be improved to $\delta(A\minplus B)\le c\cdot (\delta(A)+\delta(B))+\Ohtilde(p+q+r)$ for some $1\le c < 2$, then our techniques automatically yield a solution with $\Ohtilde(n^{1+\log_2 c})$ preprocessing time, $\Ohtilde(1)$ query time, and $\Ohtilde(\ell)$ reporting time.

\section{Preliminaries}\label{sec:preliminaries}

\begin{definition} \label{def:monge}
    A matrix $A$ of size $p \times q$ is a \emph{Monge matrix} if it satisfies the following \emph{Monge property}
    for every $i \in [0 \dd p - 1)$ and $j \in [0 \dd q - 1)$:
    \[A_{i, j} + A_{i + 1, j + 1} \le A_{i, j + 1} + A_{i + 1, j}.\]
    Furthermore, $A$ is an \emph{anti-Monge matrix} if the matrix $-A$ (with negated entries) is a Monge matrix.
\end{definition}

In some sources, the following equivalent (but seemingly stronger) condition is taken as the definition of Monge matrices.
\begin{observation}\label{cor:non-local-monge-property}
    A matrix $A$ of size $p \times q$ is a Monge matrix if and only if it satisfies the following inequality for all integers $0\le a \le b < p$ and $0\le c \le d < q$:
    \[A_{a, c} + A_{b, d} \le A_{a, d} + A_{b, c}.\]
\end{observation}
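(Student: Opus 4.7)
The plan is to prove the two directions of the equivalence separately. The implication from the non-local property to the local property is immediate: setting $b = a+1$ and $d = c+1$ in the non-local inequality recovers exactly \cref{def:monge}.

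For the converse direction, I would establish the following telescoping identity valid for all $0 \le a \le b < p$ and $0 \le c \le d < q$:
\[
A_{a,d} + A_{b,c} - A_{a,c} - A_{b,d} = \sum_{i=a}^{b-1} \sum_{j=c}^{d-1} \dens{A}_{i,j}.
\]
This identity follows by a routine double telescoping: summing $A_{i,j+1} - A_{i,j} - A_{i+1,j+1} + A_{i+1,j}$ over $j \in \fragmentco{c}{d}$ yields $(A_{i,d} - A_{i,c}) - (A_{i+1,d} - A_{i+1,c})$, and then summing over $i \in \fragmentco{a}{b}$ telescopes to the left-hand side. Once the identity is in place, the local Monge property $\dens{A}_{i,j} \ge 0$ makes every summand on the right nonnegative, so the non-local inequality of \cref{cor:non-local-monge-property} follows. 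The edge cases $a = b$ or $c = d$ are immediate since both sides of the claimed inequality vanish (and the corresponding sum is empty).

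There is no real obstacle here; the only detail that requires care is verifying the telescoping identity cleanly, but this is standard discrete calculus and introduces no genuine difficulty. An alternative, equally straightforward route would be induction on $(b - a) + (d - c)$, splitting either the row range or column range at an intermediate index and adding two inductive inequalities to cancel the middle terms, but the double-sum identity gives a more transparent one-shot proof.
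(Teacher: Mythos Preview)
Your proof is correct and follows the same approach as the paper: the paper's one-line proof simply says to sum the local Monge inequality of \cref{def:monge} over all $i\in\fragmentco{a}{b}$ and $j\in\fragmentco{c}{d}$, which is exactly your double-telescoping identity expressed in terms of $\dens{A}$. The only difference is that you spell out the telescoping explicitly and treat the trivial reverse direction separately, whereas the paper leaves both implicit.
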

\begin{proof}
    It suffices to sum the inequality in \cref{def:monge} for all $i\in [a\dd b)$ and $j\in [c\dd d)$.
\end{proof}

\begin{definition} \label{def:min-plus}
    The \emph{min-plus product} of a matrix $A$ of size $p \times q$ and a matrix $B$ of size $q \times r$ is a matrix $A \minplus B=C$ of size $p \times r$ satisfying $C_{i, k} = \min_{j\in \fragmentco{0}{q}} A_{i, j} + B_{j, k}$ for all $i \in \fragmentco{0}{p}$ and $k \in \fragmentco{0}{r}$.

    For $i \in \fragmentco{0}{p}$ and $k \in \fragmentco{0}{r}$, we call $j \in \fragmentco{0}{q}$ a \emph{witness} of $C_{i, k}$ if and only if $C_{i, k} = A_{i, j} + B_{j, k}$.
    We define the $p \times r$ \emph{witness matrix} $\wit{A}{B}$ such that $\wit{A}{B}_{i, k}$ is the smallest witness of $C_{i, k}$ for each $i \in \fragmentco{0}{p}$ and $k \in \fragmentco{0}{r}$.
\end{definition}

The min-plus product of two Monge matrices is also Monge; we include a simple proof in \cref{sec:appendix-fctmongeproductismonge} for completeness.

\renewcommand{\maybe}{\lipicsEnd}
\begin{restatable}[{\cite[Corollary A]{Yao82}}]{fact}{fctmongeproductismonge} \label{fct:monge-product-is-monge}
    Let $A, B$, and $C$ be matrices such that $A \minplus B = C$.
    If $A$ and $B$ are Monge, then $C$ is also Monge.
\end{restatable}
\renewcommand{\maybe}{}

In the context of Monge matrices, it is useful to define the core and the density matrix.

\begin{definition}
    The \emph{density matrix} of a matrix $A$ of size $p \times q$ is a matrix $\dens{A}$ of size $(p - 1) \times (q - 1)$ satisfying $\dens{A}_{i, j} = A_{i, j + 1} + A_{i + 1, j} - A_{i, j} - A_{i + 1, j + 1}$ for $i \in \fragmentco{0}{p - 1}, j \in \fragmentco{0}{q - 1}$.

    We define the \emph{core} of the matrix $A$ as \[\core(A) \coloneqq \{(i, j, \dens{A}_{i, j}) \mid i \in \fragmentco{0}{p - 1}, j \in \fragmentco{0}{q - 1}, \dens{A}_{i, j} \neq 0 \}\] and denote the \emph{core size} of the matrix $A$ by $\delta(A) \coloneqq |\core(A)|$.
    Furthermore, we define the \emph{core sum} of the matrix $A$ as the sum of the values of all core elements of $A$, that is, \[\deltasigma(A) \coloneqq \sum_{i \in \fragmentco{0}{p - 1}} \sum_{j \in \fragmentco{0}{q - 1}} \dens{A}_{i, j}.\]
\end{definition}

Note that, for a Monge matrix $A$, all entries of its density matrix are non-negative, and thus $\core(A)$ consists of triples $(i, j, v)$ with some positive values $v$.

For any matrix $A$ of size $p \times q$ and integers $0\le a < b \le p$ and $0 \le c < d \le q$, we write $A\fragmentco{a}{b}\fragmentco{c}{d}$ to denote the contiguous submatrix of $A$ consisting of all entries on the intersection of rows $\fragmentco{a}{b}$ and columns $\fragmentco{c}{d}$ of $A$.
Matrices $A\fragmentcc{a}{b}\fragmentcc{c}{d}$, $A\fragmentoc{a}{b}\fragmentco{c}{d}$, etc., are defined analogously.

\section{\texorpdfstring{\boldmath Properties of $\delta$ and $\deltasigma$}{Properties of δ and δ^Σ}}\label{sec:core-properties}

In this section, we provide some useful properties of $\delta$ and $\deltasigma$.
Most importantly, we show how to bound $\delta(A \minplus B)$ in terms of $\delta(A)$ and $\delta(B)$.

\noindent
The following observation is a straightforward consequence of the definitions of $\deltasigma$ and $\dens{A}$.

\renewcommand{\maybe}{\lipicsEnd}
\begin{restatable}{observation}{lmcalculatevaluefromcore}\label{lm:calculate-value-from-core}
    For a matrix $A$ of size $p \times q$, and integers $0\le a \le b < p$ and $0\le c \le d < q$,
    \[A_{a, c} + A_{b, d} + \deltasigma(A\fragmentcc{a}{b}\fragmentcc{c}{d}) = A_{a, d} + A_{b, c}. \]
\end{restatable}
\renewcommand{\maybe}{}

\begin{proof}
    By the definition of $\dens{A}$, we have $A_{i, j} + A_{i + 1, j + 1} + \dens{A}_{i, j} = A_{i, j + 1} + A_{i + 1, j}$ for all $i \in \fragmentco{a}{b}$ and $j \in \fragmentco{c}{d}$.
    The desired equality follows by summing up all these equalities.
\end{proof}

An application of \cref{lm:calculate-value-from-core} for $a = c = 0$ implies that every value of $A$ can be uniquely reconstructed from $A$'s core and the values in the topmost row and the leftmost column of $A$.

\begin{definition}
    The \emph{condensed representation} of a matrix $A$ consists of the core of $A$ as well as the values in the topmost row and the leftmost column of $A$. 
\end{definition}

\begin{observation}\label{lm:monge-submatrix}
    Any submatrix $A'$ (not necessarily contiguous) of a Monge matrix $A$ is Monge.
\end{observation}
\begin{proof}
    By \cref{cor:non-local-monge-property}, if $A$ is Monge, the Monge property holds for every (not necessarily contiguous) $2 \times 2$ submatrix of $A$. 
    In particular, it holds for contiguous $2 \times 2$ submatrices of $A'$, and thus $A'$ satisfies \cref{def:monge}.
\end{proof}

We next show a crucial property of the witness matrix.

\begin{restatable}[{\cite[Theorem 1]{Yao82}}]{lemma}{lmwitnessmonotonicity} \label{lm:witness-monotonicity}
    For any two Monge matrices, $A$ of size $p \times q$ and $B$ of size $q \times r$, the witness matrix $\wit{A}{B}$ is non-decreasing by rows and columns.
\end{restatable}

\begin{proof}
    We prove that $\wit{A}{B}$ is non-decreasing by columns.
    The claim for rows is analogous.
    Fix some $i \in \fragmentco{0}{p-1}$ and $k \in \fragmentco{0}{r}$.
    Let $j^* \coloneqq \wit{A}{B}_{i, k}$ and $j\in \fragmentco{0}{j^*}$.
    As $j^*$ is the smallest witness for $(i, k)$, we have $A_{i, j} + B_{j, k} > A_{i, j^*} + B_{j^*, k}$.
    Using this fact and the Monge property for $A$, we derive
    \begin{align*}
        A_{i + 1, j} + B_{j, k} &\ge (A_{i, j} + A_{i + 1, j^*} - A_{i, j^*}) + B_{j, k}\\
                                &= (A_{i, j} + B_{j, k}) + (A_{i + 1, j^*} - A_{i, j^*})\\
                                &> (A_{i, j^*} + B_{j^*, k}) + (A_{i + 1, j^*} - A_{i, j^*})\\
                                &= A_{i + 1, j^*} + B_{j^*, k}.
    \end{align*}
    Since $A_{i + 1, j} + B_{j, k}>A_{i + 1, j^*} + B_{j^*, k}$ for all $j \in \fragmentco{0}{j^*}$, we conclude that $\wit{A}{B}_{i + 1, k} \ge j^*=\wit{A}{B}_{i, k}$ holds as required.
\end{proof}

We now show how to bound $\deltasigma(A \minplus B)$ in terms of $\deltasigma(A)$ and $\deltasigma(B)$.

\begin{lemma}\label{lm:core-sum-preservation}
    Let $A$ be a $p \times q$ Monge matrix, and let $B$ be a $q \times r$ Monge matrix.
    Then, $\deltasigma(A \minplus B) \le \min\{\deltasigma(A), \deltasigma(B)\}$.
\end{lemma}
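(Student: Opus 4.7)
The plan is to unwrap the core sum as a corner-based expression (using \cref{lm:calculate-value-from-core} with $a=c=0$), which gives $\deltasigma(M) = M_{0, q_M-1} + M_{p_M-1, 0} - M_{0, 0} - M_{p_M-1, q_M-1}$ for any $p_M \times q_M$ matrix $M$. Then I will bound $\deltasigma(C)$ for $C = A \minplus B$ by feeding carefully chosen witness columns into the min-plus definition so that either all $A$-terms cancel (yielding a sub-core-sum of $B$) or all $B$-terms cancel (yielding a sub-core-sum of $A$).

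Concretely, I would set $j_0 \coloneqq \wit{A}{B}_{0, 0}$ and $j_1 \coloneqq \wit{A}{B}_{p-1, r-1}$, so that $C_{0, 0} = A_{0, j_0} + B_{j_0, 0}$ and $C_{p-1, r-1} = A_{p-1, j_1} + B_{j_1, r-1}$ hold as equalities. By \cref{lm:witness-monotonicity} applied first across rows and then across columns, $j_0 \le j_1$. To prove $\deltasigma(C) \le \deltasigma(A)$, I would upper-bound $C_{0, r-1}$ using $j = j_1$ and $C_{p-1, 0}$ using $j = j_0$ in the min-plus definition. Substituting these four (in)equalities into the corner formula for $\deltasigma(C)$ makes the $B$-terms cancel, leaving
\[
\deltasigma(C) \le A_{0, j_1} + A_{p-1, j_0} - A_{0, j_0} - A_{p-1, j_1} = \deltasigma(A\fragmentcc{0}{p-1}\fragmentcc{j_0}{j_1}),
\]
where the last equality is another invocation of \cref{lm:calculate-value-from-core} (legitimate because $j_0 \le j_1$). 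Since $\dens{A}$ has non-negative entries, this sub-core-sum is at most $\deltasigma(A)$.

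A symmetric choice—bounding $C_{0, r-1}$ with $j = j_0$ and $C_{p-1, 0}$ with $j = j_1$—makes the $A$-terms cancel and gives $\deltasigma(C) \le \deltasigma(B\fragmentcc{j_0}{j_1}\fragmentcc{0}{r-1}) \le \deltasigma(B)$, completing the proof. The only delicate step is the appeal to witness monotonicity to force $j_0 \le j_1$: without that ordering, the resulting expression would be a \emph{signed} corner sum rather than a genuine restricted core sum, and the bound could fail. Beyond this observation, the argument is a short calculation.
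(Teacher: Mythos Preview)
Your proposal is correct and essentially identical to the paper's proof: both pick the witnesses $j_0=\wit{A}{B}_{0,0}$ and $j_1=\wit{A}{B}_{p-1,r-1}$, use \cref{lm:witness-monotonicity} to get $j_0\le j_1$, bound the two off-diagonal corners of $C$ via the min-plus definition with these same indices, and cancel to obtain $\deltasigma(C)\le\deltasigma(A\fragmentcc{0}{p-1}\fragmentcc{j_0}{j_1})\le\deltasigma(A)$ (and symmetrically for $B$).
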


\begin{proof}
    Let $C \coloneqq A \minplus B$.
    Denote $j \coloneqq \wit{A}{B}_{0, 0}$ and $j' \coloneqq \wit{A}{B}_{p - 1, r - 1}$, where $j \le j'$ due to \cref{lm:witness-monotonicity}.
    We have $C_{0, 0} = A_{0, j} + B_{j, 0}$ and $C_{p - 1, r - 1} = A_{p - 1, j'} + B_{j', r - 1}$.
    Furthermore, $C_{p - 1, 0} \le A_{p - 1, j} + B_{j, 0}$ and $C_{0, r - 1} \le A_{0, j'} + B_{j', r - 1}$ due to the definition of $C$.
    Hence, due to \cref{lm:calculate-value-from-core}, we obtain
    \begin{align*}
        \deltasigma(C) &= C_{p - 1, 0} + C_{0, r - 1} - C_{0, 0} - C_{p - 1, r - 1}\\
                       &\le (A_{p - 1, j} + B_{j, 0}) + (A_{0, j'} + B_{j', r - 1}) - (A_{0, j} + B_{j, 0}) - (A_{p - 1, j'} + B_{j', r - 1})\\
                       &= A_{p - 1, j} + A_{0, j'} - A_{0, j} - A_{p - 1, j'}\\
                       &= \deltasigma(A\fragmentco{0}{p}\fragmentcc{j}{j'})\\
                       &\le \deltasigma(A).
    \end{align*}
    The inequality $\deltasigma(C) \le \deltasigma(B)$ can be obtained using a symmetric argument.
\end{proof}

The following corollary says that every core element of $A \minplus B$ can be attributed to at least one core element of $A$ and at least one core element of $B$.

\begin{corollary}\label{cor:local-core-sum-preservation}
    Let $A$ be a $p \times q$ Monge matrix, let $B$ be a $q \times r$ Monge matrix, and let $C \coloneqq A \minplus B$.
    Consider integers $i \in \fragmentco{0}{p-1}$ and $k \in \fragmentco{0}{r-1}$ such that $\dens{C}_{i, k} \neq 0$.
    There exist integers $j_A, j_B \in \fragmentco{\wit{A}{B}_{i, k}}{\wit{A}{B}_{i + 1, k + 1}}$ such that $\dens{A}_{i, j_A} \neq 0$ and $\dens{B}_{j_B, k} \neq 0$.
\end{corollary}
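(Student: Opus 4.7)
The plan is to mimic the proof of \cref{lm:core-sum-preservation}, but localized to the $2 \times 2$ submatrix of $C$ at rows $\{i, i+1\}$ and columns $\{k, k+1\}$. Set $w \coloneqq \wit{A}{B}_{i,k}$ and $w' \coloneqq \wit{A}{B}_{i+1, k+1}$; by \cref{lm:witness-monotonicity}, $w \le w'$, and both lie in $\fragmentco{0}{q}$, which is what we need for the density-matrix indexing.

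First I would use the witness definitions to write $C_{i, k} = A_{i, w} + B_{w, k}$ and $C_{i+1, k+1} = A_{i+1, w'} + B_{w', k+1}$ as equalities, and the definition of min-plus to bound $C_{i, k+1} \le A_{i, w'} + B_{w', k+1}$ and $C_{i+1, k} \le A_{i+1, w} + B_{w, k}$. Plugging these four (in)equalities into $\dens{C}_{i, k} = C_{i, k+1} + C_{i+1, k} - C_{i, k} - C_{i+1, k+1}$, the $B$-terms cancel, yielding
\[
    \dens{C}_{i, k} \;\le\; A_{i, w'} + A_{i+1, w} - A_{i, w} - A_{i+1, w'} \;=\; \deltasigma(A\fragmentcc{i}{i+1}\fragmentcc{w}{w'}),
\]
where the last equality is \cref{lm:calculate-value-from-core}. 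A symmetric rearrangement (now cancelling the $A$-terms) gives $\dens{C}_{i,k} \le \deltasigma(B\fragmentcc{w}{w'}\fragmentcc{k}{k+1})$.

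Since $C$ is Monge by \cref{fct:monge-product-is-monge}, the assumption $\dens{C}_{i,k} \ne 0$ is equivalent to $\dens{C}_{i,k} > 0$. Expanding $\deltasigma(A\fragmentcc{i}{i+1}\fragmentcc{w}{w'}) = \sum_{j \in \fragmentco{w}{w'}} \dens{A}_{i, j}$ and noting every summand is non-negative (Monge), some index $j_A \in \fragmentco{w}{w'}$ must satisfy $\dens{A}_{i, j_A} \ne 0$; the analogous argument on $B$ yields $j_B \in \fragmentco{w}{w'}$ with $\dens{B}_{j_B, k} \ne 0$.

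There is no real obstacle here beyond being careful with the index bookkeeping: one should check that $w, w' \le q-1$ so that the submatrices $A\fragmentcc{i}{i+1}\fragmentcc{w}{w'}$ and $B\fragmentcc{w}{w'}\fragmentcc{k}{k+1}$ are well-defined and that the claimed $j_A, j_B$ indices fall in the valid density-matrix range $\fragmentco{0}{q-1}$. Both follow because witnesses, by \cref{def:min-plus}, live in $\fragmentco{0}{q}$, so $\fragmentco{w}{w'} \subseteq \fragmentco{0}{q-1}$ automatically; and the case $w = w'$ is excluded \emph{a posteriori} because it would force $\dens{C}_{i, k} \le 0$.
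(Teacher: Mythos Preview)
Your proof is correct and follows essentially the same approach as the paper's. The only cosmetic difference is that the paper first observes $C\fragmentcc{i}{i+1}\fragmentcc{k}{k+1} = A\fragmentcc{i}{i+1}\fragmentcc{w}{w'} \minplus B\fragmentcc{w}{w'}\fragmentcc{k}{k+1}$ (using \cref{lm:witness-monotonicity} for all four corners) and then invokes \cref{lm:core-sum-preservation} as a black box, whereas you unfold that lemma's computation directly in place; the resulting inequalities are identical.
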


\begin{proof}
    Let $j = \wit{A}{B}_{i, k}$ and $j' = \wit{A}{B}_{i + 1, k + 1}$.
    By monotonicity of $\wit{A}{B}$ (\cref{lm:witness-monotonicity}), we have $j \le \wit{A}{B}_{i, k + 1}, \wit{A}{B}_{i + 1, k}\le j'$.
    Thus, $C\fragmentcc{i}{i + 1}\fragmentcc{k}{k + 1} = A\fragmentcc{i}{i + 1}\fragmentcc{j}{j'} \minplus B\fragmentcc{j}{j'}\fragmentcc{k}{k + 1}$.
    Due to \cref{lm:core-sum-preservation}, we have $0 < \dens{C}_{i, k} = \deltasigma(C\fragmentcc{i}{i + 1}\fragmentcc{k}{k + 1}) \le \deltasigma(A\fragmentcc{i}{i + 1}\fragmentcc{j}{j'})$.
    Thus, there exists a core element in $\dens{A}\fragmentco{i}{i + 1}\fragmentco{j}{j'}$.
    
    Symmetrically, $0 < \dens{C}_{i, k}= \deltasigma(C\fragmentcc{i}{i + 1}\fragmentcc{k}{k + 1})\le \deltasigma(B\fragmentcc{j}{j'}\fragmentcc{k}{k + 1})$, and there exists a core element in $\dens{B}\fragmentco{j}{j'}\fragmentco{k}{k + 1}$.
\end{proof}

We now use \cref{cor:local-core-sum-preservation} to derive a bound on $\delta(A \minplus B)$ in terms of $\delta(A)$ and $\delta(B)$.
The underlying idea is to show that every core element of $C$ is either the first or the last one (in the lexicographic ordering) that the mapping of \cref{cor:local-core-sum-preservation}
attributes to the corresponding core element of $A$ \emph{or} $B$ (see \cref{fig:core-preservation-proof} for why the opposite would lead to a contradiction).

\lmcorepreservation*
    \begin{figure}[htb]
        \begin{center}
            \includegraphics[scale=1.5]{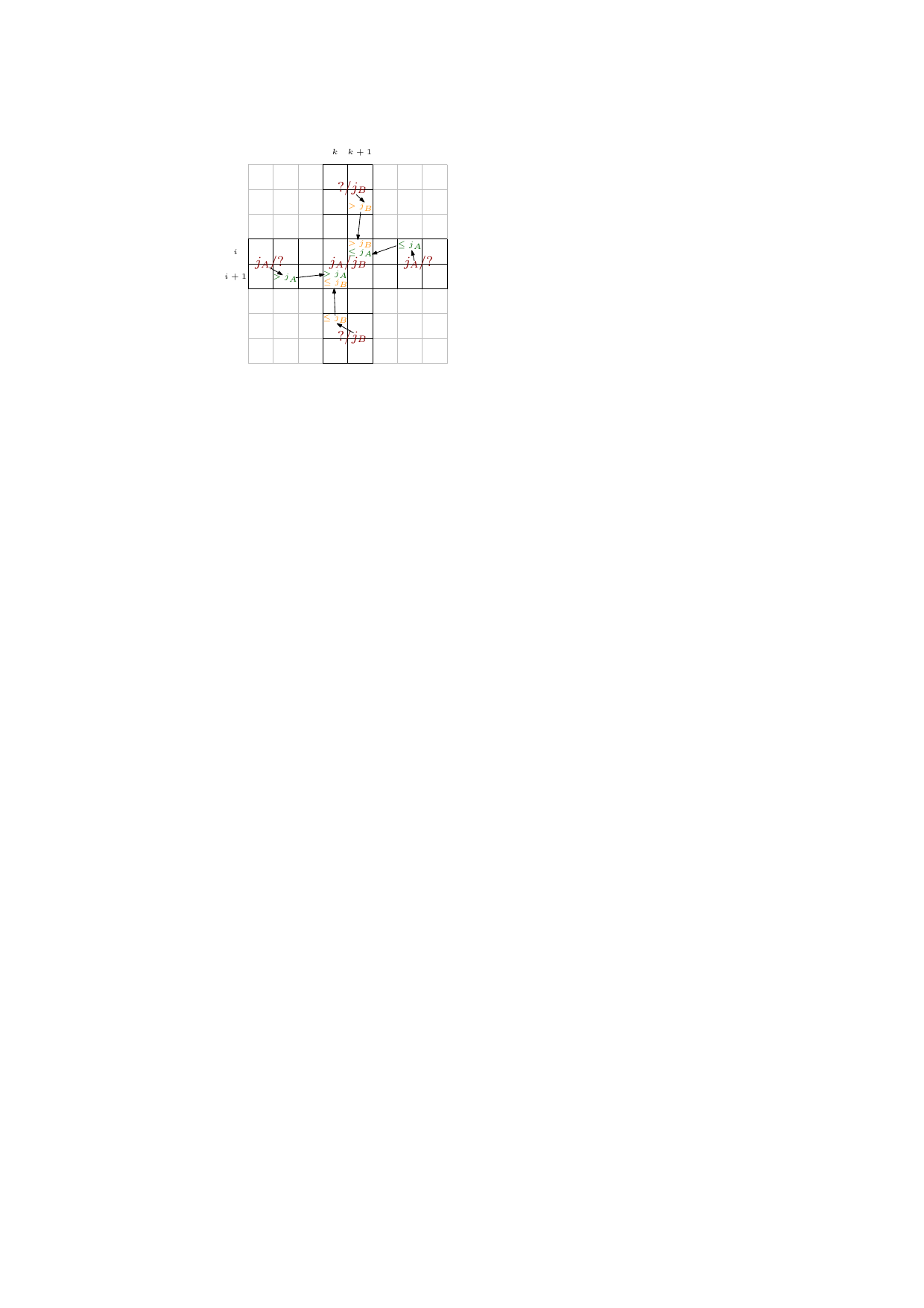}
        \end{center}
    
        \caption{We consider the matrix of witnesses $\wit{A}{B}$, drawn as an array with rows in $[0\dd p)$ indexed form top to bottom and columns in $[0\dd r)$ indexed from left to right.
        In the cells, we write bounds on the corresponding witnesses. 
        In the grid nodes, we write the two values in $[0\dd q-1)$ derived from \cref{cor:local-core-sum-preservation} for each core element of $C$. 
        The center of the picture corresponds to a core value $\dens{C}_{i,k}$, attributed to $\dens{A}_{i,j_A}$ and $\dens{B}_{j_B,k}$, respectively.
        For a proof by contradiction, suppose that there are some core elements to the left and right of $(i, k)$ attributed to $\dens{A}_{i,j_A}$ and some core elements above and below $(i, k)$ attributed to $\dens{B}_{j_B,k}$. 
        The arrows represent implications (based on \cref{cor:local-core-sum-preservation,lm:witness-monotonicity}), from which a contradiction follows: $\textcolor{darkgreen}{j_A <{}} \wit{A}{B}_{i + 1, k} \textcolor{orange}{{}\le j_B <{}}\wit{A}{B}_{i, k + 1} \textcolor{darkgreen}{{}\le j_A}$.}
        \label{fig:core-preservation-proof}
    \end{figure}
    \begin{proof}
        Let $C \coloneqq A \minplus B$.
        Define a function $f_A : \core(C) \to \core(A)$ that maps every core element $(i, k, \dens{C}_{i, k}) \in \core(C)$ to $(i, j, \dens{A}_{i, j})$ for the smallest $j \in \fragmentco{\wit{A}{B}_{i, k}}{\wit{A}{B}_{i + 1, k + 1}}$ with $\dens{A}_{i, j} \neq 0$; such a $j$ exists due to \cref{cor:local-core-sum-preservation}.
        Analogously, we define a function $f_B : \core(C) \to \core(B)$ that maps every core element $(i, k, \dens{C}_{i, k}) \in \core(C)$ to $(j, k, \dens{B}_{j, k})$ for the smallest $j \in \fragmentco{\wit{A}{B}_{i, k}}{\wit{A}{B}_{i + 1, k + 1}}$ with $\dens{B}_{j, k} \neq 0$; again, such a $j$ exists due to \cref{cor:local-core-sum-preservation}.

        \begin{claim}
            Every core element $c \in \core(C)$ is the lexicographically minimal or maximal one in its pre-image $f_A^{-1}(f_A(c))$ under $f_A$ or its pre-image $f_B^{-1}(f_B(c))$ under $f_B$.
        \end{claim}
        \begin{claimproof}
            For a proof by contradiction, pick an element $c\coloneqq (i, k, \dens{C}_{i, k})\in \core(C)$ violating the claim.
            Let $(i, j_A, \dens{A}_{i, j_A}) \coloneqq a \coloneqq f_A(c)$ and $ (j_B, k, \dens{B}_{j_B, k}) \coloneqq b \coloneqq f_B(c)$.
            We make four symmetric arguments (all illustrated in \cref{fig:core-preservation-proof}), with the first one presented in more detail. 
            \begin{enumerate}
                \item Since $c$ is not the minimal core element in $f_A^{-1}(a)$, we have $f_A(c')=a$ for some core element $(i', k', \dens{C}_{i', k'})\coloneqq c'\in \core(C)$ that precedes $c$ in the lexicographical order, denoted $c' \prec c$.
                Due to $f_A(c')=a=(i, j_A, \dens{A}_{i, j_A})$, we must have $i'=i$ and $j_A < \wit{A}{B}_{i + 1, k' + 1}$. 
                Since $i'=i$, then $c'\prec c$ implies $k'+1\le k$.
                The monotonicity of the witness matrix (\cref{lm:witness-monotonicity}) thus yields $\wit{A}{B}_{i + 1, k' + 1} \le \wit{A}{B}_{i+1,k}$.
                Overall, we conclude that $j_A < \wit{A}{B}_{i+1,k}$.
                \item Since $c$ is not the maximal core element in $f_A^{-1}(a)$, we have $f_A(c')=a$ for some core element $c' \succ c$. In particular, $c'=(i, k', \dens{C}_{i, k'})\in \core(C)$ for some $k'>k$.
                Then, $\wit{A}{B}_{i, k+1} \le \wit{A}{B}_{i, k'} \le j_A$ follows from \cref{lm:witness-monotonicity} and $f_A(c')=a$, respectively.
                \item Since $c$ is not the minimal core element in $f_B^{-1}(b)$, we have $f_B(c')=b$ for some core element $c' \prec c$. In particular, $c'=(i', k, \dens{C}_{i', k})\in \core(C)$ for some $i'<i$.
                Then, $\wit{A}{B}_{i, k+1} \ge \wit{A}{B}_{i'+1, k+1} > j_B$ follows from \cref{lm:witness-monotonicity} and $f_B(c')=b$, respectively.
                \item Since $c$ is not the maximal core element in $f_B^{-1}(b)$, we have $f_B(c')=b$ for some core element $c' \succ c$. In particular, $c'=(i', k, \dens{C}_{i', k})\in \core(C)$ for some $i'>i$.
                Then, $\wit{A}{B}_{i+1, k} \le \wit{A}{B}_{i', k} \le j_B$ follows from \cref{lm:witness-monotonicity} and $f_B(c')=b$, respectively.
            \end{enumerate}
            Overall, we derive a contradiction: $j_A < \wit{A}{B}_{i + 1, k} \le j_B < \wit{A}{B}_{i, k + 1} \le j_A$.
        \end{claimproof}
        As every core element of $C$ is either the first or the last one in some pre-image under $f_A$ or $f_B$, we get that $\delta(C) \le 2 \cdot (\delta(A) + \delta(B))$.
\end{proof}

\begin{example}\label{ex:core-preservation}
    Note that the inequality for the core size from \cref{lm:core-preservation} is weaker than the inequality for the core sum from \cref{lm:core-sum-preservation}.
    We claim that this weakening is not an artifact of our proof.
    Consider the following Monge matrices
    \[
        A=
          \begin{pmatrix}
                0 & 4 & 5 & 6\\
                0 & 1 & 2 & 3\\
                0 & 1 & 2 & 0\\
                0 & 1 & 2 & 0
          \end{pmatrix},
  \qquad
        B=
          \begin{pmatrix}
                0 & 2 & 4 & 6\\
                0 & 0 & 2 & 4\\
                0 & 0 & 0 & 2\\
                0 & 0 & 0 & 0
          \end{pmatrix},
    \qquad
        C=
          \begin{pmatrix}
                0 & 2 & 4 & 6\\
                0 & 1 & 2 & 3\\
                0 & 0 & 0 & 0\\
                0 & 0 & 0 & 0
          \end{pmatrix}.
    \]

    We have $C = A \minplus B$, $\delta(A) = 2$, $\delta(B) = 3$, and $6 = \delta(C) > \delta(A) + \delta(B) > \min\{\delta(A), \delta(B)\}$.
\end{example}

\begin{remark}
    To the best of our knowledge, \cref{lm:core-preservation} shows the first bound on the core size of $A \minplus B$ in terms of the core sizes of $A$ and $B$.
    Previously, such a bound was only known for unit-Monge matrices \cite{Tiskin10}.
    In particular, our bound allows simplifying some time complexities of already existing algorithms.
    For example, \cite[Lemma 18]{Russo10} shows how to compute the product of two explicitly given Monge matrices $A$ and $B$ in time $\Oh(p \log q + q \log r + \delta(A \minplus B) \log q \log r + (\delta(A) + \delta(B)) \log^2 q)$. 
    Assuming that $r \in \poly(q)$, \cref{lm:core-preservation} tells us that we can drop the third summand in the time complexity.
\end{remark}

\section{Core-Sparse Monge Matrix Min-Plus Multiplication Algorithm}\label{sec:core-based-multiplication-algorithm}

In this section, we present an algorithm that, given the condensed representations of two Monge matrices, computes the condensed representation of their $(\min, +)$-product.
Even though the condensed representation itself does not provide efficient random access to the values of the matrix, the following fact justifies our choice of the representation of Monge matrices.

\begin{fact}[{\cite[Lemma 4.4]{GK24}}] \label{lm:cmo}
    There exists an algorithm that, given the condensed representation of a matrix $A$ of size $p \times q$, in time $\Oh(p + q + \delta(A) \log (1 + \delta(A)))$ builds a \emph{core-based matrix oracle data structure} $\mds(A)$ that provides $\Oh(\log (2 + \delta(A)))$-time random access to the entries of~$A$.

\end{fact}

\begin{proof}[Proof Sketch]
    By \cref{lm:calculate-value-from-core} applied for $a=c=0$, it suffices to answer orthogonal range sum queries on top of $\core(A)$.
    According to \cite[Theorem 3]{Wil85}, it can be done in $\Oh(\delta(A) \log (1 + \delta(A)))$ preprocessing time and $\Oh(\log (2 + \delta(A)))$ query time.
\end{proof}

\begin{definition}
    Let $A$ be a $p \times q$ matrix.
    For every $i \in \fragmentco{0}{p-1}$, denote $\corei{i}(A) = \{(i, j, \dens{A}_{i, j}) \mid j \in \fragmentco{0}{q-1}, \dens{A}_{i, j} \neq 0 \}$ and $\deltai{i}(A) = |\corei{i}(A)|$.
    Analogously, for every $j \in \fragmentco{0}{q-1}$, denote $\corej{j}(A) = \{(i, j, \dens{A}_{i, j}) \mid i \in \fragmentco{0}{p-1}, \dens{A}_{i, j} \neq 0 \}$ and $\deltaj{j}(A) = |\corej{j}(A)|$.
\end{definition}

\begin{observation}
    $\delta(A) = \sum_{i \in \fragmentco{0}{p-1}} \deltai{i}(A) = \sum_{j \in \fragmentco{0}{q-1}} \deltaj{j}(A)$.
\end{observation}

We now design another matrix oracle as an alternative to $\mds(A)$ for the situation in which one wants to query an entry of $A$ that is adjacent to some other already known entry.

\begin{lemma}\label{lm:local-core-oracle}
    There is an algorithm that, given the condensed representation of a $p \times q$ matrix $A$, in time $\Oh(p + q + \delta(A))$ builds a \emph{local core oracle data structure} $\lco(A)$ with the following interface.
    \begin{description}
        \item[Boundary access:] given indices $i \in \fragmentco{0}{p}$ and $j \in \fragmentco{0}{q}$ such that $i=0$ or $j=0$, in time $\Oh(1)$ returns $A_{i, j}$.
        \item[Vertically adjacent recomputation:] given indices $i \in \fragmentco{0}{p-1}$ and $j \in \fragmentco{0}{q}$, in time $\Oh(\deltai{i}(A) + 1)$ returns $A_{i+1,j}-A_{i, j}$.
        \item[Horizontally adjacent recomputation:] given indices $i \in \fragmentco{0}{p}$ and $j \in \fragmentco{0}{q-1}$, in time \mbox{$\Oh(\deltaj{j}(A) + 1)$} returns $A_{i,j+1}-A_{i, j}$.
    \end{description}
\end{lemma}

\begin{proof}
    The local core oracle data structure of $A$ stores all the values of the topmost row and the leftmost column of $A$, as well as two collections of lists: $\corei{i}(A)$ for all $i \in \fragmentco{0}{p-1}$ and $\corej{j}(A)$ for all $j \in \fragmentco{0}{q-1}$.
    The values of the topmost row and the leftmost column of $A$ are already given.
    The lists $\corei{i}(A)$ and $\corej{j}(A)$ can be computed in $\Oh(p + q + \delta(A))$ time from $\core(A)$.
    Hence, we can build $\lco(A)$ in time $\Oh(p + q + \delta(A))$.
    
    Boundary access can be implemented trivially.
    We now show how to implement the vertically adjacent recomputation.
    Suppose that we are given $i \in \fragmentco{0}{p-1}$ and $j \in \fragmentco{0}{q}$.
    Due to \cref{lm:calculate-value-from-core}, we have $A_{i, j}+ A_{i + 1, 0} = A_{i + 1, j} + A_{i, 0} + \deltasigma(A\fragmentcc{i}{i + 1}\fragmentcc{0}{j})$.
    Note that the values $A_{i, 0}$ and $A_{i + 1, 0}$ can be computed in constant time using boundary access queries, and $\deltasigma(A\fragmentcc{i}{i + 1}\fragmentcc{0}{j})$ can be computed from $\corei{i}(A)$ in time $\Oh(\deltai{i}(A) + 1)$.
    Hence, $A_{i+1, j}-A_{i,j}$ can be computed in time $\Oh(\deltai{i}(A) + 1)$.

    The horizontally adjacent recomputation is implemented analogously so that $A_{i, j+1}-A_{i,j}$ can be computed in time $\Oh(\deltaj{j}(A) + 1)$.
\end{proof}

Note that vertically and horizontally adjacent recomputations allow computing the neighbors of any given entry $A_{i,j}$.

\begin{lemma}\label{lm:get-submatrix}
    Given the condensed representation of a $p \times q$ matrix $A$, the condensed representation of any contiguous submatrix $A'$ of $A$ can be computed in time $\Oh(p + q + \delta(A))$.
\end{lemma}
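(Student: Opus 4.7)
The plan is to produce the three parts of the condensed representation of $A' \coloneqq A\fragmentco{a}{b}\fragmentco{c}{d}$ separately: the core, the topmost row, and the leftmost column. Computing the core is purely combinatorial: expanding the definitions yields $\dens{A'}_{i,j} = \dens{A}_{a+i,\,c+j}$ for all valid $i,j$, so $\core(A')$ is obtained by filtering $\core(A)$ down to the triples whose row index lies in $\fragmentco{a}{b-1}$ and whose column index lies in $\fragmentco{c}{d-1}$, and then shifting the indices by $(-a,-c)$. A single linear scan of $\core(A)$ accomplishes this in $\Oh(\delta(A))$ time.

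For the boundary entries I would first build the local core oracle $\lco(A)$ from \cref{lm:local-core-oracle} in $\Oh(p+q+\delta(A))$ time. To obtain the anchor value $A_{a,c}$, I would apply \cref{lm:calculate-value-from-core} to the rectangle $\fragmentcc{0}{a}\fragmentcc{0}{c}$: the three corner values $A_{0,0}$, $A_{a,0}$, and $A_{0,c}$ are already present in the condensed representation of $A$, and the remaining quantity $\deltasigma(A\fragmentcc{0}{a}\fragmentcc{0}{c})$ reduces to summing the weights of those $\core(A)$ entries that lie in $\fragmentco{0}{a}\times\fragmentco{0}{c}$, which takes another $\Oh(\delta(A))$-time scan.

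Starting from $A_{a,c}$, I would then walk rightwards along the topmost row by $d-c-1$ horizontally adjacent recomputations through $\lco(A)$ to assemble $A_{a,c+1},\ldots,A_{a,d-1}$, and symmetrically walk downwards along the leftmost column by $b-a-1$ vertically adjacent recomputations to assemble $A_{a+1,c},\ldots,A_{b-1,c}$. By the cost guarantees of \cref{lm:local-core-oracle}, these two propagations cost at most $\Oh\bigl(\sum_{j=c}^{d-2}(\deltaj{j}(A)+1)+\sum_{i=a}^{b-2}(\deltai{i}(A)+1)\bigr)=\Oh(p+q+\delta(A))$. Summing the three phases yields the claimed running time.

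There is no substantial obstacle: the identity $\dens{A'}_{i,j}=\dens{A}_{a+i,c+j}$ reduces core extraction to a simple filter, and the adjacent-recomputation interface of $\lco(A)$ is tailored precisely to propagating a known matrix entry to its neighbours while amortising the cost against the relevant row or column of $\core(A)$. The only care needed is to compute $A_{a,c}$ exactly once, via \cref{lm:calculate-value-from-core}, rather than reinvoking that lemma for every topmost-row or leftmost-column entry of $A'$; the incremental walks through $\lco(A)$ then do the rest.
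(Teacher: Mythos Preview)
Your proof is correct and follows essentially the same approach as the paper: filter $\core(A)$ to obtain $\core(A')$, build $\lco(A)$, and propagate along the topmost row and leftmost column of $A'$ via adjacent recomputations. The only minor difference is how the starting value is obtained: the paper begins at the boundary entry $A_{a,0}$ (available in $\Oh(1)$ time via the boundary-access interface of $\lco(A)$) and walks horizontally across the \emph{entire} row $a$ to reach $A_{a,c},\ldots,A_{a,d-1}$, whereas you first compute the anchor $A_{a,c}$ directly from \cref{lm:calculate-value-from-core} and then walk only within $\fragmentco{c}{d}$. Both variants cost $\Oh(p+q+\delta(A))$; the paper's avoids the separate $\Oh(\delta(A))$ core-sum scan at the price of a longer (but still linear) walk.
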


\begin{proof}
    Say, $A' = A\fragmentcc{i}{i'}\fragmentcc{j}{j'}$.
    Note that $\core(A')$ can be obtained in time $\Oh(\delta(A))$ by filtering out all elements of $\core(A)$ that lie outside of $\fragmentco{i}{i'}\times \fragmentco{j}{j'}$.
    It remains to obtain the topmost row and the leftmost column of $A'$.
    In time $\Oh(p + q + \delta(A))$ we build $\lco(A)$ of \cref{lm:local-core-oracle}.
    By starting from $A_{i, 0}$ and repeatedly applying the horizontally adjacent recomputation, we can compute $A\fragmentcc{i}{i}\fragmentco{0}{q}$ (and thus $A\fragmentcc{i}{i}\fragmentcc{j}{j'}$) in time $\Oh(\sum_{j \in \fragmentco{0}{q}} (\deltaj{j}(A) + 1)) = \Oh(q + \delta(A))$.
    The leftmost column of $A'$ can be computed in time $\Oh(p + \delta(A))$ analogously.
\end{proof}

We now show a helper lemma that compresses ``ultra-sparse'' Monge matrices to limit their dimensions by the size of their core.

\begin{restatable}[Matrix compression]{lemma}{lmcompression} \label{lm:compression}
    \SetKwFunction{compress}{Compress}
    \SetKwFunction{decompress}{Decompress}
    There are two algorithms \compress and \decompress with the following properties:
    The \compress algorithm, given the condensed representations of a $p^* \times q^*$ Monge matrix $A^*$ and a $q^* \times r^*$ Monge matrix $B^*$, in time $\Oh(p^* + q^* + r^* + \delta(A^*) + \delta(B^*))$, builds a $p \times q$ Monge matrix $A$ and a $q \times r$ Monge matrix $B$ such that $p \le \delta(A^*) + 1$, $q \le \delta(A^*) + \delta(B^*)+1$, $r \le \delta(B^*) + 1$, $\delta(A) = \delta(A^*)$, and $\delta(B) = \delta(B^*)$.
    The \decompress algorithm, given the condensed representations of $A^*$, $B^*$, and $A \minplus B$, where $(A, B) = \compress(A^*, B^*)$, computes the condensed representation of $A^* \minplus B^*$ in time $\Oh(p^* + r^* + \delta(A^*) + \delta(B^*))$.
\end{restatable}

\begin{proof}[Proof Sketch]
    The \compress algorithm deletes all rows of $A^*$ that do not contain any core elements and all columns of $B^*$ that do not contain any core elements.
    This way, $p \le \delta(A^*) + 1$ and $r \le \delta(B^*) + 1$ are guaranteed.
    Furthermore, all core elements of these two matrices are preserved.
    If there are no core elements between rows $i$ and $i+1$ of $A^*$, the corresponding entries of these two rows differ by some constant $c$, and thus all the entries of the row $i+1$ of $A^* \otimes B^*$ can be obtained from the corresponding entries of row $i$ of $A^* \otimes B^*$ by adding~$c$.
    Since we can recover $c$ from $A^*$, no information about the answer is ``lost'' when deleting such rows.
    An analogous property holds for the removed columns of $B^*$.
    The \decompress algorithm reverses the row and column removals performed by the \compress algorithm.

    The \compress algorithm also reduces the number $q^*$ of columns of $A^*$ and rows of $B^*$ to $q \le \delta(A^*) + \delta(B^*) + 1$.
    Observe that if there are no core elements between columns $j$ and $j + 1$ of $A^*$ and between rows $j$ and $j + 1$ of $B^*$, then the values $A^*_{i,j}+B^*_{j,k}$ and $A^*_{i,j+1}+B^*_{j+1,k}$ differ by a constant independent of $i$ and $k$.
    Depending on the sign of this constant difference, we can delete either the $j$-th column of $A^*$ and the $j$-th row of $B^*$ or the $(j+1)$-th column of $A^*$ and the $(j+1)$-th row of $B^*$ so that we do not change the min-plus product of these matrices.
    By repeating this process for all indices $j$ that do not ``contribute'' to the cores of $A^*$ and $B^*$, we get $q \le \delta(A^*) + \delta(B^*) + 1$.
    The formal proof is deferred to \cref{sec:appendix-lmcompression}.
\end{proof}

Finally, we show our main Monge matrix multiplication algorithm.

\lmcorebasedmultiplicationalgorithm*

\begin{proof}[Proof Sketch]
    \SetKwFunction{recmult}{Multiply}

    We design a recursive divide-and-conquer procedure $\recmult(A^*, B^*)$ (see \cref{alg:core-based-multiplication-algorithm} for the details) and solve the problem by initially running $\recmult(M_1, M_2)$.
    Given the matrices $A^*$ and $B^*$, we first apply the \compress algorithm of \cref{lm:compression} to compress $A^*$ and $B^*$ into a $p \times q$ matrix $A$ and a $q \times r$ matrix $B$ respectively.
    We then compute the condensed representation of $C \coloneqq A \minplus B$ and finally use the \decompress algorithm of \cref{lm:compression} to decompress $C$ into $A^* \minplus B^*$.

    If $p = r = 1$, we compute the matrix $C$ trivially.
    Otherwise, we pick a splitting point $m \in \fragmentoo{0}{q}$, split the matrix $A$ vertically into the matrix $A^L$ of size $p \times m$ and the matrix $A^R$ of size $p \times (q - m)$, and split the matrix $B$ horizontally into the matrix $B^L$ of size $m \times r$ and the matrix $B^R$ of size $(q - m) \times r$.
    We pick $m$ in such a way that it splits the cores of $A$ and $B$ almost equally across $A^L$ and $B^L$ and $A^R$ and $B^R$, that is, $\delta(A^L) + \delta(B^L)$ and $\delta(A^R) + \delta(B^R)$ are at most $(\delta(A) + \delta(B)) / 2$.
    We recursively compute the condensed representations of the matrices $C^L \coloneqq A^L \minplus B^L$ and $C^R \coloneqq A^R \minplus B^R$.
    The resulting matrix $C$ can be obtained as the element-wise minimum of $C^L$ and $C^R$.
    Furthermore, one can see that, due to \cref{lm:witness-monotonicity}, in some top-left region of~$C$, the values are equal to the corresponding values of $C^L$, and in the remaining bottom-right region of $C$, the values are equal to the corresponding values of $C^R$; see \cref{fig:monge-multuplication-conquer-step}.
    We find the boundary between these two regions by starting in the bottom-left corner of $C$ and traversing it towards the top-right corner along this boundary.
    We use $\lco(C^L)$ and $\lco(C^R)$ to sequentially compute the subsequent entries along the boundary.
    Having determined the boundary, we construct the core of $C$ by picking the core elements of $C^L$ located to the top-left of this boundary, picking the core elements of $C^R$ located to the bottom-right of this boundary, and computing the $\dens{C}$ values on the boundary from scratch.
    The values in the topmost row and the leftmost column of $C$ can be trivially obtained as element-wise minima of the corresponding values in $C^L$ and $C^R$.
    This concludes the recursive procedure; see \cref{alg:core-based-multiplication-algorithm} for the pseudocode.
    This algorithm follows the classical divide-and-conquer framework, and thus its time complexity can be easily derived from \cref{lm:core-preservation}.
    The full description of the algorithm, the proof of its correctness, and the formal analysis of its time complexity are deferred to \cref{sec:appendix-lmcorebasedmultiplicationalgorithm}.
\end{proof}
\SetCommentSty{mycommentfont}
\newcommand{\mycommentfont}[1]{\textcolor{gray}{#1}}
\begin{algorithm}
    \caption{The algorithm from \cref{lm:core-based-multiplication-algorithm}. Given the condensed representations of the Monge matrices $A^*$ and $B^*$, the algorithm returns the condensed representation of $A^* \minplus B^*$.} \label{alg:core-based-multiplication-algorithm}
\recmult{$A^*, B^*$} \Begin{
    $A, B \gets \compress(A^*, B^*)$\;
    $p \gets$ height of $A$; $q \gets$ width of $A$; $r \gets$ width of $B$\;
    \If{$p = 1$ \KwSty{and} $r = 1$}{
        $C \gets 1 \times 1$ matrix with the entry equal to $\min\{A_{0, j} + B_{j, 0} \mid j \in \fragmentco{0}{q}\}$\;
        \Return{$\decompress(A^*, B^*, C)$}\;
    }
    Sort $\core(A)$ by columns and $\core(B)$ by rows\tcp*{needed to find $m$}
    Pick the largest $m \in \fragmentoo{0}{q}$ with $\delta(A\fragmentco{0}{p}\fragmentco{0}{m}) + \delta(B\fragmentco{0}{m}\fragmentco{0}{r}) \le (\delta(A) + \delta(B)) / 2$\;
    Build the condensed representations of $A\fragmentco{0}{p}\fragmentco{0}{m}$ and $B\fragmentco{0}{m}\fragmentco{0}{r}$\;
    $C^L \gets \recmult(A\fragmentco{0}{p}\fragmentco{0}{m}, B\fragmentco{0}{m}\fragmentco{0}{r})$; Build $\lco(C^L)$ using \cref{lm:local-core-oracle}\;
    Build the condensed representations of $A\fragmentco{0}{p}\fragmentco{m}{q}$ and $B\fragmentco{m}{q}\fragmentco{0}{r}$\;
    $C^R \gets \recmult(A\fragmentco{0}{p}\fragmentco{m}{q}, B\fragmentco{m}{q}\fragmentco{0}{r})$; Build $\lco(C^R)$ using \cref{lm:local-core-oracle}\;
    $lst \gets$ list of length $p$\tcp*{vertical edges of the ladder from \cref{fig:monge-multuplication-conquer-step}}
    $i \gets p - 1, j \gets -1$\tcp*{current matrix coordinates while traversing the ladder}
    Maintain the well-defined values in $C^L\fragmentcc{i}{i+1}\fragmentcc{j}{j+1}$ and $C^R\fragmentcc{i}{i+1}\fragmentcc{j}{j+1}$ using $\lco(C^L)$ and $\lco(C^R)$, and the values in $C\fragmentcc{i}{i+1}\fragmentcc{j}{j+1}$ as element-wise minima\;
    $Core \gets \{\}$\;
    \While{$i \ge 0$}{
        \If{$i < p - 1$ \KwSty{and} $0 \le j < r - 1$ \KwSty{and} $C_{i, j + 1} + C_{i + 1, j} \neq C_{i, j} + C_{i + 1, j + 1}$}{
            $Core \gets Core \cup \{(i, j, C_{i, j + 1} + C_{i + 1, j} - C_{i, j} - C_{i + 1, j + 1})\}$\;
        }
        \If{$j = r - 1$ \KwSty{or} $C^L_{i, j + 1} > C^R_{i, j + 1}$}{
            $lst_i \gets j$\;
            $i \gets i - 1$\;
        }\Else{
            $j \gets j + 1$\;
        }
    }
    $Core \gets Core \cup \{(i, j, C^{L\square}_{i, j}) \in \core(C^L) \mid j < lst_{i + 1}\} \cup \{(i, j, C^{R\square}_{i, j}) \in \core(C^R) \mid j > lst_i\}$\;
    Compute the topmost row and the leftmost column of $C$ as the element-wise minima of the topmost rows and the leftmost columns of $C^L$ and $C^R$\;
    Condensed representation of $C \gets (C\fragmentcc{0}{0}\fragmentco{0}{r}, C\fragmentco{0}{p}\fragmentcc{0}{0}, Core)$\;
    \Return{$\decompress(A^*, B^*, C)$}\;
}
\end{algorithm}

    \begin{figure}
        \begin{center}
            \begin{tikzpicture}
                \input{monge-multiplication-conquer-step.tex}

                \node[darkred,below] at (11.5, 2) {\Large $C^R$};
                \node[darkgreen,below] at (1.5, 2) {\Large $C^L$};
                \node[below] at (6.5, 2) {\Large $C$};

                \draw[-latex, thick] (3.05, 3.5) -- (4.95, 3.5);
                \draw[-latex, thick] (9.95, 3.5) -- (8.05, 3.5);

            \end{tikzpicture}
        \end{center}
    
        \caption{An example of how $C$ is obtained from $C^L$ and $C^R$. The blue ladder represents the border between the values that are inherited from $C^L$ and the values that are inherited from $C^R$. Ticks correspond to the core elements: the green ticks are inherited from $C^L$, the red ticks are inherited from $C^R$, and the blue ticks represent the core element computed from scratch.}
        \label{fig:monge-multuplication-conquer-step}
    \end{figure}

If the $(\min, +)$-product of matrices represents the lengths of the shortest paths in a graph, the corresponding witness matrix allows computing the underlying shortest paths themselves.
For that, we extend the presented algorithm to allow witness reconstruction.

\begin{restatable}{theorem}{lmproductwitnessreconstruction} \label{lm:product-witness-reconstruction}
    The algorithm of \cref{lm:core-based-multiplication-algorithm} can be extended so that, within the same time complexity, it also builds a data structure that takes $\Oh(n_1 + n_2 + n_3 + \delta(M_1) + \delta(M_2))$ space and provides $\Oh(\log (2+\delta(M_1) + \delta(M_2)))$-time oracle access to $\wit{M_1}{M_2}$.
\end{restatable}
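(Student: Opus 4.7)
The plan is to augment the recursive divide-and-conquer procedure of \cref{lm:core-based-multiplication-algorithm} so that every recursion node additionally records the data necessary to reconstruct a smallest witness, and then to assemble those pieces into a data structure meeting the claimed size and query-time bounds.

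\textbf{Augmenting the recursion.} The merge step of \cref{lm:core-based-multiplication-algorithm} already traverses a monotone staircase in $C$ separating the top-left region (where $C_{i,k}=C^L_{i,k}$) from the bottom-right region (where $C_{i,k}=C^R_{i,k}$); see \cref{fig:monge-multuplication-conquer-step}. I would store this staircase together with the splitting point $m$ and the compression maps relating $A^*, B^*$ to their compressed versions $A, B$. The resulting witness of $C = A \minplus B$ then admits a clean recursive description: if $(i,k)$ lies in the top-left region (including on the staircase), then $\wit{A}{B}_{i,k} = \wit{A^L}{B^L}_{i,k} \in \fragmentco{0}{m}$; if it lies strictly in the bottom-right region, then $\wit{A}{B}_{i,k} = m + \wit{A^R}{B^R}_{i,k}$. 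Resolving ties on the staircase in favor of the left child enforces the smallest-witness convention. At leaves, where the compressed dimensions $p$ and $r$ both equal $1$, the smallest witness is computed explicitly by scanning the middle dimension of the leaf.

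\textbf{Routing through compression.} The compression of \cref{lm:compression} only removes rows of $A^*$ that sit across zero rows of $\dens{A^*}$, and hence any two such merged rows of $A^*$ differ by a column-independent constant; therefore they share the same smallest witness against any column of $B^*$. The symmetric statement holds for columns of $B^*$. A query $(i^*, k^*)$ on $A^* \minplus B^*$ thus reduces in $\Oh(1)$ time to a query $(i,k)$ on the compressed $A \minplus B$ by lookup in arrays built during the compression step. The query algorithm then descends the stored recursion tree top-down: at each node it compresses the current $(i^*, k^*)$ to $(i,k)$, decides whether $(i,k)$ lies above or below the stored staircase, and recurses into the corresponding child, accumulating the split point $m$ into a running offset whenever it descends to the right. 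The final offset, added to the witness stored at the reached leaf, is the answer.

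\textbf{Complexity and the main obstacle.} The recursion depth is $\Oh(\log(2 + \delta(M_1) + \delta(M_2)))$. A naive per-level binary search on the staircase would give $\Oh(\log^2 \delta)$ query time; to reach the claimed $\Oh(\log \delta)$ bound, the plan is to apply fractional cascading along the chain of staircases on every root-to-leaf path, so that the effective per-level cost drops to $\Oh(1)$ after a single initial binary search. For space, the staircase at each node is bounded by the local compressed dimensions, which by \cref{lm:compression} are bounded by the local core sizes; summing these with the balanced-split analysis of \cref{lm:core-based-multiplication-algorithm} together with \cref{lm:core-preservation}, plus the $\Oh(n_1 + n_2 + n_3)$ cost for the compression lookup arrays at the top of the recursion, yields the claimed $\Oh(n_1 + n_2 + n_3 + \delta(M_1) + \delta(M_2))$ bound. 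The principal obstacle is meeting both bounds simultaneously: a direct implementation pays an extra $\log \delta$ factor in both space (for storing all staircases individually) and query time (for unstructured binary searches), so a careful amortization charging each staircase corner either to a core element of the corresponding child product or to a boundary row/column, combined with fractional cascading (or an equivalent point-location structure layered on the union of staircases), is needed.
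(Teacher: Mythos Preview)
Your high-level plan---store the staircase $lst$, the split point $m$, and the compression maps at every recursion node, then descend the tree mapping $(i^*,k^*)\mapsto(i,k)$, test $k\le lst_i$, and recurse left or right---is exactly what the paper does, and your treatment of leaves and of witness-preservation under \cref{lm:compression} is correct.

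Where you diverge is in resolving the ``extra $\log\delta$'' obstacle. You propose fractional cascading for the query and a charging argument for the space. The paper does something simpler on both fronts. Observe that $lst$ and the two compression maps are monotone integer sequences of length $\Oh(p^*+q+r^*)$ with values in $\fragmentcc{-1}{p^*+q+r^*}$; hence each can be encoded as a bitmask of $\Oh(p^*+q+r^*)$ bits equipped with a rank/select structure. This already gives $\Oh(1)$-time access to $lst_i$ and to the compression lookups---no binary search is ever performed, so fractional cascading is unnecessary. For space, the bitmasks at one node occupy $\Oh(1+(p^*+q+r^*)/w)$ machine words; summing $p^*_i+q_i+r^*_i$ over all nodes is $\Oh(n_1+n_3+(n_2+\delta)\log\delta)$ by the analysis of \cref{lm:core-based-multiplication-algorithm}, and dividing by $w$ with $\log\delta=\Oh(w)$ collapses this to the linear bound.

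Your proposed charging of ``staircase corners to core elements of the child product'' is not obviously sound: the number of steps in a single staircase is $p+r=\Oh(\delta(A)+\delta(B))$, and across $\Theta(\log\delta)$ levels these sums telescope to $\Theta(\delta\log\delta)$, not $\Theta(\delta)$. Restricting to genuine corners (positions where $lst_i>lst_{i+1}$) might help, but you have not argued a global bound on their count, and even then you would still need binary search for lookup, reintroducing the per-level $\log$ you wanted to avoid. The paper's bit-packing trick sidesteps all of this.
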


\begin{proof}[Proof Sketch]
    We slightly modify the algorithm of \cref{lm:core-based-multiplication-algorithm}.
    For the leaf recursive calls with $p = r = 1$, we explicitly store the minimal witness of the only entry of $C$.
    In the non-leaf recursive calls, we store the correspondence between the indices in the compressed matrices $A, B, C$ and the decompressed matrices $A^*, B^*, C^*$, as well as the border separating the values of $C^L$ and $C^R$ in $C$ (the blue curve of \cref{fig:monge-multuplication-conquer-step}). 
    Naively, this data takes space proportional to the time complexity of \cref{lm:core-based-multiplication-algorithm}.
    Nevertheless, using bit-masks equipped with rank/select functionality, the total space complexity can be brought down to be linear in terms of the size of the input.
    To answer a query, we descend the recursion tree.
    In constant time we can find an entry of $C$ corresponding to the queried entry of $C^*$ and decide on which side of the border separating the values of $C^L$ and $C^R$ the queried entry of $C$ is.
    After that, we recurse into either $A^L \otimes B^L$ or $B^L \otimes B^R$.
    In the terminal recursion call with $p = r = 1$ we return the minimal witness that is stored explicitly.
    The time complexity of the algorithm is proportional to the depth of recursion.
    See \cref{sec:appendix-lmproductwitnessreconstruction} for a formal description of the algorithm.
\end{proof}

\section{Range LIS Queries: Sketch}\label{sec:range-lis}
One of the original applications of Tiskin's procedure for simple unit-Monge matrix multiplication~\cite{Tis07,Tiskin10} is an algorithm that preprocesses a given sequence $(s\position{i})_{i \in \fragmentco{0}{n}}$ of integers in $\Oh(n \log^2 n)$ time so that \emph{Range Longest Increasing Subsequence} (Range LIS) queries can be answered in $\Oh(\log n)$ time.
We show an alternative way of obtaining the same result using \cref{lm:core-based-multiplication-algorithm} and avoiding the seaweed braid formalism.
The extension of \cref{lm:product-witness-reconstruction} allows recovering the underlying longest increasing subsequence of length $\ell$ in $\Oh(\ell \log^2 n)$ time. 
Further $\Oh(n \log^3 n)$-time preprocessing allows for $\Oh(\ell)$-time recovery, which improves upon the result of \cite{KR24}.

Without loss of generality we assume that $(s\position{i})_{i \in \fragmentco{0}{n}}$ is a permutation of $\fragmentco{0}{n}$.\footnote{In what follows, we reserve the word ``permutation'' for permutations of $\fragmentco{0}{m}$ for some $m\in \mathbb{Z}_+$.}
We use a popular tool for string similarity problems and interpret range LIS queries as computing the longest path between some pair of vertices of a corresponding \emph{alignment graph} $G^s$; see \cref{fig:alignment-graph-simple}.
The crucial property of $G^s$ is that it is planar, and thus due to \cite[Section 2.3]{FR06}, the matrix $M^s$ of longest distances from the vertices in the bottom row to the vertices of the top row of this graph is anti-Monge.\footnote{Note that, in reality, some entries of this matrix are infinite. In the formal description of the algorithm, we augment the alignment graph so that the finite entries of this matrix are preserved and the infinite entries become finite.}
Thus, the problem of constructing a data structure for range LIS queries is reduced to the problem of computing the condensed representation of~$M^s$.

\begin{figure}
    \begin{center}
        \begin{tikzpicture}[scale=0.75]
            \node[inner sep = 1pt,circle,fill=black] (n0_0) at (0,0) {};
\node[inner sep = 1pt,circle,fill=black] (n0_1) at (0,1) {};
\draw[-latex,darkred] (n0_0) -- (n0_1);
\node[inner sep = 1pt,circle,fill=black] (n0_2) at (0,2) {};
\draw[-latex,darkred] (n0_1) -- (n0_2);
\node[inner sep = 1pt,circle,fill=black] (n0_3) at (0,3) {};
\draw[-latex,darkred] (n0_2) -- (n0_3);
\node[inner sep = 1pt,circle,fill=black] (n0_4) at (0,4) {};
\draw[-latex,darkred] (n0_3) -- (n0_4);
\node[inner sep = 1pt,circle,fill=black] (n0_5) at (0,5) {};
\draw[-latex,darkred] (n0_4) -- (n0_5);
\node[inner sep = 1pt,circle,fill=black] (n1_0) at (1,0) {};
\draw[-latex,darkred] (n0_0) -- (n1_0);
\node[inner sep = 1pt,circle,fill=black] (n1_1) at (1,1) {};
\draw[-latex,darkred] (n0_1) -- (n1_1);
\draw[-latex,darkred] (n1_0) -- (n1_1);
\node[inner sep = 1pt,circle,fill=black] (n1_2) at (1,2) {};
\draw[-latex,darkred] (n0_2) -- (n1_2);
\draw[-latex,darkred] (n1_1) -- (n1_2);
\draw[-latex,very thick,darkgreen] (n0_1) -- (n1_2);
\node[inner sep = 1pt,circle,fill=black] (n1_3) at (1,3) {};
\draw[-latex,darkred] (n0_3) -- (n1_3);
\draw[-latex,darkred] (n1_2) -- (n1_3);
\node[inner sep = 1pt,circle,fill=black] (n1_4) at (1,4) {};
\draw[-latex,darkred] (n0_4) -- (n1_4);
\draw[-latex,darkred] (n1_3) -- (n1_4);
\node[inner sep = 1pt,circle,fill=black] (n1_5) at (1,5) {};
\draw[-latex,darkred] (n0_5) -- (n1_5);
\draw[-latex,darkred] (n1_4) -- (n1_5);
\node[inner sep = 1pt,circle,fill=black] (n2_0) at (2,0) {};
\draw[-latex,darkred] (n1_0) -- (n2_0);
\node[inner sep = 1pt,circle,fill=black] (n2_1) at (2,1) {};
\draw[-latex,darkred] (n1_1) -- (n2_1);
\draw[-latex,darkred] (n2_0) -- (n2_1);
\draw[-latex,very thick,darkgreen] (n1_0) -- (n2_1);
\node[inner sep = 1pt,circle,fill=black] (n2_2) at (2,2) {};
\draw[-latex,darkred] (n1_2) -- (n2_2);
\draw[-latex,darkred] (n2_1) -- (n2_2);
\node[inner sep = 1pt,circle,fill=black] (n2_3) at (2,3) {};
\draw[-latex,darkred] (n1_3) -- (n2_3);
\draw[-latex,darkred] (n2_2) -- (n2_3);
\node[inner sep = 1pt,circle,fill=black] (n2_4) at (2,4) {};
\draw[-latex,darkred] (n1_4) -- (n2_4);
\draw[-latex,darkred] (n2_3) -- (n2_4);
\node[inner sep = 1pt,circle,fill=black] (n2_5) at (2,5) {};
\draw[-latex,darkred] (n1_5) -- (n2_5);
\draw[-latex,darkred] (n2_4) -- (n2_5);
\node[inner sep = 1pt,circle,fill=black] (n3_0) at (3,0) {};
\draw[-latex,darkred] (n2_0) -- (n3_0);
\node[inner sep = 1pt,circle,fill=black] (n3_1) at (3,1) {};
\draw[-latex,darkred] (n2_1) -- (n3_1);
\draw[-latex,darkred] (n3_0) -- (n3_1);
\node[inner sep = 1pt,circle,fill=black] (n3_2) at (3,2) {};
\draw[-latex,darkred] (n2_2) -- (n3_2);
\draw[-latex,darkred] (n3_1) -- (n3_2);
\node[inner sep = 1pt,circle,fill=black] (n3_3) at (3,3) {};
\draw[-latex,darkred] (n2_3) -- (n3_3);
\draw[-latex,darkred] (n3_2) -- (n3_3);
\node[inner sep = 1pt,circle,fill=black] (n3_4) at (3,4) {};
\draw[-latex,darkred] (n2_4) -- (n3_4);
\draw[-latex,darkred] (n3_3) -- (n3_4);
\draw[-latex,very thick,darkgreen] (n2_3) -- (n3_4);
\node[inner sep = 1pt,circle,fill=black] (n3_5) at (3,5) {};
\draw[-latex,darkred] (n2_5) -- (n3_5);
\draw[-latex,darkred] (n3_4) -- (n3_5);
\node[inner sep = 1pt,circle,fill=black] (n4_0) at (4,0) {};
\draw[-latex,darkred] (n3_0) -- (n4_0);
\node[inner sep = 1pt,circle,fill=black] (n4_1) at (4,1) {};
\draw[-latex,darkred] (n3_1) -- (n4_1);
\draw[-latex,darkred] (n4_0) -- (n4_1);
\node[inner sep = 1pt,circle,fill=black] (n4_2) at (4,2) {};
\draw[-latex,darkred] (n3_2) -- (n4_2);
\draw[-latex,darkred] (n4_1) -- (n4_2);
\node[inner sep = 1pt,circle,fill=black] (n4_3) at (4,3) {};
\draw[-latex,darkred] (n3_3) -- (n4_3);
\draw[-latex,darkred] (n4_2) -- (n4_3);
\draw[-latex,very thick,darkgreen] (n3_2) -- (n4_3);
\node[inner sep = 1pt,circle,fill=black] (n4_4) at (4,4) {};
\draw[-latex,darkred] (n3_4) -- (n4_4);
\draw[-latex,darkred] (n4_3) -- (n4_4);
\node[inner sep = 1pt,circle,fill=black] (n4_5) at (4,5) {};
\draw[-latex,darkred] (n3_5) -- (n4_5);
\draw[-latex,darkred] (n4_4) -- (n4_5);
\node[inner sep = 1pt,circle,fill=black] (n5_0) at (5,0) {};
\draw[-latex,darkred] (n4_0) -- (n5_0);
\node[inner sep = 1pt,circle,fill=black] (n5_1) at (5,1) {};
\draw[-latex,darkred] (n4_1) -- (n5_1);
\draw[-latex,darkred] (n5_0) -- (n5_1);
\node[inner sep = 1pt,circle,fill=black] (n5_2) at (5,2) {};
\draw[-latex,darkred] (n4_2) -- (n5_2);
\draw[-latex,darkred] (n5_1) -- (n5_2);
\node[inner sep = 1pt,circle,fill=black] (n5_3) at (5,3) {};
\draw[-latex,darkred] (n4_3) -- (n5_3);
\draw[-latex,darkred] (n5_2) -- (n5_3);
\node[inner sep = 1pt,circle,fill=black] (n5_4) at (5,4) {};
\draw[-latex,darkred] (n4_4) -- (n5_4);
\draw[-latex,darkred] (n5_3) -- (n5_4);
\node[inner sep = 1pt,circle,fill=black] (n5_5) at (5,5) {};
\draw[-latex,darkred] (n4_5) -- (n5_5);
\draw[-latex,darkred] (n5_4) -- (n5_5);
\draw[-latex,very thick,darkgreen] (n4_4) -- (n5_5);
\draw (0, -0.25) node[below]{$\mathtt{0}$};
\draw (1, -0.25) node[below]{$\mathtt{1}$};
\draw (2, -0.25) node[below]{$\mathtt{2}$};
\draw (3, -0.25) node[below]{$\mathtt{3}$};
\draw (4, -0.25) node[below]{$\mathtt{4}$};
\draw (5, -0.25) node[below]{$\mathtt{5}$};
\draw (-0.25,0) node[left]{$\mathtt{0}$};
\draw (-0.25,1) node[left]{$\mathtt{1}$};
\draw (-0.25,2) node[left]{$\mathtt{2}$};
\draw (-0.25,3) node[left]{$\mathtt{3}$};
\draw (-0.25,4) node[left]{$\mathtt{4}$};
\draw (-0.25,5) node[left]{$\mathtt{5}$};
        \end{tikzpicture}
        \vspace{-.5cm}
    \end{center}

    \caption{The alignment graph for $s = [1, 0, 3, 2, 4]$, with red edges of weight $0$ and green edges of weight $1$. The length of the longest path from the $i$-th vertex in the bottom row to the $j$-th vertex in the top row of this graph for $i < j$ is equal to $\lis(s\fragmentco{i}{j})$.}
    \label{fig:alignment-graph-simple}
\end{figure}

As $\lis(s\fragmentco{i}{j}) \le n$ for all $i, j \in \fragmentcc{0}{n}$ with $i < j$, all entries of $M^s$ are bounded by $\Oh(n)$, and thus $\delta(M^s) = \Oh(n)$ holds due to \cref{lm:simple-core-bound,lm:calculate-value-from-core}.
We compute $M^s$ in a divide-and-conquer fashion.
We split the sequence $s$ into subsequences $s^{\lo}$ and $s^{\hi}$ containing all values in $[0\dd \floor{\frac{n}{2}})$ and $[\floor{\frac{n}{2}}\dd n)$ respectively, and recursively compute $M^{s^{\lo}}$ and $M^{s^{\hi}}$.
After that, we note that $G^{s^{\lo}}$ and $G^{s^{\hi}}$ are essentially compressed versions of the lower half and the upper half of $G^s$, respectively.
Based on this, we transform $M^{s^{\lo}}$ in $\Oh(n)$ time into the matrix of the longest distances from the vertices in the bottom row of $G^s$ to the vertices in the middle row of $G^s$.
Analogously, we transform $M^{s^{\hi}}$ into the matrix of the longest distances from the middle row of $G^s$ to the top row of~$G^s$.
To obtain $M^s$, it remains to $(\max, +)$-multiply these two matrices using \cref{lm:core-based-multiplication-algorithm}.
Every single recursive call takes time $\Oh(n \log n)$ dominated by the algorithm of \cref{lm:core-based-multiplication-algorithm}.
The whole divide-and-conquer procedure takes $\Oh(n \log^2 n)$ time.
Given the condensed representation of $M^s$, we use \cref{lm:cmo} to create an oracle for $\Oh(\log n)$-time range LIS queries, thus replicating the result of~\cite{Tiskin10}.

\medskip

Compared to the results of \cite{Tiskin10}, this algorithm operates directly on the local LIS values and thus can be easily converted into an algorithm for the reporting version of range LIS queries, where we want to not only find the length of the longest path in the alignment graph but to also compute the structure of the underlying path itself.
For that, we simply use the witness reconstruction oracle of \cref{lm:product-witness-reconstruction} to find the midpoint of the path and descend the recursion tree.
This costs $\Oh(\log n)$ time per recursive call and allows reconstructing the entire length-$\ell$ LIS in time $\Oh(\ell \log^2 n)$.
A similar recursive LIS reporting scheme is used in~\cite{CHS23}.

By treating increasing subsequences of length at most $\log^2 n$ separately, we further obtain an algorithm with $\Oh(n \log^3 n)$ preprocessing time and $\Oh(\ell)$ reporting time.
It improves the result of \cite{KR24} with $\Oh(n^{3 / 2} \polylog n)$-time preprocessing and $\Oh(\ell + n^{1 / 2} \polylog n)$-time reporting.

The formal proofs of the results in this section are given in \cref{app:range-lis}.

\bibliography{refs}

\begin{thebibliography}{10}

\bibitem{AKMSW87}
Alok Aggarwal, Maria~M. Klawe, Shlomo Moran, Peter~W. Shor, and Robert~E.
  Wilber.
\newblock Geometric applications of a matrix-searching algorithm.
\newblock {\em Algorithmica}, 2:195--208, 1987.
\newblock \href {https://doi.org/10.1007/BF01840359}
  {\path{doi:10.1007/BF01840359}}.

\bibitem{AGM97}
Noga Alon, Zvi Galil, and Oded Margalit.
\newblock On the exponent of the all pairs shortest path problem.
\newblock {\em Journal of Computer and System Sciences}, 54(2):255--262, 1997.
\newblock \href {https://doi.org/10.1006/JCSS.1997.1388}
  {\path{doi:10.1006/JCSS.1997.1388}}.

\bibitem{AALM90}
Alberto Apostolico, Mikhail~J. Atallah, Lawrence~L. Larmore, and Scott
  McFaddin.
\newblock Efficient parallel algorithms for string editing and related
  problems.
\newblock {\em SIAM Journal on Computing}, 19(5):968--988, 1990.
\newblock \href {https://doi.org/10.1137/0219066} {\path{doi:10.1137/0219066}}.

\bibitem{BGKS2015}
Maxim~A. Babenko, Paweł Gawrychowski, Tomasz Kociumaka, and Tatiana
  Starikovskaya.
\newblock Wavelet trees meet suffix trees.
\newblock In Piotr Indyk, editor, {\em 26th Annual {ACM-SIAM} Symposium on
  Discrete Algorithms, {SODA} 2015}, pages 572--591. {SIAM}, 2015.
\newblock \href {https://doi.org/10.1137/1.9781611973730.39}
  {\path{doi:10.1137/1.9781611973730.39}}.

\bibitem{Ben95}
Gary Benson.
\newblock A space efficient algorithm for finding the best nonoverlapping
  alignment score.
\newblock {\em Theoretical Computer Science}, 145(1{\&}2):357--369, 1995.
\newblock \href {https://doi.org/10.1016/0304-3975(95)92848-R}
  {\path{doi:10.1016/0304-3975(95)92848-R}}.

\bibitem{BKMNW17}
Glencora Borradaile, Philip~N. Klein, Shay Mozes, Yahav Nussbaum, and Christian
  Wulff{-}Nilsen.
\newblock Multiple-source multiple-sink maximum flow in directed planar graphs
  in near-linear time.
\newblock {\em SIAM Journal on Computing}, 46(4):1280--1303, 2017.
\newblock \href {https://doi.org/10.1137/15M1042929}
  {\path{doi:10.1137/15M1042929}}.

\bibitem{BGSW19}
Karl Bringmann, Fabrizio Grandoni, Barna Saha, and Virginia~Vassilevska
  Williams.
\newblock Truly subcubic algorithms for language edit distance and {RNA}
  folding via fast bounded-difference min-plus product.
\newblock {\em SIAM Journal on Computing}, 48(2):481--512, 2019.
\newblock \href {https://doi.org/10.1137/17M112720X}
  {\path{doi:10.1137/17M112720X}}.

\bibitem{Bur07}
Rainer~E. Burkard.
\newblock Monge properties, discrete convexity and applications.
\newblock {\em European Journal of Operational Research}, 176(1):1--14, 2007.
\newblock \href {https://doi.org/10.1016/J.EJOR.2005.04.050}
  {\path{doi:10.1016/J.EJOR.2005.04.050}}.

\bibitem{BKR96}
Rainer~E. Burkard, Bettina Klinz, and R{\"{u}}diger Rudolf.
\newblock Perspectives of {M}onge properties in optimization.
\newblock {\em Discrete Applied Mathematics}, 70(2):95--161, 1996.
\newblock \href {https://doi.org/10.1016/0166-218X(95)00103-X}
  {\path{doi:10.1016/0166-218X(95)00103-X}}.

\bibitem{CHS23}
Nairen Cao, Shang{-}En Huang, and Hsin{-}Hao Su.
\newblock Nearly optimal parallel algorithms for longest increasing
  subsequence.
\newblock In Kunal Agrawal and Julian Shun, editors, {\em 35th {ACM} Symposium
  on Parallelism in Algorithms and Architectures, {SPAA} 2023}, pages 249--259.
  {ACM}, 2023.
\newblock \href {https://doi.org/10.1145/3558481.3591078}
  {\path{doi:10.1145/3558481.3591078}}.

\bibitem{Cha10}
Timothy~M. Chan.
\newblock More algorithms for all-pairs shortest paths in weighted graphs.
\newblock {\em SIAM Journal on Computing}, 39(5):2075--2089, 2010.
\newblock \href {https://doi.org/10.1137/08071990X}
  {\path{doi:10.1137/08071990X}}.

\bibitem{CGLMPWW23}
Panagiotis Charalampopoulos, Paweł Gawrychowski, Yaowei Long, Shay Mozes, Seth
  Pettie, Oren Weimann, and Christian Wulff{-}Nilsen.
\newblock Almost optimal exact distance oracles for planar graphs.
\newblock {\em Journal of the ACM}, 70(2):12:1--12:50, 2023.
\newblock \href {https://doi.org/10.1145/3580474} {\path{doi:10.1145/3580474}}.

\bibitem{CGMW21}
Panagiotis Charalampopoulos, Paweł Gawrychowski, Shay Mozes, and Oren Weimann.
\newblock An almost optimal edit distance oracle.
\newblock In Nikhil Bansal, Emanuela Merelli, and James Worrell, editors, {\em
  48th International Colloquium on Automata, Languages, and Programming,
  {ICALP} 2021}, volume 198 of {\em LIPIcs}, pages 48:1--48:20. Schloss
  Dagstuhl--Leibniz-Zentrum f{\"{u}}r Informatik, 2021.
\newblock \href {https://doi.org/10.4230/LIPICS.ICALP.2021.48}
  {\path{doi:10.4230/LIPICS.ICALP.2021.48}}.

\bibitem{CKM20}
Panagiotis Charalampopoulos, Tomasz Kociumaka, and Shay Mozes.
\newblock Dynamic string alignment.
\newblock In Inge~Li G{\o}rtz and Oren Weimann, editors, {\em 31st Annual
  Symposium on Combinatorial Pattern Matching, {CPM} 2020}, volume 161 of {\em
  LIPIcs}, pages 9:1--9:13. Schloss Dagstuhl--Leibniz-Zentrum f{\"{u}}r
  Informatik, 2020.
\newblock \href {https://doi.org/10.4230/LIPICS.CPM.2020.9}
  {\path{doi:10.4230/LIPICS.CPM.2020.9}}.

\bibitem{CKW22}
Panagiotis Charalampopoulos, Tomasz Kociumaka, and Philip Wellnitz.
\newblock Faster pattern matching under edit distance: {A} reduction to dynamic
  puzzle matching and the seaweed monoid of permutation matrices.
\newblock In {\em 63rd {IEEE} Annual Symposium on Foundations of Computer
  Science, {FOCS} 2022}, pages 698--707. {IEEE}, 2022.
\newblock \href {https://doi.org/10.1109/FOCS54457.2022.00072}
  {\path{doi:10.1109/FOCS54457.2022.00072}}.

\bibitem{CDX22}
Shucheng Chi, Ran Duan, and Tianle Xie.
\newblock Faster algorithms for bounded-difference min-plus product.
\newblock In Joseph~(Seffi) Naor and Niv Buchbinder, editors, {\em 33rd
  {ACM-SIAM} Symposium on Discrete Algorithms, {SODA} 2022}, pages 1435--1447.
  {SIAM}, 2022.
\newblock \href {https://doi.org/10.1137/1.9781611977073.60}
  {\path{doi:10.1137/1.9781611977073.60}}.

\bibitem{CDXZ22}
Shucheng Chi, Ran Duan, Tianle Xie, and Tianyi Zhang.
\newblock Faster min-plus product for monotone instances.
\newblock In Stefano Leonardi and Anupam Gupta, editors, {\em 54th Annual {ACM}
  {SIGACT} Symposium on Theory of Computing, STOC 2022}, pages 1529--1542.
  {ACM}, 2022.
\newblock \href {https://doi.org/10.1145/3519935.3520057}
  {\path{doi:10.1145/3519935.3520057}}.

\bibitem{Clark1996}
D.~Clark.
\newblock Compact pat trees, PhD thesis, University of Waterloo, 1996.
\newblock URL: \url{http://hdl.handle.net/10012/64}.

\bibitem{Due23}
Anita D{\"{u}}rr.
\newblock Improved bounds for rectangular monotone min-plus product and
  applications.
\newblock {\em Information Processing Letters}, 181:106358, 2023.
\newblock \href {https://doi.org/10.1016/J.IPL.2023.106358}
  {\path{doi:10.1016/J.IPL.2023.106358}}.

\bibitem{FR06}
Jittat Fakcharoenphol and Satish Rao.
\newblock Planar graphs, negative weight edges, shortest paths, and near linear
  time.
\newblock {\em Journal of Computer and System Sciences}, 72(5):868--889, 2006.
\newblock \href {https://doi.org/10.1016/j.jcss.2005.05.007}
  {\path{doi:10.1016/j.jcss.2005.05.007}}.

\bibitem{GKLS22}
Arun Ganesh, Tomasz Kociumaka, Andrea Lincoln, and Barna Saha.
\newblock How compression and approximation affect efficiency in string
  distance measures.
\newblock In Joseph~(Seffi) Naor and Niv Buchbinder, editors, {\em 33rd
  {ACM-SIAM} Symposium on Discrete Algorithms, {SODA} 2022}, pages 2867--2919.
  {SIAM}, 2022.
\newblock \href {https://doi.org/10.1137/1.9781611977073.112}
  {\path{doi:10.1137/1.9781611977073.112}}.

\bibitem{Gaw12}
Paweł Gawrychowski.
\newblock Faster algorithm for computing the edit distance between
  {SLP}-compressed strings.
\newblock In Liliana Calder{\'{o}}n{-}Benavides, Cristina~N.
  Gonz{\'{a}}lez{-}Caro, Edgar Ch{\'{a}}vez, and Nivio Ziviani, editors, {\em
  19th International Symposium on String Processing and Information Retrieval,
  {SPIRE} 2012}, volume 7608 of {\em LNCS}, pages 229--236. Springer, 2012.
\newblock \href {https://doi.org/10.1007/978-3-642-34109-0_24}
  {\path{doi:10.1007/978-3-642-34109-0_24}}.

\bibitem{GK24}
Egor Gorbachev and Tomasz Kociumaka.
\newblock Bounded edit distance: Optimal static and dynamic algorithms for
  small integer weights.
\newblock In {\em 57th Annual ACM Symposium on Theory of Computing, {STOC}
  2025}, page 2157–2166, New York, NY, USA, 2025. Association for Computing
  Machinery.
\newblock \href {https://doi.org/10.1145/3717823.3718168}
  {\path{doi:10.1145/3717823.3718168}}.

\bibitem{GPVWZ21}
Yuzhou Gu, Adam Polak, Virginia~Vassilevska Williams, and Yinzhan Xu.
\newblock Faster monotone min-plus product, range mode, and single source
  replacement paths.
\newblock In Nikhil Bansal, Emanuela Merelli, and James Worrell, editors, {\em
  48th International Colloquium on Automata, Languages, and Programming,
  {ICALP} 2021}, volume 198 of {\em LIPIcs}, pages 75:1--75:20. Schloss
  Dagstuhl--Leibniz-Zentrum f{\"{u}}r Informatik, 2021.
\newblock \href {https://doi.org/10.4230/LIPICS.ICALP.2021.75}
  {\path{doi:10.4230/LIPICS.ICALP.2021.75}}.

\bibitem{HLLW13}
Danny Hermelin, Gad~M. Landau, Shir Landau, and Oren Weimann.
\newblock Unified compression-based acceleration of edit-distance computation.
\newblock {\em Algorithmica}, 65(2):339--353, 2013.
\newblock \href {https://doi.org/10.1007/S00453-011-9590-6}
  {\path{doi:10.1007/S00453-011-9590-6}}.

\bibitem{IKLS17}
Giuseppe~F. Italiano, Adam Karczmarz, Jakub {\L}ącki, and Piotr Sankowski.
\newblock Decremental single-source reachability in planar digraphs.
\newblock In Hamed Hatami, Pierre McKenzie, and Valerie King, editors, {\em
  49th Annual {ACM} {SIGACT} Symposium on Theory of Computing, {STOC} 2017},
  pages 1108--1121. {ACM}, 2017.
\newblock \href {https://doi.org/10.1145/3055399.3055480}
  {\path{doi:10.1145/3055399.3055480}}.

\bibitem{Jacobson1989}
Guy Jacobson.
\newblock Space-efficient static trees and graphs.
\newblock In {\em 30th Annual Symposium on Foundations of Computer Science,
  Research Triangle Park, North Carolina, USA, 30 October - 1 November 1989},
  pages 549--554. {IEEE} Computer Society, 1989.
\newblock \href {https://doi.org/10.1109/SFCS.1989.63533}
  {\path{doi:10.1109/SFCS.1989.63533}}.

\bibitem{KM96}
Sampath Kannan and Eugene~W. Myers.
\newblock An algorithm for locating nonoverlapping regions of maximum alignment
  score.
\newblock {\em SIAM Journal on Computing}, 25(3):648--662, 1996.
\newblock \href {https://doi.org/10.1137/S0097539794262677}
  {\path{doi:10.1137/S0097539794262677}}.

\bibitem{KR24}
{Karthik {C. S.}} and Saladi Rahul.
\newblock Range longest increasing subsequence and its relatives: Beating
  quadratic barrier and approaching optimality, 2024.
\newblock \href {http://arxiv.org/abs/2404.04795} {\path{arXiv:2404.04795}}.

\bibitem{KS21}
Tomasz Kociumaka and Saeed Seddighin.
\newblock Improved dynamic algorithms for longest increasing subsequence.
\newblock In Samir Khuller and Virginia~Vassilevska Williams, editors, {\em
  53rd Annual {ACM} {SIGACT} Symposium on Theory of Computing, STOC 2021},
  pages 640--653. {ACM}, 2021.
\newblock \href {https://doi.org/10.1145/3406325.3451026}
  {\path{doi:10.1145/3406325.3451026}}.

\bibitem{CF}
Jaehyun Koo.
\newblock On range {LIS} queries, 2023.
\newblock Accessed: 2025-04-22.
\newblock URL: \url{https://codeforces.com/blog/entry/111625}.

\bibitem{Koo24}
Jaehyun Koo.
\newblock An optimal {MPC} algorithm for subunit-{M}onge matrix multiplication,
  with applications to {LIS}.
\newblock In Kunal Agrawal and Erez Petrank, editors, {\em 36th {ACM} Symposium
  on Parallelism in Algorithms and Architectures, {SPAA} 2024}, pages 145--154.
  {ACM}, 2024.
\newblock \href {https://doi.org/10.1145/3626183.3659974}
  {\path{doi:10.1145/3626183.3659974}}.

\bibitem{KT10}
Peter Krusche and Alexander Tiskin.
\newblock New algorithms for efficient parallel string comparison.
\newblock In Friedhelm~Meyer auf~der Heide and Cynthia~A. Phillips, editors,
  {\em 22nd Annual {ACM} Symposium on Parallelism in Algorithms and
  Architectures, {SPAA} 2010}, pages 209--216. {ACM}, 2010.
\newblock \href {https://doi.org/10.1145/1810479.1810521}
  {\path{doi:10.1145/1810479.1810521}}.

\bibitem{LZ01}
Gad~M. Landau and Michal Ziv{-}Ukelson.
\newblock On the common substring alignment problem.
\newblock {\em Journal of Algorithms}, 41(2):338--359, 2001.
\newblock \href {https://doi.org/10.1006/JAGM.2001.1191}
  {\path{doi:10.1006/JAGM.2001.1191}}.

\bibitem{MBT21}
Nikita Mishin, Daniil Berezun, and Alexander Tiskin.
\newblock Efficient parallel algorithms for string comparison.
\newblock In Xian{-}He Sun, Sameer Shende, Laxmikant~V. Kal{\'{e}}, and Yong
  Chen, editors, {\em 50th International Conference on Parallel Processing,
  {ICPP} 2021}, pages 50:1--50:10. {ACM}, 2021.
\newblock \href {https://doi.org/10.1145/3472456.3472489}
  {\path{doi:10.1145/3472456.3472489}}.

\bibitem{Munro1996}
J.~Ian Munro.
\newblock Tables.
\newblock In Vijay Chandru and V.~Vinay, editors, {\em 16th Conference on
  Foundations of Software Technology and Theoretical Computer Science}, volume
  1180 of {\em LNCS}, pages 37--42. Springer, 1996.
\newblock \href {https://doi.org/10.1007/3-540-62034-6_35}
  {\path{doi:10.1007/3-540-62034-6_35}}.

\bibitem{MNV2016}
J.~Ian Munro, Yakov Nekrich, and Jeffrey~Scott Vitter.
\newblock Fast construction of wavelet trees.
\newblock {\em Theoretical Computer Science}, 638:91--97, 2016.
\newblock \href {https://doi.org/10.1016/J.TCS.2015.11.011}
  {\path{doi:10.1016/J.TCS.2015.11.011}}.

\bibitem{Russo10}
Lu{\'{\i}}s M.~S. Russo.
\newblock Monge properties of sequence alignment.
\newblock {\em Theoretical Computer Science}, 423:30--49, 2012.
\newblock \href {https://doi.org/10.1016/J.TCS.2011.12.068}
  {\path{doi:10.1016/J.TCS.2011.12.068}}.

\bibitem{Sak19}
Yoshifumi Sakai.
\newblock A substring-substring {LCS} data structure.
\newblock {\em Theoretical Computer Science}, 753:16--34, 2019.
\newblock \href {https://doi.org/10.1016/J.TCS.2018.06.034}
  {\path{doi:10.1016/J.TCS.2018.06.034}}.

\bibitem{Sak22}
Yoshifumi Sakai.
\newblock A data structure for substring-substring {LCS} length queries.
\newblock {\em Theoretical Computer Science}, 911:41--54, 2022.
\newblock \href {https://doi.org/10.1016/J.TCS.2022.02.004}
  {\path{doi:10.1016/J.TCS.2022.02.004}}.

\bibitem{Schm98}
Jeanette~P. Schmidt.
\newblock All highest scoring paths in weighted grid graphs and their
  application to finding all approximate repeats in strings.
\newblock {\em SIAM Journal on Computing}, 27(4):972--992, 1998.
\newblock \href {https://doi.org/10.1137/S0097539795288489}
  {\path{doi:10.1137/S0097539795288489}}.

\bibitem{Tis07}
Alexander Tiskin.
\newblock Semi-local string comparison: algorithmic techniques and
  applications, 2007.
\newblock \href {http://arxiv.org/abs/0707.3619} {\path{arXiv:0707.3619}}.

\bibitem{Tis08b}
Alexander Tiskin.
\newblock Semi-local longest common subsequences in subquadratic time.
\newblock {\em Journal of Discrete Algorithms}, 6(4):570--581, 2008.
\newblock \href {https://doi.org/10.1016/J.JDA.2008.07.001}
  {\path{doi:10.1016/J.JDA.2008.07.001}}.

\bibitem{Tis14}
Alexander Tiskin.
\newblock Threshold approximate matching in grammar-compressed strings.
\newblock In Jan Holub and Jan Zd{\'{a}}rek, editors, {\em Prague Stringology
  Conference 2014}, pages 124--138. Department of Theoretical Computer Science,
  Faculty of Information Technology, Czech Technical University in Prague,
  2014.
\newblock URL: \url{http://www.stringology.org/event/2014/p12.html}.

\bibitem{Tiskin10}
Alexander Tiskin.
\newblock Fast distance multiplication of unit-{M}onge matrices.
\newblock {\em Algorithmica}, 71(4):859--888, 2015.
\newblock \href {https://doi.org/10.1007/S00453-013-9830-Z}
  {\path{doi:10.1007/S00453-013-9830-Z}}.

\bibitem{Tis08a}
Alexandre Tiskin.
\newblock Semi-local string comparison: Algorithmic techniques and
  applications.
\newblock {\em Mathematics in Computer Science}, 1(4):571--603, 2008.
\newblock \href {https://doi.org/10.1007/S11786-007-0033-3}
  {\path{doi:10.1007/S11786-007-0033-3}}.

\bibitem{VW06}
Virginia Vassilevska and Ryan Williams.
\newblock Finding a maximum weight triangle in $n^{3-\Delta}$ time, with
  applications.
\newblock In Jon~M. Kleinberg, editor, {\em 38th Annual {ACM} Symposium on
  Theory of Computing, STOC 2006}, pages 225--231. {ACM}, 2006.
\newblock \href {https://doi.org/10.1145/1132516.1132550}
  {\path{doi:10.1145/1132516.1132550}}.

\bibitem{VW18}
Virginia {Vassilevska Williams}.
\newblock On some fine-grained questions in algorithms and complexity.
\newblock In {\em International Congress of Mathematicians, ICM 2018}, pages
  3447--3487. World Scientific, 2018.
\newblock \href {https://doi.org/10.1142/9789813272880_0188}
  {\path{doi:10.1142/9789813272880_0188}}.

\bibitem{Wil85}
Dan~E. Willard.
\newblock New data structures for orthogonal range queries.
\newblock {\em SIAM Journal on Computing}, 14(1):232--253, 1985.
\newblock \href {https://doi.org/10.1137/0214019} {\path{doi:10.1137/0214019}}.

\bibitem{Wil18}
R.~Ryan Williams.
\newblock Faster all-pairs shortest paths via circuit complexity.
\newblock {\em SIAM Journal on Computing}, 47(5):1965--1985, 2018.
\newblock \href {https://doi.org/10.1137/15M1024524}
  {\path{doi:10.1137/15M1024524}}.

\bibitem{VWX20}
Virginia~Vassilevska Williams and Yinzhan Xu.
\newblock Truly subcubic min-plus product for less structured matrices, with
  applications.
\newblock In Shuchi Chawla, editor, {\em 31st {ACM-SIAM} Symposium on Discrete
  Algorithms, {SODA} 2020}, pages 12--29. {SIAM}, 2020.
\newblock \href {https://doi.org/10.1137/1.9781611975994.2}
  {\path{doi:10.1137/1.9781611975994.2}}.

\bibitem{Yao82}
F.~Frances Yao.
\newblock Speed-up in dynamic programming.
\newblock {\em SIAM Journal on Algebraic Discrete Methods}, 3(4):532--540,
  1982.
\newblock \href {https://doi.org/10.1137/0603055} {\path{doi:10.1137/0603055}}.

\bibitem{Yus09}
Raphael Yuster.
\newblock Efficient algorithms on sets of permutations, dominance, and
  real-weighted {APSP}.
\newblock In Claire Mathieu, editor, {\em 20th Annual {ACM-SIAM} Symposium on
  Discrete Algorithms, {SODA} 2009}, pages 950--957. {SIAM}, 2009.
\newblock \href {https://doi.org/10.1137/1.9781611973068.103}
  {\path{doi:10.1137/1.9781611973068.103}}.

\end{thebibliography}

\appendix

\section{Deferred Proofs from Sections~\ref{sec:preliminaries} and~\ref{sec:core-based-multiplication-algorithm}}\label{sec:appendix}

\fctmongeproductismonge*

\label{sec:appendix-fctmongeproductismonge}

\begin{proof}
    Let $A$ be of size $p \times q$ and $B$ be of size $q \times r$.
    Fix some $i \in \fragmentco{0}{p-1}$ and $k \in \fragmentco{0}{r-1}$.
    We prove $C_{i, k} + C_{i + 1, k + 1} \le C_{i, k + 1} + C_{i + 1, k}$.
    Pick $j = \wit{A}{B}_{i + 1, k}$ and $j' = \wit{A}{B}_{i, k + 1}$ so that $C_{i + 1, k} = A_{i + 1, j} + B_{j, k}$ and $C_{i, k + 1} = A_{i, j'} + B_{j', k + 1}$.
    Suppose $j \le j'$.
    We have
    \begin{align*}
        C_{i, k} + C_{i + 1, k + 1} &\le (A_{i, j} + B_{j, k}) + (A_{i + 1, j'} + B_{j', k + 1})\\
                                    &= (A_{i, j} + A_{i + 1, j'}) + (B_{j, k} + B_{j', k + 1})\\
                                    &\le (A_{i, j'} + A_{i + 1, j}) + (B_{j, k} + B_{j', k + 1})&\text{\hfill by \cref{cor:non-local-monge-property}}\\
                                    &= (A_{i, j'} + B_{j', k + 1}) + (A_{i + 1, j} + B_{j, k})\\
                                    &= C_{i, k + 1} + C_{i + 1, k}.
    \end{align*}
    The case of $j > j'$ is analogous.
\end{proof}

\lmcompression*

\label{sec:appendix-lmcompression}

\begin{proof}
    We first describe how to convert $A^*$ into a $p \times q^*$ Monge matrix $A'$ such that $p \le \delta(A^*) + 1$ and $\delta(A') = \delta(A^*)$.

    Call the $i$-th row of $A^*$ for $i \in \fragmentoo{0}{p^*}$ \emph{redundant} if $\deltai{i - 1}(A^*) = 0$.
    To obtain $A'$, we remove all redundant rows of $A^*$.
    The matrix $A'$ is Monge by \cref{lm:monge-submatrix}.

    We now show how to compute the condensed representation of $A'$ in $\Oh(p^* + q^* + \delta(A^*))$ time.  
    To find redundant rows, we iterate over $\core(A^*)$ and mark all rows that contain any core elements.
    The topmost row of $A'$ is the same as the topmost row of $A^*$.
    To get the leftmost column of $A'$, we filter out all elements of the leftmost column of $A^*$ that lie in the redundant rows.
    It remains to compute the core of $A'$.
    Consider a function $f$ that maps the rows of $A'$ to the corresponding rows of~$A^*$.
    For any $i \in \fragmentco{0}{p - 1}$ and $j \in \fragmentco{0}{q-1}$, due to \cref{lm:calculate-value-from-core}, we have
    \begin{align*}
        \dens{A'}_{i, j} &= A'_{i, j + 1} + A'_{i + 1, j} - A'_{i, j} - A'_{i + 1, j + 1}\\
                         &= A^*_{f(i), j + 1} + A^*_{f(i + 1), j} - A^*_{f(i), j} - A^*_{f(i + 1), j + 1}\\
                         &= \deltasigma(A^*\fragmentcc{f(i)}{f(i+1)}\fragmentcc{j}{j+1})\\
                         &= A^{*\square}_{f(i), j} + A^{*\square}_{f(i) + 1, j} + \cdots + A^{*\square}_{f(i + 1) - 1, j}\\
                         &= A^{*\square}_{f(i + 1) - 1, j},
    \end{align*}
    where the last equality holds as $A^{*\square}_{f(i), j} = A^{*\square}_{f(i) + 1, j} = \cdots = A^{*\square}_{f(i + 1) - 2, j} = 0$ because $f(i)$ and $f(i+1)$ are two consecutive non-redundant rows of $A^*$.
    Therefore, the set $\core(A') = \{(i, j, \dens{A'}_{i, j}) \mid i \in \fragmentco{0}{p - 1}, j \in \fragmentco{0}{q-1}, \dens{A'}_{i, j} \neq 0\} = \{(f^{-1}(i + 1) - 1, j, A^{*\square}_{i, j}) \mid (i, j, A^{*\square}_{i, j}) \in \core(A^*)\}$ is of size $\delta(A') = \delta(A^*)$ and can be computed in time $\Oh(\delta(A^*) + 1)$.
    Furthermore, for every $i \in \fragmentco{0}{p - 1}$, we have $\deltai{i}(A') \neq 0$ by the definition of $A'$, and thus $p \le \delta(A') + 1 = \delta(A^*) + 1$.

    Analogously (up to a transposition), in $\Oh(q^*+r^*+\delta(B^*))$ time we convert $B^*$ into a $q^* \times r$ Monge matrix $B'$ such that $r \le \delta(B^*) + 1$ and $\delta(B') = \delta(B^*)$.
    
    \newcommand{\jjj}{j_{\textup{opt}}}
    The last step of the compression algorithm is to convert $A'$ and $B'$ into Monge matrices $A$ and $B$ of sizes $p \times q$ and $q \times r$, respectively, such that $q \le \delta(A')+\delta(B')+1$, $A \minplus B = A' \minplus B'$, $\delta(A) = \delta(A')$, and $\delta(B) = \delta(B')$.
    Consider an arbitrary segment $\fragmentco{j_L}{j_R} \subseteq \fragmentco{0}{q^*}$ such that $\delta(A'\fragmentco{0}{p}\fragmentco{j_L}{j_R}) = \delta(B'\fragmentco{j_L}{j_R}\fragmentco{0}{r}) = 0$.
    Pick $\jjj \coloneqq \argmin \{ A'_{0, j} + B'_{j, 0} \mid j \in \fragmentco{j_L}{j_R} \}$.
    For all $i \in \fragmentco{0}{p}$, $j \in \fragmentco{j_L}{j_R}$, and $r \in \fragmentco{0}{r}$, we have
    \begin{align*}
        A'_{i, j} + B'_{j, k} &= (A'_{0, j} + A'_{i, \jjj} - A'_{0, \jjj}) + (B'_{j, 0} + B'_{\jjj, k} - B'_{\jjj, 0})\\
                              &= A'_{i, \jjj} + B'_{\jjj, k} + (A'_{0, j} + B'_{j, 0}) - (A'_{0, \jjj} + B'_{\jjj, 0})\\
                              &\ge A'_{i, \jjj} + B'_{\jjj, k}
    \end{align*}
    by \cref{lm:calculate-value-from-core} and the definition of $\jjj$.
    Therefore, indices $j \in \fragmentco{j_L}{j_R} \setminus \{\jjj\}$ are not needed for the computation of any entry $(A' \minplus B')_{i,k}$.
    We can thus split $\fragmentco{0}{q^*}$ into $q \le \delta(A') + \delta(B') + 1$ such disjoint segments $\fragmentco{j_L}{j_R}$ and pick a single ``optimal'' index $\jjj$ in each such segment.
    To obtain the resulting matrices $A$ and $B$, we drop all columns of $A'$ and rows of $B'$ that correspond to indices that are not picked.
    The condensed representations of $A$ and $B$ are obtained from the condensed representations of $A'$ and $B'$ similarly to the conversion of $A^*$ into $A'$.
    Note that $A$ and $B$ are Monge matrices due to \cref{lm:monge-submatrix}.
    The whole compression procedure takes $\Oh(p^* + q^* + r^* + \delta(A^*) + \delta(B^*))$ time.

    \medskip

    We now describe how to obtain the condensed representation of $C^* \coloneqq A^* \minplus B'$ from the condensed representation of $C \coloneqq A \minplus B = A' \minplus B'$.
    This constitutes one half of the decompression algorithm.
    Note that $C_{i, k} = C^*_{f(i), k}$ for every $i \in \fragmentco{0}{p}$ and $k \in \fragmentco{0}{r}$.
    In particular, the topmost row of $C^*$ coincides with the topmost row of $C$.

    Consider some $i \in \fragmentco{0}{p^*-1}$ with $\deltai{i}(A^*) = 0$.
    For every $j \in \fragmentco{0}{q}$,  we have
    \[
        A^*_{i + 1, j} = A^*_{i, j} + (A^*_{i + 1, 0} - A^*_{i, 0}) - \deltasigma(A^*\fragmentcc{i}{i+1}\fragmentcc{0}{j}) = A^*_{i, j} + (A^*_{i + 1, 0} - A^*_{i, 0})
    \]
    due to \cref{lm:calculate-value-from-core}.
    Therefore, for every $k \in \fragmentco{0}{r}$, we have
    \begin{align*}
        C^*_{i + 1, k} &= \min \{ A^*_{i + 1, j} + B'_{j, k} \mid j \in \fragmentco{0}{q} \}\\
                       &= \min \{ A^*_{i, j} + (A^*_{i + 1, 0} - A^*_{i, 0}) + B'_{j, k} \mid j \in \fragmentco{0}{q} \}\\
                     &= \min \{ A^*_{i, j} + B'_{j, k} \mid j \in \fragmentco{0}{q} \} + (A^*_{i + 1, 0} - A^*_{i, 0})\\
                     &= C^*_{i, k} + (A^*_{i + 1, 0} - A^*_{i, 0}).
    \end{align*}
    In particular, we may use this equality to compute the values in the leftmost column of $C^*$: for $i \in \fragmentco{0}{p}$, we have $C^*_{f(i), 0} = C_{i, 0}$ and $C^*_{i', 0} = C^*_{f(i), 0} + (A^*_{i', 0} - A^*_{f(i), 0})$ for $i' \in \fragmentoo{f(i)}{f(i + 1)}$.
    Furthermore, for $i' \in \fragmentco{f(i)}{f(i+1)-1}$, we have $\deltai{i'}(C^*) = 0$.
    Finally, 
    \begin{align*}
        C^{*\square}_{f(i + 1) - 1, k} &= C^{*\square}_{f(i), k} + C^{*\square}_{f(i) + 1, k} + \cdots + C^{*\square}_{f(i + 1) - 1, k}\\
                                       &= \deltasigma(C^*\fragmentcc{f(i)}{f(i+1)}\fragmentcc{k}{k+1})\\
                                       &= C^*_{f(i), k + 1} + C^*_{f(i + 1), k} - C^*_{f(i), k} - C^*_{f(i + 1), k + 1}\\
                                       &= C_{i, k + 1} + C_{i + 1, k} - C_{i, k} - C_{i + 1, k + 1}\\
                                       &= \dens{C}_{i, k}
    \end{align*}
    for every $k \in \fragmentco{0}{r - 1}$ due to the definition of $C$ and \cref{lm:calculate-value-from-core}.
    Therefore, the set $\core(C^*) = \{(i, k, C^{*\square}_{i, k}) \mid i \in \fragmentco{0}{p^*-1}, k \in \fragmentco{0}{r - 1}, C^{*\square}_{i, k} \neq 0\} = \{ (f(i+1)-1, k, \dens{C}_{i, k}) \mid (i, k, \dens{C}_{i, k}) \in \core(C)\}$ can be computed in time $\Oh(\delta(C)) = \Oh(\delta(A) + \delta(B)) = \Oh(\delta(A^*) + \delta(B^*))$, where $\delta(C) \le 2 \cdot (\delta(A) + \delta(B))$ due to \cref{lm:core-preservation}.
    The whole procedure takes $\Oh(p^* + r + \delta(A^*) + \delta(B^*))$ time.

    We showed how to decompress $A \minplus B=A'\minplus B'$ into $A^* \minplus B'$.
    The decompression of $A^* \minplus B'$ into $A^* \minplus B^*$ can be done symmetrically in time $\Oh(p^* + r^* + \delta(A^*) + \delta(B^*))$.
\end{proof}

\lmcorebasedmultiplicationalgorithm*

\label{sec:appendix-lmcorebasedmultiplicationalgorithm}

\begin{proof}
    We now formally describe the recursive procedure $\recmult$ that is run for a $p^* \times q^*$ matrix $A^*$ and a $q^* \times r^*$ matrix $B^*$.
    Consult \cref{alg:core-based-multiplication-algorithm} for the pseudocode.

    We first apply the \compress algorithm of \cref{lm:compression} to obtain the condensed representations of a Monge matrix $A$ of size $p \times q$ with $p \le \delta(A^*) + 1 = \delta(A) + 1$ and $q \le \delta(A^*) + \delta(B^*) + 1$ and a Monge matrix $B$ of size $q \times r$ with $r \le \delta(B^*) + 1 = \delta(B) + 1$.
    We then compute the condensed representation of $C \coloneqq A \minplus B$, and finally we use the \decompress algorithm of \cref{lm:compression} to obtain the condensed representation of $A^* \minplus B^*$.

    If $p = 1$ and $r = 1$, we compute the only entry of the $1 \times 1$ matrix $C$ (and thus, its condensed representation) trivially.
    Otherwise, we pick the largest index $m \in \fragmentoo{0}{q}$ with $\delta(A\fragmentco{0}{p}\fragmentco{0}{m}) + \delta(B\fragmentco{0}{m}\fragmentco{0}{r}) \le (\delta(A) + \delta(B)) / 2$.
    Such an $m$ can be computed by traversing the lists $\core(A)$ and $\core(B)$ sorted (using counting sort) according to the second and the first coordinate respectively.

    Let $A^L \coloneqq A\fragmentco{0}{p}\fragmentco{0}{m}$, $B^L \coloneqq B\fragmentco{0}{m}\fragmentco{0}{r}$, $A^R \coloneqq A\fragmentco{0}{p}\fragmentco{m}{q}$, and $B^R \coloneqq B\fragmentco{m}{q}\fragmentco{0}{r}$.
    We compute the condensed representations of all these matrices using \cref{lm:get-submatrix}.
    Furthermore, we recursively compute the condensed representations of $C^L \coloneqq A^L \minplus B^L$ and $C^R \coloneqq A^R \minplus B^R$, and build $\lco(C^L)$ and $\lco(C^R)$ of \cref{lm:local-core-oracle}.
    As shown below, the desired matrix $C$ is the element-wise minimum of $C^L$ and $C^R$.
    Moreover, there exists a non-increasing sequence $(lst_i)_{i \in \fragmentco{0}{p}}$ such that $C_{i, j} = C^L_{i, j} \le C^R_{i, j}$ for $j \in \fragmentcc{0}{lst_i}$ and $C_{i, j} = C^R_{i, j} < C^L_{i, j}$ for $j \in \fragmentoo{lst_i}{r}$.
    See \cref{fig:monge-multuplication-conquer-step}, where the blue staircase represents the sequence $lst$.
    We now compute the sequence $lst$ along with the core elements of $C$ located on the ``border'' between the values of $C$ taken from $C^L$ and the values of $C$ taken from $C^R$.
    We start traversing the matrix $C$ in the direction from bottom-left to top-right.
    At every point in time, we consider some cell $(i, j)$ with $i \in \fragmentco{0}{p}$ and $j \in \fragmentco{-1}{q}$.
    We maintain all values in $C^L\fragmentcc{i}{i+1}\fragmentcc{j}{j+1}$ and $C^R\fragmentcc{i}{i+1}\fragmentcc{j}{j+1}$ that are well-defined (that is, within the matrix boundaries).
    Initially, we set $i \coloneqq p - 1$ and $j \coloneqq -1$.
    At each step, we compute the values in $C\fragmentcc{i}{i+1}\fragmentcc{j}{j+1}$ as the element-wise minima of the values in $C^L\fragmentcc{i}{i+1}\fragmentcc{j}{j+1}$ and $C^R\fragmentcc{i}{i+1}\fragmentcc{j}{j+1}$, and if $\dens{C}_{i, j} \neq 0$, we add $(i, j, \dens{C}_{i, j})$ to the core of~$C$.
    Furthermore, if $j + 1 < r$ and $C^L_{i, j + 1} \le C^R_{i, j + 1}$, we have that $lst_i \ge j + 1$, and thus we increment $j$ by one and compute the values in $C^L\fragmentcc{i}{i+1}\fragmentcc{j+1}{j+1}$ and $C^R\fragmentcc{i}{i+1}\fragmentcc{j+1}{j+1}$ using the horizontally adjacent recomputation of $\lco(C^L)$ and $\lco(C^R)$.
    Otherwise, we set $lst_i = j$ and decrement $i$ by one.
    We maintain the values in $C^L\fragmentcc{i}{i}\fragmentcc{j}{j+1}$ and $C^L\fragmentcc{i}{i}\fragmentcc{j}{j+1}$ using vertically adjacent recomputation of $\lco(C^L)$ and $\lco(C^R)$.
    We terminate the process when $i$ becomes negative.
    
    After computing $lst_i$ for all $i \in \fragmentco{0}{p}$, we traverse all the elements $(i, j, C^{L\square}_{i, j}) \in \core(C^L)$ and add them to $\core(C)$ if $j < lst_{i + 1}$.
    Analogously, we traverse all the elements $(i, j, C^{R\square}_{i, j}) \in \core(C^R)$ and add them to $\core(C)$ if $j > lst_i$.
    Finally, we compute the elements in the topmost row and the leftmost column of $C$ as the element-wise minima of the elements in the topmost rows and the leftmost columns of $C^L$ and $C^R$.
    Having computed $\core(C)$, the topmost row of $C$, and the leftmost column of $C$, we can return the condensed representation of $C$.

    \medskip
    
    We now prove the correctness of the presented algorithm.
    We first show that $C$ is the element-wise minimum of $C^L$ and $C^R$.
    By definition, for every $i \in \fragmentco{0}{p}$ and $k \in \fragmentco{0}{r}$, we have
    \begin{align*}
        \min\{C^L_{i, k}, C^R_{i, k}\} &= \min\{\min \{A_{i, j} + B_{j, k} \mid j \in \fragmentco{0}{m}\}, \min \{A_{i, j} + B_{j, k} \mid j \in \fragmentco{m}{q}\}\}\\
                                       &= \min\{A_{i, j} + B_{j, k} \mid j \in \fragmentco{0}{q}\}\\
                                       &= C_{i, k}.
\end{align*}
    Furthermore, $C_{i, k} = C^L_{i, k} \le C^R_{i, k}$ if $\wit{A}{B}_{i, k} < m$ and $C_{i, k} = C^R_{i, k} < C^L_{i, k}$ if $\wit{A}{B}_{i, k} \ge m$.
    As $\wit{A}{B}$ is non-decreasing by rows and columns due to \cref{lm:witness-monotonicity}, there indeed exists a non-decreasing sequence $(lst_i)_{i\in  \fragmentco{0}{p}}$ with values in $\fragmentco{-1}{r}$ such that $C_{i, k} = C^L_{i, k} \le C^R_{i, k}$ for $k \in \fragmentcc{0}{lst_i}$ and $C_{i, k} = C^R_{i, k} < C^L_{i, k}$ for $k \in \fragmentoo{lst_i}{r}$.
    In short, $lst_i=\max(\{-1\}\cup \{j\in \fragmentco{0}{q} \mid C^L_{i, j} \le C^R_{i, j}\})$.
    We now prove the correctness of the algorithm that computes $lst$.
    In the loop, we start from $(i, j) = (p - 1, -1)$ and increment $j$ while $j < lst_i$ (that is, while $C^L_{i, j + 1} \le C^R_{i, j + 1}$).
    We always maintain the invariant that $j \le lst_i$.
    When we have $j = r - 1$ or $C^L_{i, j + 1} > C^R_{i, j + 1}$, we assign $lst_i = j$, which is indeed correct.
    As $lst$ is non-decreasing, we have $lst_{i - 1} \ge lst_i = j$, so the invariant holds when we decrement $i$.

    We now prove that we computed the core of $C$ correctly.
    At each step, we compute $\dens{C}_{i, j}$ and add $(i, j, \dens{C}_{i, j})$ to the core of $C$ if and only if $\dens{C}_{i, j} \neq 0$.
    Note that in the $i$-th row for $i \in \fragmentco{0}{p-1}$, we visited exactly the cells $(i, j)$ with $j \in \fragmentcc{lst_{i+1}}{lst_i}$.
    Therefore, we computed all core elements $(i, j, \dens{C}_{i, j}) \in \core(C)$ with $j \in \fragmentcc{\max\{0, lst_{i+1}\}}{\min\{r - 2, lst_i\}}$.
    If this is not the case, we either have $j \in \fragmentco{0}{lst_{i+1}}$ or $j \in \fragmentoo{lst_i}{r - 1}$.
    In the first case, we have $j + 1 \le lst_{i+1} \le lst_i$.
    Therefore, $C\fragmentcc{i}{i+1}\fragmentcc{j}{j+1} = C^L\fragmentcc{i}{i+1}\fragmentcc{j}{j+1}$.
    Hence, $\{ (i, j, \dens{C}_{i, j}) \in \core(C) \mid j < lst_{i+1} \} = \{(i, j, C^{L\square}_{i, j}) \in \core(C^L) \mid j < lst_{i + 1}\}$.
    Analogously, in the second case, we have $j > lst_i \ge lst_{i+1}$, and thus $C\fragmentcc{i}{i+1}\fragmentcc{j}{j+1} = C^R\fragmentcc{i}{i+1}\fragmentcc{j}{j+1}$.
    Hence, $\{ (i, j, \dens{C}_{i, j}) \in \core(C) \mid j > lst_i \} = \{(i, j, C^{R\square}_{i, j}) \in \core(C^R) \mid j > lst_i\}$.
    Therefore, the core of $C$ and the entire condensed representation of $C$ were computed correctly.

    \medskip

    It remains to analyze the time complexity of the presented recursive procedure.
    We first analyze the time complexity of a single call without further recursive calls.
    The compression and decompression procedures take time $\Oh(p^* + q^* + r^* + \delta(A^*) + \delta(B^*))$.
    If $p = r = 1$, we compute $C$ in $\Oh(q)$ time.
    Otherwise, we sort $\core(A)$ and $\core(B)$ using counting sort and pick $m$ in $\Oh(\delta(A) + \delta(B) + q) = \Oh(\delta(A) + \delta(B))$ time.
    Building the condensed representations of $A^L, A^R, B^L$, and $B^R$ using \cref{lm:get-submatrix} takes $\Oh(p + q + r + \delta(A) + \delta(B)) = \Oh(\delta(A) + \delta(B))$ time.
    Note that $A^L$ and $A^R$ are disjoint contiguous submatrices of $A$, and thus $\delta(A^L) + \delta(A^R) \le \delta(A)$.
    Analogously, $\delta(B^L) + \delta(B^R) \le \delta(B)$.
    In particular, we have $\delta(A^L), \delta(A^R) \le \delta(A)$ and $\delta(B^L), \delta(B^R) \le \delta(B)$.
    As $C^L = A^L \minplus B^L$ and $C^R = A^R \minplus B^R$, we have $\delta(C^L), \delta(C^R) \le 2 \cdot (\delta(A) + \delta(B))$ due to \cref{lm:core-preservation}.
    Therefore, constructing $\lco(C^L)$ and $\lco(C^R)$ takes time $\Oh(\delta(A) + \delta(B))$.
    After that, we initialize $i = p - 1$ and $j = -1$ and perform at most $p + r$ loop iterations as each iteration either decrements $i$ or increments $j$.
    Therefore, all loop operations except for maintaining the values of $C^L\fragmentcc{i}{i+1}\fragmentcc{j}{j+1}$ and $C^R\fragmentcc{i}{i+1}\fragmentcc{j}{j+1}$ take $\Oh(p + r)$ time.

    We now analyze the time of maintaining the values of $C^L\fragmentcc{i}{i+1}\fragmentcc{j}{j+1}$.
    We initially set $i = p - 1$ and $j = -1$ and only compute $C^L_{i, j + 1}$ in constant time as it is the only existing entry of $C^L\fragmentcc{i}{i+1}\fragmentcc{j}{j+1}$.
    Furthermore, whenever in the loop $j$ is incremented, the values $C^L_{i, j}$ and $C^L_{i + 1, j}$ for the new $j$ are already computed, and computing $C^L_{i, j + 1}$ and $C^L_{i + 1, j + 1}$ using horizontally adjacent recomputation takes $\Oh(\deltaj{j}(C^L) + 1)$ time.
    Analogously, when $i$ is decremented, it takes $\Oh(\deltai{i}(C^L) + 1)$ time to recompute the values using the vertically adjacent recomputation.
    In total, it takes $\Oh(1 + \sum_{j \in \fragmentco{0}{r-1}} (\deltaj{j}(C^L) + 1) + \sum_{i \in \fragmentco{0}{p-1}} ({\deltai{i}(C^L) + 1})) = \Oh(\delta(C^L) + p + r)$ time to maintain the values of $C^L\fragmentcc{i}{i+1}\fragmentcc{j}{j+1}$.
    Analogously, maintaining the values of $C^R\fragmentcc{i}{i+1}\fragmentcc{j}{j+1}$ takes $\Oh(\delta(C^R) + p + r)$ time.
    Overall, it takes $\Oh(\delta(C^L) + \delta(C^R) + p + r) = \Oh(\delta(A) + \delta(B))$ time to traverse the matrix $C$.

    Finally, filtering out the elements of $\core(C^L)$ and $\core(C^R)$ and computing the values of the topmost row and the leftmost column of $C$ takes $\Oh(\delta(C^L) + \delta(C^R) + p + r) = \Oh(\delta(A) + \delta(B))$ time.
    Hence, the complete call without recursive calls takes time $\Oh(p^* + q^* + r^* + \delta(A^*) + \delta(B^*))$.
    
    We now prove that, including the recursive calls, the algorithm takes $\Oh(p^* + q^* + r^* + (\delta(A^*) + \delta(B^*)) \log (1 + \delta(A^*) + \delta(B^*)))$ time.
    Note that a call with parameters $(p^*_i, q^*_i, r^*_i, \delta(A^*_i), \delta(B^*_i))$ makes two recursive calls with parameters $(p_i, m_i, r_i, \delta(A^L_i), \delta(B^L_i))$ and $(p_i, q_i - m_i, r_i, \delta(A^R_i), \delta(B^R_i))$, where $p_i \le \delta(A^*_i) + 1 \le 2 \cdot (\delta(A^*_i) + \delta(B^*_i))$,\footnote{This inequality holds because further recursive calls are made only if $\max\{p_i, r_i\} > 1$, and thus $\max\{\delta(A^*_i), \delta(B^*_i)\} > 0$.} $q_i \le \delta(A^*_i) + \delta(B^*_i) + 1 \le 2 \cdot (\delta(A^*_i) + \delta(B^*_i))$, $r_i \le \delta(B^*_i) + 1 \le 2 \cdot (\delta(A^*_i) + \delta(B^*_i))$, $\delta(A^L_i) + \delta(A^R_i) \le \delta(A_i^*)$, and $\delta(B^L_i) + \delta(B^R_i) \le \delta(B_i^*)$.
    Therefore, at each level of the recursion except for the first one, the sum of $\delta(A^*_i)$ is bounded by the initial $\delta(A^*)$, the sum of $\delta(B^*_i)$ is bounded by the initial $\delta(B^*)$, the sum of $p^*_i$ is bounded by $4 \cdot (\delta(A^*) + \delta(B^*))$, the sum of $q^*_i$ is bounded by $2 \cdot (\delta(A^*) + \delta(B^*))$, and the sum of $r^*_i$ is bounded by $4 \cdot (\delta(A^*) + \delta(B^*))$.
    Hence, at each level of the recursion except for the first one, the algorithm works in total time $\Oh(\delta(A^*) + \delta(B^*))$.
    At the first level of the recursion, the algorithm takes $\Oh(p^* + q^* + r^* + \delta(A^*) + \delta(B^*))$ time.
    
    Therefore, it remains to show that there are at most $1 + \log(1 + \delta(A^*) + \delta(B^*))$ levels of recursion in total.
    We show that $\delta(A^L_i) + \delta(B^L_i) \le (\delta(A_i) + \delta(B_i)) / 2$ and $\delta(A^R_i) + \delta(B^R_i) \le (\delta(A_i) + \delta(B_i)) / 2$.
    The inequality $\delta(A^L_i) + \delta(B^L_i) \le (\delta(A_i) + \delta(B_i)) / 2$ follows from the definition of $m_i$.
    On the other hand, as $m \in \fragmentoc{0}{q}$ is the largest index with this property, we have that $\delta(A_i\fragmentco{0}{p}\fragmentcc{0}{m}) + \delta(B_i\fragmentcc{0}{m}\fragmentco{0}{r}) > (\delta(A_i) + \delta(B_i)) / 2$, and thus $\delta(A^R_i) + \delta(B^R_i) = (\delta(A_i) - \delta(A_i\fragmentco{0}{p}\fragmentcc{0}{m})) + (\delta(B_i) - \delta(B_i\fragmentcc{0}{m}\fragmentco{0}{r})) \le (\delta(A_i) + \delta(B_i)) / 2$.
    Thus, after at most $\log(1 + \delta(A^*) + \delta(B^*))$ iterations of the recursion, we have $\delta(A^*_i) + \delta(B^*_i) = 0$; such a recursive call is a leaf recursive call since $p_i \le \delta(A^*_i) + 1 = 1$ and $r_i \le \delta(B^*_i) + 1 = 1$.
\end{proof}

\lmproductwitnessreconstruction*

\label{sec:appendix-lmproductwitnessreconstruction}

\begin{proof}

    We slightly modify the algorithm of \cref{lm:core-based-multiplication-algorithm}.
    In addition to the returned condensed representation of $C^*$, we also store the value $m$ and the computed list $lst$ at each non-leaf level of the recursion, and $\wit{A}{B}_{0, 0}$ in the leaf case of $p=r=1$.
    Furthermore, in the compression-decompression algorithm of \cref{lm:compression}, we store, for all rows and columns of $A$ and $B$ to which rows and columns of $A^*$ and $B^*$ they correspond.
    In particular, for every $k \in \fragmentco{0}{q}$, let $h(k)$ be the column of $A^*$ that corresponds to the $k$-th column of $A$.
    Note that if the $i$-th row of $A^*$ is redundant, then $\wit{A^*}{B^*}\fragmentcc{i}{i}\fragmentco{0}{r^*} = \wit{A^*}{B^*}\fragmentcc{i - 1}{i-1}\fragmentco{0}{r^*}$, and thus, if row $i$ of $C^*$ corresponds to row $i'$ of $C$ and column $j$ of $C^*$ corresponds to column $j'$ of $C$, we have $\wit{A^*}{B^*}_{i, j} = h(\wit{A}{B}_{i', j'})$.

    To compute $\wit{A^*}{B^*}_{i, j}$, we start at the root of the recursion.
    In constant time, we replace $(i,j)$ with $(i',j')$ such that $\wit{A^*}{B^*}_{i, j} = h(\wit{A}{B}_{i', j'})$.
    After that, we know that $\wit{A}{B}_{i', j'} < m$ if and only if $j' \le lst_{i'}$.
    This can be decided in constant time.
    If $j' \le lst_{i'}$, we have $\wit{A}{B}_{i', j'} = \wit{A^L}{B^L}_{i', j'}$, which can be computed recursively.
    Otherwise, if $j' > lst_{i'}$, we have $\wit{A}{B}_{i', j'} = \wit{A^R}{B^R}_{i', j'} + m$, and $\wit{A^R}{B^R}_{i', j'}$ can be computed recursively.
    In the leaf level of the recursion, the smallest witness of the only entry of $C$ is stored explicitly.

    As the depth of the recursion is at most $1 + \log(1 + \delta(A^*) + \delta(B^*))$, and we spend constant time at each level of the recursion, the total time complexity for the data structure query is $\Oh(\log(2 + \delta(A^*) + \delta(B^*)))$.

    \medskip

    It remains to show how to store such a data structure in linear space.
    For each recursive call, we store some constant number of integers and some lists.
    The lists we store are the $lst$ list and the correspondence between rows and columns of $A$ and $A^*$ and $B$ and $B^*$.
    All these lists are monotone, of length at most $p^* + q^* + r^*$, and with values in $\fragmentcc{-1}{p^* + q^* + r^*}$, so they can be stored as bit-masks occupying $\Oh(1 + (p^* + q^* + r^*) / w)$ space in total, where $w$ is the machine word size.
    Using the rank-select data structure \cite{Jacobson1989, Munro1996, Clark1996, BGKS2015, MNV2016}, we provide constant-time access to lists, so the query time complexity does not change.
    Including this data structure affects neither the asymptotic space complexity nor the construction time.

    It remains to analyze the space complexity of the whole data structure.
    The same way as in the analysis of the algorithm of \cref{lm:core-based-multiplication-algorithm}, the sum of $p^*_i + q^*_i + r^*_i$ over all recursive calls is at most $\Oh(p^* + q^* + r^* + (\delta(A^*) + \delta(B^*)) \cdot \log(1 + \delta(A^*) + \delta(B^*)))$.
    Since ${\log(1 + \delta(A^*) + \delta(B^*))}=\Oh(w)$, we get that the sum of $\Oh((p^*_i + q^*_i + r^*_i) / w)$ over all recursive calls is limited by $\Oh(p^* + q^* + r^* + \delta(A^*) + \delta(B^*))$.
    Furthermore, since there are at most $1 + \log(1 + \delta(A^*) + \delta(B^*))$ recursion levels, there are $\Oh(1 + \delta(A^*) + \delta(B^*))$ recursive calls in total, and thus the sum of $\Oh(1 + (p^*_i + q^*_i + r^*_i) / w)$ over all recursive calls is bounded by $\Oh(p^* + q^* + r^* + \delta(A^*) + \delta(B^*))$.
\end{proof}

\section{Range LIS Queries: Formal Proofs}\label{app:range-lis}

In this section we give all necessary definitions and formal proofs of the algorithms described in \cref{sec:range-lis}.
We start by defining the problems at hand.

\begin{definition}
    The \emph{Longest Increasing Subsequence (LIS)} problem, given a sequence $s\fragmentco{0}{n}$, asks to find the maximum length of a (not necessarily contiguous) subsequence of $s$ with strictly increasing values.
    The \emph{reporting} version further asks to output the underlying subsequence itself.

    The problem of \emph{Range LIS Value Queries} asks to preprocess a given sequence $s$ and then answer queries for LIS length on contiguous subsegments of $s$.
    The \emph{Range LIS Reporting Queries} ask to solve the reporting version of the LIS problem on contiguous subsegments of $s$.
\end{definition}

We now define the notion of a subpermutation that allows some elements of the sequence $s$ to be inaccessible.

\begin{definition}
    A sequence $(s\position{i})_{i \in \fragmentco{0}{n}}$ is called a \emph{subpermutation} if $s\position{i} \in \fragmentco{0}{n} \cup \{\plh\}$ for all $i \in \fragmentco{0}{n}$, and the subsequence of $s$ consisting of all elements that are not equal to~$\plh$ is a permutation.
    Such a subsequence is called the \emph{underlying permutation} of $s$ and is denoted by $s^*$.

    In other words, a subpermutation is a permutation in which the \emph{placeholder} symbol $\plh$ is inserted at some positions.
\end{definition}

\begin{definition}
    For a subpermutation $s$ of length $n$ and indices $i, j \in \fragmentcc{0}{n}$ with $i < j$, let $\lis(s\fragmentco{i}{j})$ be the length of the longest increasing subsequence of $s\fragmentco{i}{j}$ that does not contain any placeholder symbols $\plh$.
\end{definition}

\begin{definition}\label{def:alignment-graph}
    For a subpermutation $s$ of length $n$, we define the following edge-weighted directed graph called the \emph{alignment graph} of $s$ and denoted by $G^s$ (see \cref{fig:alignment-graph} for an example).
    The vertex set of $G^s$ is $\fragmentcc{0}{n} \times \fragmentcc{0}{|s^*|}$.
    Furthermore, $G^s$ contains the following edges:
    \begin{itemize}
        \item $(x, y) \to (x, y + 1)$ of weight $0$ for all $x \in \fragmentcc{0}{n}$ and $y \in \fragmentco{0}{|s^*|}$;
        \item $(x, y) \to (x + 1, y)$ of weight $0$ for all $x \in \fragmentco{0}{n}$ and $y \in \fragmentcc{0}{|s^*|}$;
        \item $(i, s\position{i}) \to (i + 1, s\position{i} + 1)$ of weight $1$ for all $i \in \fragmentco{0}{n}$ with $s\position{i} \neq \plh$;
        \item $(x, y) \to (x - 1, y)$ of weight $-2$ for all $x \in \fragmentoc{0}{n}$ and $y \in \fragmentcc{0}{|s^*|}$.
    \end{itemize}

    \begin{figure}
        \begin{center}
            \begin{tikzpicture}
                \node[inner sep = 1pt,circle,fill=black] (n0_0) at (0,0) {};
\node[inner sep = 1pt,circle,fill=black] (n0_1) at (0,1) {};
\draw[-latex,darkred] (n0_0) -- (n0_1);
\node[inner sep = 1pt,circle,fill=black] (n0_2) at (0,2) {};
\draw[-latex,darkred] (n0_1) -- (n0_2);
\node[inner sep = 1pt,circle,fill=black] (n0_3) at (0,3) {};
\draw[-latex,darkred] (n0_2) -- (n0_3);
\node[inner sep = 1pt,circle,fill=black] (n0_4) at (0,4) {};
\draw[-latex,darkred] (n0_3) -- (n0_4);
\node[inner sep = 1pt,circle,fill=black] (n0_5) at (0,5) {};
\draw[-latex,darkred] (n0_4) -- (n0_5);
\node[inner sep = 1pt,circle,fill=black] (n1_0) at (1,0) {};
\draw[-latex,darkred] (n0_0) to[out=45,in=135] (n1_0);
\draw[-latex,darkblue] (n1_0) to[out=225,in=315] (n0_0);
\node[inner sep = 1pt,circle,fill=black] (n1_1) at (1,1) {};
\draw[-latex,darkred] (n0_1) to[out=45,in=135] (n1_1);
\draw[-latex,darkblue] (n1_1) to[out=225,in=315] (n0_1);
\draw[-latex,darkred] (n1_0) -- (n1_1);
\node[inner sep = 1pt,circle,fill=black] (n1_2) at (1,2) {};
\draw[-latex,darkred] (n0_2) to[out=45,in=135] (n1_2);
\draw[-latex,darkblue] (n1_2) to[out=225,in=315] (n0_2);
\draw[-latex,darkred] (n1_1) -- (n1_2);
\draw[-latex,very thick,darkgreen] (n0_1) -- (n1_2);
\node[inner sep = 1pt,circle,fill=black] (n1_3) at (1,3) {};
\draw[-latex,darkred] (n0_3) to[out=45,in=135] (n1_3);
\draw[-latex,darkblue] (n1_3) to[out=225,in=315] (n0_3);
\draw[-latex,darkred] (n1_2) -- (n1_3);
\node[inner sep = 1pt,circle,fill=black] (n1_4) at (1,4) {};
\draw[-latex,darkred] (n0_4) to[out=45,in=135] (n1_4);
\draw[-latex,darkblue] (n1_4) to[out=225,in=315] (n0_4);
\draw[-latex,darkred] (n1_3) -- (n1_4);
\node[inner sep = 1pt,circle,fill=black] (n1_5) at (1,5) {};
\draw[-latex,darkred] (n0_5) to[out=45,in=135] (n1_5);
\draw[-latex,darkblue] (n1_5) to[out=225,in=315] (n0_5);
\draw[-latex,darkred] (n1_4) -- (n1_5);
\node[inner sep = 1pt,circle,fill=black] (n2_0) at (2,0) {};
\draw[-latex,darkred] (n1_0) to[out=45,in=135] (n2_0);
\draw[-latex,darkblue] (n2_0) to[out=225,in=315] (n1_0);
\node[inner sep = 1pt,circle,fill=black] (n2_1) at (2,1) {};
\draw[-latex,darkred] (n1_1) to[out=45,in=135] (n2_1);
\draw[-latex,darkblue] (n2_1) to[out=225,in=315] (n1_1);
\draw[-latex,darkred] (n2_0) -- (n2_1);
\draw[-latex,very thick,darkgreen] (n1_0) -- (n2_1);
\node[inner sep = 1pt,circle,fill=black] (n2_2) at (2,2) {};
\draw[-latex,darkred] (n1_2) to[out=45,in=135] (n2_2);
\draw[-latex,darkblue] (n2_2) to[out=225,in=315] (n1_2);
\draw[-latex,darkred] (n2_1) -- (n2_2);
\node[inner sep = 1pt,circle,fill=black] (n2_3) at (2,3) {};
\draw[-latex,darkred] (n1_3) to[out=45,in=135] (n2_3);
\draw[-latex,darkblue] (n2_3) to[out=225,in=315] (n1_3);
\draw[-latex,darkred] (n2_2) -- (n2_3);
\node[inner sep = 1pt,circle,fill=black] (n2_4) at (2,4) {};
\draw[-latex,darkred] (n1_4) to[out=45,in=135] (n2_4);
\draw[-latex,darkblue] (n2_4) to[out=225,in=315] (n1_4);
\draw[-latex,darkred] (n2_3) -- (n2_4);
\node[inner sep = 1pt,circle,fill=black] (n2_5) at (2,5) {};
\draw[-latex,darkred] (n1_5) to[out=45,in=135] (n2_5);
\draw[-latex,darkblue] (n2_5) to[out=225,in=315] (n1_5);
\draw[-latex,darkred] (n2_4) -- (n2_5);
\node[inner sep = 1pt,circle,fill=black] (n3_0) at (3,0) {};
\draw[-latex,darkred] (n2_0) to[out=45,in=135] (n3_0);
\draw[-latex,darkblue] (n3_0) to[out=225,in=315] (n2_0);
\node[inner sep = 1pt,circle,fill=black] (n3_1) at (3,1) {};
\draw[-latex,darkred] (n2_1) to[out=45,in=135] (n3_1);
\draw[-latex,darkblue] (n3_1) to[out=225,in=315] (n2_1);
\draw[-latex,darkred] (n3_0) -- (n3_1);
\node[inner sep = 1pt,circle,fill=black] (n3_2) at (3,2) {};
\draw[-latex,darkred] (n2_2) to[out=45,in=135] (n3_2);
\draw[-latex,darkblue] (n3_2) to[out=225,in=315] (n2_2);
\draw[-latex,darkred] (n3_1) -- (n3_2);
\node[inner sep = 1pt,circle,fill=black] (n3_3) at (3,3) {};
\draw[-latex,darkred] (n2_3) to[out=45,in=135] (n3_3);
\draw[-latex,darkblue] (n3_3) to[out=225,in=315] (n2_3);
\draw[-latex,darkred] (n3_2) -- (n3_3);
\node[inner sep = 1pt,circle,fill=black] (n3_4) at (3,4) {};
\draw[-latex,darkred] (n2_4) to[out=45,in=135] (n3_4);
\draw[-latex,darkblue] (n3_4) to[out=225,in=315] (n2_4);
\draw[-latex,darkred] (n3_3) -- (n3_4);
\draw[-latex,very thick,darkgreen] (n2_3) -- (n3_4);
\node[inner sep = 1pt,circle,fill=black] (n3_5) at (3,5) {};
\draw[-latex,darkred] (n2_5) to[out=45,in=135] (n3_5);
\draw[-latex,darkblue] (n3_5) to[out=225,in=315] (n2_5);
\draw[-latex,darkred] (n3_4) -- (n3_5);
\node[inner sep = 1pt,circle,fill=black] (n4_0) at (4,0) {};
\draw[-latex,darkred] (n3_0) to[out=45,in=135] (n4_0);
\draw[-latex,darkblue] (n4_0) to[out=225,in=315] (n3_0);
\node[inner sep = 1pt,circle,fill=black] (n4_1) at (4,1) {};
\draw[-latex,darkred] (n3_1) to[out=45,in=135] (n4_1);
\draw[-latex,darkblue] (n4_1) to[out=225,in=315] (n3_1);
\draw[-latex,darkred] (n4_0) -- (n4_1);
\node[inner sep = 1pt,circle,fill=black] (n4_2) at (4,2) {};
\draw[-latex,darkred] (n3_2) to[out=45,in=135] (n4_2);
\draw[-latex,darkblue] (n4_2) to[out=225,in=315] (n3_2);
\draw[-latex,darkred] (n4_1) -- (n4_2);
\node[inner sep = 1pt,circle,fill=black] (n4_3) at (4,3) {};
\draw[-latex,darkred] (n3_3) to[out=45,in=135] (n4_3);
\draw[-latex,darkblue] (n4_3) to[out=225,in=315] (n3_3);
\draw[-latex,darkred] (n4_2) -- (n4_3);
\node[inner sep = 1pt,circle,fill=black] (n4_4) at (4,4) {};
\draw[-latex,darkred] (n3_4) to[out=45,in=135] (n4_4);
\draw[-latex,darkblue] (n4_4) to[out=225,in=315] (n3_4);
\draw[-latex,darkred] (n4_3) -- (n4_4);
\node[inner sep = 1pt,circle,fill=black] (n4_5) at (4,5) {};
\draw[-latex,darkred] (n3_5) to[out=45,in=135] (n4_5);
\draw[-latex,darkblue] (n4_5) to[out=225,in=315] (n3_5);
\draw[-latex,darkred] (n4_4) -- (n4_5);
\node[inner sep = 1pt,circle,fill=black] (n5_0) at (5,0) {};
\draw[-latex,darkred] (n4_0) to[out=45,in=135] (n5_0);
\draw[-latex,darkblue] (n5_0) to[out=225,in=315] (n4_0);
\node[inner sep = 1pt,circle,fill=black] (n5_1) at (5,1) {};
\draw[-latex,darkred] (n4_1) to[out=45,in=135] (n5_1);
\draw[-latex,darkblue] (n5_1) to[out=225,in=315] (n4_1);
\draw[-latex,darkred] (n5_0) -- (n5_1);
\node[inner sep = 1pt,circle,fill=black] (n5_2) at (5,2) {};
\draw[-latex,darkred] (n4_2) to[out=45,in=135] (n5_2);
\draw[-latex,darkblue] (n5_2) to[out=225,in=315] (n4_2);
\draw[-latex,darkred] (n5_1) -- (n5_2);
\node[inner sep = 1pt,circle,fill=black] (n5_3) at (5,3) {};
\draw[-latex,darkred] (n4_3) to[out=45,in=135] (n5_3);
\draw[-latex,darkblue] (n5_3) to[out=225,in=315] (n4_3);
\draw[-latex,darkred] (n5_2) -- (n5_3);
\draw[-latex,very thick,darkgreen] (n4_2) -- (n5_3);
\node[inner sep = 1pt,circle,fill=black] (n5_4) at (5,4) {};
\draw[-latex,darkred] (n4_4) to[out=45,in=135] (n5_4);
\draw[-latex,darkblue] (n5_4) to[out=225,in=315] (n4_4);
\draw[-latex,darkred] (n5_3) -- (n5_4);
\node[inner sep = 1pt,circle,fill=black] (n5_5) at (5,5) {};
\draw[-latex,darkred] (n4_5) to[out=45,in=135] (n5_5);
\draw[-latex,darkblue] (n5_5) to[out=225,in=315] (n4_5);
\draw[-latex,darkred] (n5_4) -- (n5_5);
\node[inner sep = 1pt,circle,fill=black] (n6_0) at (6,0) {};
\draw[-latex,darkred] (n5_0) to[out=45,in=135] (n6_0);
\draw[-latex,darkblue] (n6_0) to[out=225,in=315] (n5_0);
\node[inner sep = 1pt,circle,fill=black] (n6_1) at (6,1) {};
\draw[-latex,darkred] (n5_1) to[out=45,in=135] (n6_1);
\draw[-latex,darkblue] (n6_1) to[out=225,in=315] (n5_1);
\draw[-latex,darkred] (n6_0) -- (n6_1);
\node[inner sep = 1pt,circle,fill=black] (n6_2) at (6,2) {};
\draw[-latex,darkred] (n5_2) to[out=45,in=135] (n6_2);
\draw[-latex,darkblue] (n6_2) to[out=225,in=315] (n5_2);
\draw[-latex,darkred] (n6_1) -- (n6_2);
\node[inner sep = 1pt,circle,fill=black] (n6_3) at (6,3) {};
\draw[-latex,darkred] (n5_3) to[out=45,in=135] (n6_3);
\draw[-latex,darkblue] (n6_3) to[out=225,in=315] (n5_3);
\draw[-latex,darkred] (n6_2) -- (n6_3);
\node[inner sep = 1pt,circle,fill=black] (n6_4) at (6,4) {};
\draw[-latex,darkred] (n5_4) to[out=45,in=135] (n6_4);
\draw[-latex,darkblue] (n6_4) to[out=225,in=315] (n5_4);
\draw[-latex,darkred] (n6_3) -- (n6_4);
\node[inner sep = 1pt,circle,fill=black] (n6_5) at (6,5) {};
\draw[-latex,darkred] (n5_5) to[out=45,in=135] (n6_5);
\draw[-latex,darkblue] (n6_5) to[out=225,in=315] (n5_5);
\draw[-latex,darkred] (n6_4) -- (n6_5);
\draw[-latex,very thick,darkgreen] (n5_4) -- (n6_5);
\draw (0, -0.25) node[below]{$\mathtt{0}$};
\draw (1, -0.25) node[below]{$\mathtt{1}$};
\draw (2, -0.25) node[below]{$\mathtt{2}$};
\draw (3, -0.25) node[below]{$\mathtt{3}$};
\draw (4, -0.25) node[below]{$\mathtt{4}$};
\draw (5, -0.25) node[below]{$\mathtt{5}$};
\draw (6, -0.25) node[below]{$\mathtt{6}$};
\draw (-0.25,0) node[left]{$\mathtt{0}$};
\draw (-0.25,1) node[left]{$\mathtt{1}$};
\draw (-0.25,2) node[left]{$\mathtt{2}$};
\draw (-0.25,3) node[left]{$\mathtt{3}$};
\draw (-0.25,4) node[left]{$\mathtt{4}$};
\draw (-0.25,5) node[left]{$\mathtt{5}$};
            \end{tikzpicture}
        \end{center}
    
        \caption{The alignment graph for $s = [1, 0, 3, \plh, 2, 4]$, with red edges of weight $0$, green edges of weight $1$, and blue edges of weight $-2$.}
        \label{fig:alignment-graph}
    \end{figure}
\end{definition}

We now show that the longest paths in $G^s$ represent the answers to the range LIS queries on $s$.

\begin{lemma} \label{lm:no-positive-cycles}
    For a subpermutation $s$, the graph $G^s$ does not contain any cycles of nonnegative length.
\end{lemma}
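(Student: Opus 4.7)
The plan is to classify the edges of $G^s$ by their effect on the coordinates and then show that any closed walk must consist of a balanced number of rightward and leftward horizontal edges, with no vertical or diagonal edges at all.

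First, I would categorize the four edge types of \cref{def:alignment-graph} according to their $(x,y)$-displacements: a \emph{right} edge has displacement $(+1,0)$ and weight $0$; an \emph{up} edge has $(0,+1)$ and weight $0$; a \emph{diagonal} edge has $(+1,+1)$ and weight $1$; and a \emph{left} edge has $(-1,0)$ and weight $-2$. The key observation is that these are the only edges, and in particular no edge has a negative $y$-displacement.

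Next, given any directed cycle in $G^s$, let $a,b,c,d$ be the number of right, left, up, and diagonal edges along it, respectively. Closing up on the $y$-axis forces $c+d=0$; since $c,d\ge 0$, this gives $c=d=0$. Closing up on the $x$-axis then forces $a+d-b=0$, so $a=b$. Thus the cycle uses only horizontal edges, in equal numbers. The total weight along the cycle is therefore $0\cdot a + (-2)\cdot b + 0\cdot c + 1\cdot d = -2a$. Any actual cycle has at least one edge, so $a=b\ge 1$ and the total weight is at most $-2<0$, contradicting the assumption that the cycle has nonnegative length.

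I do not foresee any real obstacle: the proof is essentially bookkeeping on the four edge categories, relying only on the fact that $G^s$ has no edge with a negative $y$-component. The only thing to be slightly careful about is counting the diagonal edges correctly (they contribute to both the $x$- and $y$-balance while also having weight $+1$), but the sign of their weight is irrelevant once we conclude $d=0$.
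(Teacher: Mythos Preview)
Your proposal is correct and follows essentially the same approach as the paper: both arguments use the observation that no edge of $G^s$ decreases the $y$-coordinate to conclude that a cycle cannot contain any up or diagonal edges, leaving only horizontal edges whose $x$-balance forces at least one left edge of weight $-2$. The paper's version is slightly terser (it does not explicitly set up the counts $a,b,c,d$), but the underlying reasoning is identical.
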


\begin{proof}
    Consider some cycle $C$ in $G^s$.
    As $G^s$ does not contain any edges that decrease the second coordinate of a vertex, no cycle in $G^s$ may contain edges of the first and the third types from \cref{def:alignment-graph} as they increase the second coordinate of a vertex.
    As no cycle can consist of only edges of the second type, every cycle contains at least one edge of the fourth type and thus has a strictly negative length.
\end{proof}

\begin{definition}
    For a subpermutation $s$ of length $n$, define the \emph{distance matrix} $M^s$ of size $(n + 1) \times (n + 1)$ such that $M^s_{i, j}$ for any $i, j \in \fragmentcc{0}{n}$ is equal to the length of the longest path in $G^s$ from $(i, 0)$ to $(j, |s^*|)$.
\end{definition}
    
Note that the matrix $M^s$ is well-defined because, for every $i, j \in \fragmentcc{0}{n}$, there is a path from $(i, 0)$ to $(j, |s^*|)$ in $G^s$, and since $G^s$ does not contain any cycles of nonnegative length by \cref{lm:no-positive-cycles}, there exists a longest such path.
Furthermore, the matrix $M^s$ is anti-Monge by the following fact. (If we negate the weights in $G^s$, we obtain that the matrix $-M^s$ of the shortest paths in such a graph is Monge.)

\begin{fact}[{\cite[Section 2.3]{FR06}}]\label{fct:monge-planar}
    Consider a weighted directed planar graph $G$. 
    For vertices $u_0,\ldots,u_{p-1},v_{q-1},\allowbreak \ldots,v_0$ lying (in this cyclic order, potentially with repetitions) on the outer face of $G$, define a $p\times q$ matrix $D$ with $D_{i,j}=\dist_G(u_i,v_j)$, where $\dist_G$ denotes the shortest path distance in $G$.
    If all entries of $D$ are finite, then $D$ is a Monge matrix.
\end{fact}

\begin{figure}[hbt]
    \begin{center}
        \includegraphics[scale=0.5]{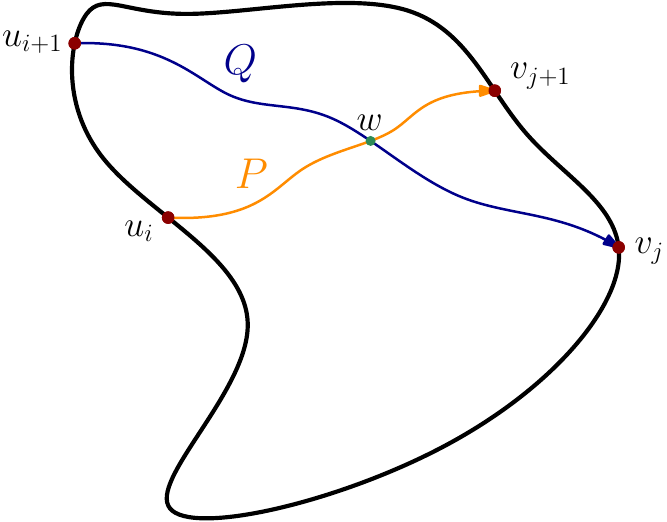}
    \end{center}

    \caption{The sum of the distances from $u_i$ to $v_{j + 1}$ and from $u_{i + 1}$ to $v_{j}$ is not smaller than the sum of the distances from $u_i$ to $v_j$ and from $u_{i + 1}$ to $v_{j + 1}$.}
    \label{fig:monge-planar}
\end{figure}

\begin{proof}[Proof Sketch]
    We need to prove $D_{i, j} + D_{i + 1, j + 1} \le D_{i, j + 1} + D_{i + 1, j}$ for $i \in \fragmentco{0}{p-1}$ and $j \in \fragmentco{0}{q-1}$.
    Note that $D_{i, j + 1}=\dist_G(u_i,v_{j+1})$ and $D_{i + 1, j}=\dist_G(u_{i+1},v_j)$.
    Fix a shortest path $P$ from $u_i$ to $v_{j + 1}$ and a shortest path $Q$ from $u_{i + 1}$ to $v_j$; see \cref{fig:monge-planar}.
    As these four vertices lie in the order $u_i, u_{i + 1}, v_{j + 1}, v_j$ on the outer face of $G$ (some of these vertices may coincide with each other), the paths $P$ and $Q$ intersect at a common vertex $w \in V(G)$.
    As $P$ and $Q$ are shortest paths, we have
    \begin{align*}
        D_{i, j + 1}\ + D_{i + 1, j} &= (\dist_G(u_i, w) + \dist_G(w, v_{j + 1})) + (\dist_G(u_{i + 1}, w) + \dist_G(w, v_j))\\
                                     &= (\dist_G(u_i, w) + \dist_G(w, v_{j})) + (\dist_G(u_{i + 1}, w) + \dist_G(w, v_{j + 1}))\\
                                     &\ge D_{i, j} + D_{i + 1, j + 1}.\qedhere
    \end{align*}
\end{proof}

We now show that the matrix $M^s$ contains all range LIS values.

\begin{lemma}\label{lm:ms-meaning}
    For a subpermutation $s$ of length $n$ and any $i, j \in \fragmentcc{0}{n}$, we have
    \[
        M^s_{i, j} = 
        \begin{cases}
            \lis(s\fragmentco{i}{j}), & \text{if $i < j$;}\\
            -2 \cdot (i - j), & \text{if $i \ge j$.}
        \end{cases}
    \]
    Furthermore, all corresponding longest paths are monotone in both coordinates.
\end{lemma}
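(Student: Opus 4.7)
The plan is to handle the cases $i \ge j$ and $i < j$ separately and establish matching upper and lower bounds in each; monotonicity of every longest path will then follow from analyzing when these bounds are tight.

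The case $i \ge j$ is handled by direct edge counting. An explicit witness is the monotone path using $|s^*|$ up-edges from $(i, 0)$ to $(i, |s^*|)$ followed by $i - j$ left-edges to $(j, |s^*|)$, giving weight $-2(i-j)$. For the upper bound, if a path uses $n_1, n_2, n_3, n_4$ up-, right-, diagonal-, and left-edges respectively, then $n_1 + n_3 = |s^*|$ and $n_2 + n_3 - n_4 = j - i$; together with $n_3 \ge 0$ these force $n_4 \ge n_2 + (i - j)$, so the total weight $n_3 - 2n_4 = j - i - n_2 - n_4$ is at most $-2(i-j)$. Equality requires $n_2 = n_3 = 0$, so only up- and left-edges are used, which is monotone in both coordinates.

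For $i < j$, the lower bound follows from any LIS of $s\fragmentco{i}{j}$: given positions $i \le k_1 < \ldots < k_\ell < j$ with $s[k_1] < \ldots < s[k_\ell]$ and $\ell = \lis(s\fragmentco{i}{j})$, the monotone path that reaches each $(k_r, s[k_r])$ via right- and up-edges, takes the diagonal to $(k_r + 1, s[k_r] + 1)$, and finally advances to $(j, |s^*|)$ has weight exactly $\ell$.

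The main obstacle is the matching upper bound for $i < j$. By \cref{lm:no-positive-cycles}, any walk can be simplified without losing weight, so we may restrict to simple paths. I would first argue that a longest path can be assumed to stay inside the rectangle $[i, j] \times [0, |s^*|]$: any maximal sub-walk with $x < i$ has zero net $x$-displacement, hence $n_4^{\text{exc}} = n_2^{\text{exc}} + n_3^{\text{exc}}$, so its weight is $n_3^{\text{exc}} - 2n_4^{\text{exc}} = -n_3^{\text{exc}} - 2n_2^{\text{exc}} \le 0$; replacing such a sub-walk by a vertical segment of up-edges only increases the weight, and simplicity is restored by removing any resulting cycles (again using \cref{lm:no-positive-cycles}). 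The symmetric argument handles excursions with $x > j$. On the restricted grid, I would define the potential $\phi(x, y)$ to be the maximum length of an increasing subsequence of $s\fragmentco{i}{x}$ using only values strictly less than $y$, and verify $\phi(v) - \phi(u) \ge w(u, v)$ for every edge $u \to v$: monotonicity of $\phi$ in both arguments handles up- and right-edges, extending a witnessing subsequence by $s[x]$ handles the diagonal edge at $(x, s[x])$, and the fact that inserting one position raises $\phi$ by at most $1 \le 2$ handles left-edges. Summing reduced weights along any path from $(i, 0)$ to $(j, |s^*|)$ then yields weight at most $\phi(j, |s^*|) - \phi(i, 0) = \lis(s\fragmentco{i}{j})$. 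Moreover, every left-edge has reduced weight at most $-1 < 0$, so any path still using a left-edge attains strictly less than $\lis(s\fragmentco{i}{j})$; combined with the excursion argument, this forces every longest path to be monotone in $x$ as well as in $y$.
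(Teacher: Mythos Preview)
Your proof is correct. For $i \ge j$ you do exactly what the paper does (edge counting with the same constraints, just with different variable names). For $i < j$ you take a genuinely different route.

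The paper's argument for $i < j$ is more direct: it first proves that every longest path is monotone in $x$ by observing that any non-monotone path contains a subpath from some $(x, y_1)$ to $(x, y_2)$ whose intermediate vertices all lie off column $x$; by the already-established $i = j$ case applied to this subpath, its weight is at most $0$, with equality only for the purely vertical subpath, contradicting maximality of $P$. Once monotonicity is in hand, the bijection between monotone paths (using only up/right/diagonal edges) and increasing subsequences of $s\fragmentco{i}{j}$ gives both the value $M^s_{i,j}=\lis(s\fragmentco{i}{j})$ and the characterization of longest paths at once, with no potential function and no separate excursion-removal step.

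Your approach instead introduces the potential $\phi(x,y)$, which yields the upper bound without first establishing monotonicity, at the price of the preliminary restriction to the rectangle $[i,j]\times[0,|s^*|]$ so that $\phi$ is well-defined. The paper's reuse of the $i=j$ analysis on arbitrary return-to-column subpaths is more economical here; your potential argument, on the other hand, is self-contained and makes the deficit of each left-edge explicitly at least $1$, which is a pleasant quantitative by-product. Two minor remarks: first, your formula $-n_3^{\text{exc}} - 2n_2^{\text{exc}}$ for the excursion weight is already \emph{strictly} negative for any non-trivial excursion (since $n_2^{\text{exc}} + n_3^{\text{exc}} = n_4^{\text{exc}} \ge 1$), which is what you need to conclude that \emph{every} longest path avoids excursions; second, because the $y$-coordinate is non-decreasing along any path, replacing an excursion by a vertical segment cannot create a repeated vertex, so the appeal to \cref{lm:no-positive-cycles} for restoring simplicity is unnecessary.
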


\begin{proof}
    First, assume that $i \ge j$.
    Consider some path $P$ from $(i, 0)$ to $(j, |s^*|)$.
    Let $k$ be the number of edges of the third type from \cref{def:alignment-graph} on this path, and $\ell$ be the number of edges of the fourth type on this path.
    As edges of the first two types have weights zero, the length of $P$ is $k - 2 \ell$.
    Note that the first coordinate of the vertex along the path $P$ is increased by at least $k$ and decreased by at least $(i - j) + k$.
    Only edges of the fourth type decrease the first coordinate of the current vertex, so $\ell \ge (i - j) + k$.
    Therefore, the length of $P$ is at most $k - 2 \cdot ((i - j) + k) = -2 \cdot (i - j) - k \le -2 \cdot (i - j)$.
    Furthermore, there exists a path of length exactly $-2 \cdot (i - j)$: it first takes vertical edges of weight $0$ from $(i, 0)$ to $(i, |s^*|)$, and then horizontal backward edges of weight $-2$ from $(i, |s^*|)$ to $(j, |s^*|)$.
    Therefore, $M^S_{i, j} = -2 \cdot (i - j)$.
    The only way to achieve this length is for $k = 0$, and thus any such longest path is monotone in both coordinates.

    We now consider the case $i < j$.
    Let $P$ be the longest path from $(i, 0)$ to $(j, |s^*|)$.
    We first show that $P$ is monotone in both coordinates.
    As there are no edges in $G^s$ that decrease the second coordinate of a vertex, $P$ is monotone in the second coordinate.
    It remains to show that $P$ is monotone in the first coordinate.
    For the sake of contradiction, suppose that it is not, that is, there exist two vertices $(x, y_1), (x, y_2) \in P$ such that between them the vertices of $P$ have the first coordinates not equal to $x$.
    Similarly to the case $i \ge j$ (in particular, $i = j$), any such subpath of $P$ from $(x, y_1)$ to $(x, y_2)$ has non-positive length.
    Furthermore, the only path of length zero is the vertical path $(x, y_1) \to (x, y_1 + 1) \to \cdots \to (x, y_2)$.
    Hence, $P$ is not the longest path, thus leading to a contradiction.
    Therefore, indeed, every longest path $P$ from $(i, 0)$ to $(j, |s^*|)$ is monotone in both dimensions.
    Consequently, $P$ only uses the edges of the first three types from \cref{def:alignment-graph}, and thus its length is equal to the number of edges of the third type that it uses.
    It is easy to see that there is a bijection between the increasing subsequences of $s\fragmentco{i}{j}$ and collections of edges of the third type that may lie on a path from $(i, 0)$ to $(j, |s^*|)$.
    Therefore, $M^s_{i, j} = \lis(s\fragmentco{i}{j})$.
\end{proof}

\begin{corollary} \label{lm:ms-no-core-below-diagonal}
    For a subpermutation $s$ and a core element $(i, j, v) \in \core(M^s)$, we have $i \le j$.
\end{corollary}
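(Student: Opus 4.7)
The plan is to argue the contrapositive: whenever $i > j$, the density entry $\dens{M^s}_{i,j}$ vanishes, so such a triple cannot appear in $\core(M^s)$. The strategy is purely computational and relies entirely on the closed form for $M^s$ below the diagonal provided by \cref{lm:ms-meaning}.

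First I would fix $i > j$ in $\fragmentco{0}{n}$ and observe that, since $i \ge j+1$, the four indices involved in $\dens{M^s}_{i,j}$ all satisfy the ``row $\ge$ column'' relation: $i \ge j$, $i \ge j+1$, $i+1 \ge j$, and $i+1 \ge j+1$. By the second case of \cref{lm:ms-meaning}, each of these four entries has the explicit value $-2$ times the difference between its row and column index, namely $M^s_{i,j} = -2(i-j)$, $M^s_{i,j+1} = -2(i-j-1)$, $M^s_{i+1,j} = -2(i-j+1)$, and $M^s_{i+1,j+1} = -2(i-j)$.

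Next I would substitute these values directly into the defining expression of $\dens{M^s}_{i,j}$. Since all four terms are linear in $i-j$ with the same coefficient $-2$, the sum $M^s_{i,j+1}+M^s_{i+1,j}-M^s_{i,j}-M^s_{i+1,j+1}$ telescopes to zero. Hence $\dens{M^s}_{i,j}=0$, so $(i,j,v)$ with $v \neq 0$ cannot lie in $\core(M^s)$; equivalently, every triple in $\core(M^s)$ satisfies $i \le j$.

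There is no significant obstacle here: once the explicit formula of \cref{lm:ms-meaning} is available for all four entries of the $2\times 2$ block, the density automatically vanishes on the strictly-below-diagonal region. The only subtle point to check is that the four indices really do fall into the ``$i \ge j$'' case of \cref{lm:ms-meaning}, which is immediate from $i \ge j+1$.
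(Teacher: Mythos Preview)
Your proposal is correct and takes essentially the same approach as the paper: both argue that for $i>j$ all four entries of the $2\times 2$ block $M^s\fragmentcc{i}{i+1}\fragmentcc{j}{j+1}$ fall into the second case of \cref{lm:ms-meaning}, so $\dens{M^s}_{i,j}=0$. You simply spell out the arithmetic that the paper leaves implicit.
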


\begin{proof}
    For $i > j$, all elements in $M^s\fragmentcc{i}{i + 1}\fragmentcc{j}{j+1}$ satisfy the conditions for the second case of \cref{lm:ms-meaning}, and thus $M^{s \square}_{i, j} = 0$.
\end{proof}

\begin{observation}\label{lm:simple-core-bound}
    For an integer-valued Monge matrix $A$, we have $\delta(A) \le \deltasigma(A)$.
\end{observation}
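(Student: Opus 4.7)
The plan is to observe that the claim follows from two simple facts about integer-valued Monge matrices, combined in one line.

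First, I would recall that for any Monge matrix $A$, every entry of the density matrix $\dens{A}$ is non-negative by \cref{def:monge}: indeed, the inequality $A_{i,j}+A_{i+1,j+1}\le A_{i,j+1}+A_{i+1,j}$ is exactly the statement $\dens{A}_{i,j}\ge 0$. Second, since $A$ is integer-valued, the entries of $\dens{A}$ are integers (as they are obtained by adding and subtracting entries of $A$). Together, every non-zero entry of $\dens{A}$ is a positive integer, and hence satisfies $\dens{A}_{i,j}\ge 1$.

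With this in hand, the chain of inequalities writes itself:
\[
\deltasigma(A) \;=\; \sum_{i,j} \dens{A}_{i,j} \;=\!\!\sum_{(i,j)\,:\,\dens{A}_{i,j}\neq 0}\!\!\dens{A}_{i,j} \;\ge\!\!\sum_{(i,j)\,:\,\dens{A}_{i,j}\neq 0}\!\! 1 \;=\; \delta(A),
\]
where the second equality drops the zero summands and the inequality uses $\dens{A}_{i,j}\ge 1$ for every non-zero term.

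There is no real obstacle here; the only subtlety worth flagging in the write-up is that integrality is genuinely needed (a matrix with, say, $\dens{A}_{i,j}=\tfrac12$ everywhere would violate the bound), which is why the hypothesis explicitly requires $A$ to be integer-valued.
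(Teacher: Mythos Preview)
Your proof is correct and is essentially the same as the paper's: both observe that each non-zero entry of $\dens{A}$ is a positive integer and hence contributes at least $1$ to $\deltasigma(A)$. The paper compresses this into a single sentence, while you spell out the chain of inequalities explicitly.
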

\begin{proof}
    Each value in $\core(A)$ is a positive integer and contributes at least one to $\deltasigma(A)$.
\end{proof}

\begin{corollary}\label{lm:ms-core-bound}
    For a subpermutation $s$ of length $n$, we have $\delta(M^s) = \Oh(n)$.
\end{corollary}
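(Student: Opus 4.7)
The plan is to bound $\delta(M^s)$ by the core sum of the associated Monge matrix and then evaluate that core sum from the four corners of $M^s$ using \cref{lm:calculate-value-from-core}. Since $M^s$ is anti-Monge (by \cref{fct:monge-planar} applied to $G^s$ after negating edge weights, as noted in the footnote following the definition of $M^s$), the matrix $-M^s$ is a Monge matrix, and its density matrix is simply the negation of $\dens{M^s}$ entry-wise, so $\delta(M^s)=\delta(-M^s)$. Because $M^s$ is integer-valued (all edge weights in $G^s$ are integers), so is $-M^s$, and \cref{lm:simple-core-bound} yields $\delta(-M^s)\le \deltasigma(-M^s)$.

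Next I would compute $\deltasigma(-M^s)$ exactly using \cref{lm:calculate-value-from-core} with $a=c=0$ and $b=d=n$ on the $(n+1)\times(n+1)$ matrix $-M^s$. This gives
\[
\deltasigma(-M^s) = (-M^s_{0,n}) + (-M^s_{n,0}) - (-M^s_{0,0}) - (-M^s_{n,n}) = M^s_{n,0} + M^s_{0,n} \cdot (-1) \cdot (-1) \cdots
\]
more cleanly: $\deltasigma(-M^s) = M^s_{0,0} + M^s_{n,n} - M^s_{0,n} - M^s_{n,0}$. I would then evaluate the four corners via \cref{lm:ms-meaning}: $M^s_{0,0}=0$ and $M^s_{n,n}=0$ (using the $i\ge j$ case with $i=j$), $M^s_{n,0}=-2n$ (using the $i\ge j$ case), and $M^s_{0,n}=\lis(s)\le |s^*|\le n$ (using the $i<j$ case). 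Substituting gives $\deltasigma(-M^s)=2n-\lis(s)\le 2n$, and hence $\delta(M^s)=\delta(-M^s)\le 2n = \Oh(n)$.

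There is no real obstacle here: the only subtlety is making sure the sign bookkeeping is right when passing between $M^s$ and $-M^s$, and confirming that the four corner values are each covered by the correct case of \cref{lm:ms-meaning} (in particular that $M^s_{0,0}$ and $M^s_{n,n}$ fall under the $i\ge j$ clause and evaluate to $0$). Once those are in place, combining \cref{lm:simple-core-bound} with \cref{lm:calculate-value-from-core} applied to the whole matrix is a one-line calculation, so the proof is genuinely a short corollary of the preceding lemmas.
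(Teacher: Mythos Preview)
Your proposal is correct and follows essentially the same argument as the paper: apply \cref{lm:simple-core-bound} to the Monge matrix $-M^s$, then use \cref{lm:calculate-value-from-core} on the full matrix together with the corner values from \cref{lm:ms-meaning} to obtain $\delta(M^s)\le 2n-\lis(s)=\Oh(n)$. The only cosmetic difference is that the paper phrases the bound as $\delta(M^s)\le -\deltasigma(M^s)$ rather than passing explicitly to $-M^s$, but this is the same computation.
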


\begin{proof}
    Due to \cref{lm:ms-meaning}, all values of $M^s$ are in $\fragmentcc{-2n}{n}$.
    Due to \cref{lm:calculate-value-from-core}, we have $-\deltasigma(M^s) = M^s_{0, 0} + M^s_{n, n} - M^s_{0, n}- M^s_{n, 0} = 2n-\lis(s)= \Oh(n)$.
    Therefore, we have $\delta(M^s) = \Oh(n)$ due to \cref{lm:simple-core-bound}.\footnote{As $M^s$ is anti-Monge, \cref{lm:simple-core-bound} implies $\delta(M^s) \le -\deltasigma(M^s)$.}
\end{proof}

As implied by \cref{lm:ms-meaning}, to answer range LIS value queries on some permutation $s$, it is sufficient to obtain random access to $M^s$.
We first describe a divide-and-conquer procedure that obtains the condensed representation of $M^s$ and then apply \cref{lm:cmo} to convert it into an efficient oracle.
Before we describe the complete divide-and-conquer procedure, we prove a helper lemma that will be required for the conquer step.

\begin{lemma}\label{lm:lis-decompresser}
    There is an algorithm that, given a subpermutation $s$, a subsequence $t$ of $s$ with $s^* = t^*$, and the condensed representation of $M^{t}$, in time $\Oh(|s|)$ computes the condensed representation of $M^s$.
\end{lemma}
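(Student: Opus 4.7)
The plan is to compute an order-preserving embedding of $t$ into $s$ and then assemble the condensed representation of $M^s$ directly from $\core(M^t)$ and the topmost row of $M^t$, without inspecting any other entries of $M^t$. Since $s^* = t^*$, every non-placeholder position of $s$ must match the corresponding symbol of $t$, so a left-to-right two-pointer sweep yields indices $I = \{i_0 < \cdots < i_{m-1}\} \subseteq \fragmentco{0}{n}$ (with $n = |s|$ and $m = |t|$) such that $t\position{k} = s\position{i_k}$. Let $\sigma \colon \fragmentcc{0}{n} \to \fragmentcc{0}{m}$ be defined by $\sigma(i) = |I \cap \fragmentco{0}{i}|$; equivalently, $\sigma(i+1) - \sigma(i) = 1$ iff $i \in I$, and $0$ otherwise. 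This preparation runs in $\Oh(n)$ time.

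The key identity driving the construction is $M^s_{i, j} = M^t_{\sigma(i), \sigma(j)}$ for all $i \le j$: by \cref{lm:ms-meaning}, both sides equal the LIS of the non-placeholder elements in the corresponding windows, which coincide because $s^* = t^*$. For $i > j$, \cref{lm:ms-meaning} gives the closed form $M^s_{i, j} = -2(i - j)$ independently of $t$, and \cref{lm:ms-no-core-below-diagonal} lets this region be ignored while building the core. For $i < j$, all four entries of the $2 \times 2$ block defining $\dens{M^s}_{i, j}$ lie in the upper triangle, so each equals its $\sigma$-indexed counterpart in $M^t$; a short case analysis then shows that $\dens{M^s}_{i, j} = \dens{M^t}_{\sigma(i), \sigma(j)}$ whenever both $i, j \in I$, and $\dens{M^s}_{i, j} = 0$ otherwise (two of the four entries collapse pairwise whenever $i$ or $j$ is missing from $I$). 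For $i = j$, substituting $M^s_{i+1, i} = -2$ and $M^s_{i, i+1} = \lis(s\fragmentco{i}{i+1})$ gives $\dens{M^s}_{i, i} = -1$ when $s\position{i} \neq \plh$ and $-2$ otherwise.

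These identifications translate directly into an algorithm. Fill the leftmost column with $(0, -2, \ldots, -2n)$; obtain the topmost row via $M^s_{0, j} = M^t_{0, \sigma(j)}$ by indexing into the topmost row stored in the condensed representation of $M^t$; and assemble $\core(M^s)$ in two passes---first map every off-diagonal $(i', j', v) \in \core(M^t)$ with $i' < j'$ to $(i_{i'}, i_{j'}, v)$, then append a diagonal element $(i, i, v_i)$ for each $i \in \fragmentco{0}{n}$, with $v_i \in \{-1, -2\}$ determined by whether $s\position{i} = \plh$. The total running time is $\Oh(n + m + \delta(M^t)) = \Oh(|s|)$ because $\delta(M^t) = \Oh(m) \le \Oh(n)$ by \cref{lm:ms-core-bound}. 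The main obstacle is the case analysis establishing that off-diagonal density entries either mirror the $M^t$ density or vanish; once that is settled, the rest of the proof is a direct transcription of the derived formulas.
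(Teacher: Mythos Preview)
Your proposal is correct and follows essentially the same approach as the paper: both establish the key identity $M^s_{i,j} = M^t_{\sigma(i),\sigma(j)}$ for $i\le j$ (the paper calls $\sigma$ by the name $\rpos$) and derive the core of $M^s$ from that of $M^t$ via the embedding, together with the explicit diagonal values. The only cosmetic difference is that the paper first analyzes a single $\plh$-insertion and then iterates, whereas you perform the case analysis ($i,j\in I$ versus not) directly; your choice to rebuild all diagonal entries from scratch instead of carrying over the diagonal part of $\core(M^t)$ yields the same set and is arguably cleaner.
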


\begin{proof}
    Let $n \coloneqq |s|$ and $m \coloneqq |t|$.
    As $t$ is a subsequence of $s$, a linear-time greedy algorithm computes an increasing injection of $t$ onto $s$.
    That is, for each $j \in \fragmentco{0}{m}$, we compute $\pos(j)$ such that $t\position{j} = s\position{\pos(j)}$ and $\pos(j - 1) < \pos(j)$ for $j \in \fragmentoo{0}{m}$.
    Furthermore, we set $\pos(m) = n$ for convenience.
    Moreover, for each $i \in \fragmentcc{0}{n}$, we compute $\rpos(i)$ as the smallest index $j \in \fragmentcc{0}{m}$ with $\pos(j) \ge i$.
    These mappings allow us to convert the values of $M^s$ to the values of $M^{t}$:
    For $i, j \in \fragmentcc{0}{n}$ with $i \le j$, we have $M^s_{i, j} = M^{t}_{\rpos(i), \rpos(j)}$ because (due to $s^* = t^*$) $s$ is only different from $t$ in $\plh$ elements.

    We now describe how to obtain the values in the topmost row and the leftmost column of $M^s$.
    The values in the leftmost column can be obtained trivially in linear time due to \cref{lm:ms-meaning}.
    That is, $M^s_{i, 0} = -2i$ for $i \in \fragmentcc{0}{n}$.
    Furthermore, we have $M^{s}_{0, j} = M^{t}_{\rpos(0), \rpos(j)} = M^{t}_{0, \rpos(j)}$ for all $j \in \fragmentcc{0}{n}$.
    Hence, having the values of the topmost row of $M^t$, we can compute the values of the topmost row of $M^s$ in linear time.
    Therefore, it remains to compute the core of $M^s$.

    \smallskip

    For conciseness, denote $A \coloneqq M^s$ and $B \coloneqq M^t$.
    Note that $s$ can be obtained from $t$ by repeatedly inserting $\plh$ at some positions.
    We first consider the case $m = n - 1$, when $s$ is obtained from $t$ by inserting $\plh$ at a single position $k \in \fragmentcc{0}{m}$.
    Then, due to \cref{lm:ms-meaning}, the following equalities are trivial:
    $A\fragmentcc{0}{k}\fragmentcc{0}{k} = B\fragmentcc{0}{k}\fragmentcc{0}{k}$,
    $A\fragmentcc{0}{k}\fragmentoc{k}{n} = B\fragmentcc{0}{k}\fragmentcc{k}{m}$,
    and $A\fragmentoc{k}{n}\fragmentoc{k}{n} = B\fragmentcc{k}{m}\fragmentcc{k}{m}$.

    Thus, it follows that $\dens{A}\fragmentco{0}{k}\fragmentco{0}{k} = \dens{B}\fragmentco{0}{k}\fragmentco{0}{k}$, $\dens{A}\fragmentco{0}{k}\fragmentoo{k}{n} = \dens{B}\fragmentco{0}{k}\fragmentco{k}{m}$, and $\dens{A}\fragmentoo{k}{n}\fragmentoo{k}{n} = \dens{B}\fragmentco{k}{m}\fragmentco{k}{m}$.
    Furthermore, as $A\fragmentcc{0}{k}\fragmentcc{k+1}{k+1} = A\fragmentcc{0}{k}\fragmentcc{k}{k}$ due to \cref{lm:ms-meaning}, we have that $\dens{A}\fragmentco{0}{k}\fragmentcc{k}{k}$ consists of zeros.
    Analogously, as $A\fragmentcc{k+1}{k+1}\fragmentoc{k}{n} = A\fragmentcc{k}{k}\fragmentoc{k}{n}$, we have that $\dens{A}\fragmentcc{k}{k}\fragmentoo{k}{n}$ consists of zeros.
    Moreover, due to \cref{lm:ms-no-core-below-diagonal}, $\dens{A}_{i, j} = 0$ for all $i, j \in \fragmentco{0}{n}$ with $i > j$.
    The only element of $\dens{A}$ that was not yet considered is $\dens{A}_{k, k} = A_{k, k + 1} + A_{k + 1, k} - A_{k, k} - A_{k + 1, k + 1} = \lis(\{\plh\}) + (-2) - 0 - 0 = -2$ due to \cref{lm:ms-meaning}.

    Putting all these facts together, we derive that every core element $(i, j, v) \in \core(B)$ is replaced in $\core(A)$ by $(i, j, v)$ if $i, j \in \fragmentco{0}{k}$, by $(i, j + 1, v)$ if $i \in \fragmentco{0}{k}$ and $j \in \fragmentco{k}{m}$, and by $(i + 1, j + 1, v)$ if $i, j \in \fragmentco{k}{m}$.
    Furthermore, one new core element $(k, k, -2)$ is added to $\core(A)$.

    In other words, for a core element of $B$, its first coordinate is incremented if it is not smaller than $k$ and its second coordinate is incremented if it is not smaller than $k$.
    Therefore, alternatively we may say that $(i, j, v) \in \core(B)$ is replaced by $(\pos(i), \pos(j), v)$ in $\core(A)$.
    Note that if we now drop the condition $m = n - 1$ and consider a collection of insertions of $\plh$ into $t$ that transforms it into $s$, the same property holds.
    Consequently, we have $\core(A) = \{(\pos(i), \pos(j), v) \mid (i, j, v) \in \core(B) \} \cup \{(k, k, -2) \mid k \in \fragmentco{0}{n} \setminus \{\pos(i) \mid i \in \fragmentco{0}{m}\}\}$.
    Thus, $\core(A)$ can be computed in linear time as $\delta(B) = \Oh(m) \le \Oh(n)$ due to \cref{lm:ms-core-bound}.
\end{proof}

\begin{lemma}\label{lm:lis-dc}
    There is an algorithm that, given a permutation $s$ of length $n$, in time $\Oh(n \log^2 n)$ computes the condensed representation of $M^s$.
\end{lemma}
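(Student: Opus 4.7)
The plan is to follow the divide-and-conquer procedure sketched above the lemma statement. For the base case $n \le 1$, $M^s$ can be computed directly in $\Oh(1)$ time. For $n \ge 2$, set $m \coloneqq \floor{n / 2}$, let $s^{\lo}$ be the length-$m$ subsequence of $s$ consisting of all values in $\fragmentco{0}{m}$, and let $s^{\hi}$ be the length-$(n-m)$ subsequence of $s$ consisting of all values in $\fragmentco{m}{n}$, each shifted down by $m$ so that $s^{\hi}$ becomes a permutation of $\fragmentco{0}{n-m}$. Recursive calls yield the condensed representations of $M^{s^{\lo}}$ and $M^{s^{\hi}}$. I would then define $\hat{s}^{\lo}$ as the length-$n$ subpermutation obtained from $s$ by replacing every value $\ge m$ with $\plh$; since $(\hat{s}^{\lo})^* = s^{\lo}$, \cref{lm:lis-decompresser} lifts the condensed representation of $M^{s^{\lo}}$ to that of $M^{\hat{s}^{\lo}}$ in $\Oh(n)$ time. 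Symmetrically, forming $\hat{s}^{\hi}$ by replacing each value $< m$ in $s$ with $\plh$ and decreasing every other entry by $m$ lets \cref{lm:lis-decompresser} produce $M^{\hat{s}^{\hi}}$. By \cref{lm:ms-core-bound}, both resulting $(n+1) \times (n+1)$ matrices have core size $\Oh(n)$.

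The combining step rests on the identity
\[
    M^s_{i, j} = \max_{k \in \fragmentcc{0}{n}} \bigl( M^{\hat{s}^{\lo}}_{i, k} + M^{\hat{s}^{\hi}}_{k, j} \bigr) \qquad \text{for all } i, j \in \fragmentcc{0}{n}.
\]
For $i < j$, this follows from \cref{lm:ms-meaning} together with the observation that any increasing subsequence of $s\fragmentco{i}{j}$ decomposes at some index $k$ into an increasing subsequence of values $< m$ in $s\fragmentco{i}{k}$ and one of values $\ge m$ in $s\fragmentco{k}{j}$, while any such concatenation is again increasing because every \emph{low} value precedes every \emph{high} value. For $i \ge j$, a short case analysis on the position of $k$ relative to $i$ and $j$ (using the second clause of \cref{lm:ms-meaning}) confirms that the maximum equals $-2(i - j) = M^s_{i, j}$. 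Since $M^{\hat{s}^{\lo}}$, $M^{\hat{s}^{\hi}}$, and $M^s$ are all anti-Monge as distance matrices in planar graphs, the identity is equivalent to $-M^s = (-M^{\hat{s}^{\lo}}) \minplus (-M^{\hat{s}^{\hi}})$, letting me invoke \cref{lm:core-based-multiplication-algorithm} on the negated Monge matrices and negate the result to recover the condensed representation of $M^s$.

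For the running time, the merge call runs in $\Oh(n \log n)$ since both inputs have dimension $n + 1$ and core sizes $\Oh(n)$, while all other non-recursive work per level is linear. The recurrence $T(n) = T(m) + T(n - m) + \Oh(n \log n)$ with $m = \floor{n / 2}$ solves to $T(n) = \Oh(n \log^2 n)$. The step I expect to require the most care is the combining identity: beyond the natural LIS interpretation when $i < j$, one must rule out any spuriously larger sums for $i \ge j$, ensuring that the $(\max, +)$-product reproduces the full matrix $M^s$ rather than merely its above-diagonal entries.
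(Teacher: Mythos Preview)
Your proposal is correct and follows the same divide-and-conquer scheme as the paper: split by value threshold $m=\lfloor n/2\rfloor$, recurse on the two resulting permutations, lift via \cref{lm:lis-decompresser}, and combine with a $(\max,+)$-product computed through \cref{lm:core-based-multiplication-algorithm}.

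The only substantive difference is in how the combining identity is justified. The paper argues entirely through the alignment graph: $M^{\hat s^{\lo}}$ is the matrix of longest distances from the bottom row $S_0$ to the middle row $S_m$ of $G^s$, $M^{\hat s^{\hi}}$ is that from $S_m$ to the top row $S_n$, and since no edge of $G^s$ decreases the second coordinate, every path from $S_0$ to $S_n$ crosses $S_m$ exactly once. This yields $M^s_{i,j}=\max_k\bigl(M^{\hat s^{\lo}}_{i,k}+M^{\hat s^{\hi}}_{k,j}\bigr)$ uniformly for all $i,j$, with no case analysis. Your combinatorial route also works, but note a small omission: for $i<j$ you only cover $k\in[i,j]$ explicitly, whereas the maximum ranges over all $k\in[0,n]$. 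You would still need to check that $k<i$ or $k>j$ cannot produce a spuriously large sum (easy, using $\lis(\hat s^{\hi}[k\mathinner{.\,.} j))\le \lis(\hat s^{\hi}[i\mathinner{.\,.} j))+(i-k)$ and the symmetric bound). The graph-based argument sidesteps this entirely and is what makes the backward weight-$(-2)$ edges in \cref{def:alignment-graph} pay off.
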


\begin{proof}
    Let the initial permutation be $s_0$, and its length be $n_0$.
    We create a divide-and-conquer recursive procedure that computes $M^{s_0}$.
    Let the permutation in the current recursive call be $s$, and its length be $n$.
    If $n = 1$, we construct $M^s$ trivially.
    Otherwise, let $m \coloneqq \floor{\frac{n}{2}}$.
    We construct two subpermutations $s^{\lo}$ and $s^{\hi}$.
    For all $i \in \fragmentco{0}{n}$, we define $s^{\lo}\position{i} = s\position{i}$ and $s^{\hi}\position{i} = \plh$ if $s_i < m$, and $s^{\lo}\position{i} = \plh$ and $s^{\hi}\position{i} = s\position{i} - m$ otherwise.
    We compute the condensed representations of $M^{s^{\lo *}}$ and $M^{s^{\hi *}}$ recursively and use \cref{lm:lis-decompresser} to obtain the condensed representations of $M^{s^{\lo}}$ and $M^{s^{\hi}}$.
    We claim that $M^s$ is the $(\max, +)$-product of $M^{s^{\lo}}$ and $M^{s^{\hi}}$ and compute its condensed representation using \cref{lm:core-based-multiplication-algorithm}.\footnote{The $(\max, +)$-product can be computed as $-((-M^{s^{\lo}}) \minplus (-M^{s^{\hi}}))$, and to get the condensed representation of a negated matrix, we just need to negate all values in the topmost row, the leftmost column, and the core.}

    Let us prove correctness.
    Define the sets $S_j \coloneqq \fragmentcc{0}{n} \times \fragmentcc{j}{j}$ of vertices of $G^s$ for each $j \in \fragmentcc{0}{n}$.
    Note that $M^s$ is the matrix of the lengths of the longest paths from $S_0$ to $S_n$.
    Furthermore, as no edges in $G^s$ decrease the second coordinate of a vertex, every path from $S_0$ to $S_n$ crosses $S_m$ exactly once.
    Moreover, note that $M^{s^{\lo}}$ is the matrix of the lengths of the longest paths from $S_0$ to $S_m$ and $M^{s^{\hi}}$ is the matrix of the lengths of the longest paths from $S_m$ to $S_n$ in $G^s$.
    Therefore, $M^s$ is the $(\max, +)$-product of $M^{s^{\lo}}$ and $M^{s^{\hi}}$, and the algorithm is correct.

    We now analyze the running time of the algorithm.
    If $n = 1$, the running time is constant.
    Otherwise, the running time of a single recursive call is dominated by the call to \cref{lm:core-based-multiplication-algorithm}.
    Due to \cref{lm:ms-core-bound}, we have $\delta(M^{s^{\lo}}), \delta(M^{s^{\hi}}) = \Oh(n)$, and thus the application of \cref{lm:core-based-multiplication-algorithm} takes $\Oh(n \log n)$ time.
    Each recursive step splits $n$ approximately in half, so the depth of the recursion is $\Oh(\log n_0)$.
    At each level, the total time spent is $\Oh(n_0 \log n_0)$.
    Therefore, the total time complexity of the algorithm is $\Oh(n_0 \log^2 n_0)$.
\end{proof}

We are now ready to replicate the result of \cite{Tiskin10}.

\begin{lemma}\label{lm:lis-no-reporting}
    There is an algorithm that, given a permutation $s$ of length $n$, in time $\Oh(n \log^2 n)$ builds a data structure that can answer range LIS value queries in $\Oh(\log n)$ time.
\end{lemma}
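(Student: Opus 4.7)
The plan is to combine the divide-and-conquer construction of \cref{lm:lis-dc} with the core-based matrix oracle of \cref{lm:cmo}. First, I would invoke \cref{lm:lis-dc} on the input permutation $s$ to obtain, in $\Oh(n \log^2 n)$ time, the condensed representation of the distance matrix $M^s$ of the alignment graph $G^s$. Then, I would feed this condensed representation into \cref{lm:cmo} to build a \mds data structure over $M^s$ that supports $\Oh(\log \delta(M^s))$-time random access to arbitrary entries of $M^s$.

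For the time bounds, the key ingredient is \cref{lm:ms-core-bound}, which gives $\delta(M^s) = \Oh(n)$. This implies that the preprocessing of \cref{lm:cmo}, taking $\Oh(p + q + \delta(M^s) \log \delta(M^s)) = \Oh(n \log n)$ time, is dominated by the $\Oh(n \log^2 n)$ cost of \cref{lm:lis-dc}, so the overall preprocessing fits within the claimed budget. On the query side, $\Oh(\log \delta(M^s)) = \Oh(\log n)$, matching the desired query time.

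To answer a range LIS value query for the subarray $s\fragmentco{i}{j}$ with $0 \le i < j \le n$, I would simply return $M^s_{i, j}$ using the \mds oracle. Correctness is immediate from \cref{lm:ms-meaning}, which asserts $M^s_{i, j} = \lis(s\fragmentco{i}{j})$ whenever $i < j$.

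There is no substantial obstacle here: all of the heavy lifting has been done in the earlier lemmas. The only thing to be careful about is matching the parameters---confirming that $\delta(M^s)$ and the dimensions of $M^s$ are all $\Oh(n)$ so that the two oracle bounds from \cref{lm:cmo} absorb cleanly into the stated $\Oh(n \log^2 n)$ preprocessing and $\Oh(\log n)$ query time.
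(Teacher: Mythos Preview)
Your proposal is correct and follows essentially the same approach as the paper: apply \cref{lm:lis-dc} to get the condensed representation of $M^s$, build $\mds(M^s)$ via \cref{lm:cmo} using the bound $\delta(M^s)=\Oh(n)$ from \cref{lm:ms-core-bound}, and answer each query by reading off $M^s_{i,j}$ with correctness guaranteed by \cref{lm:ms-meaning}.
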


\begin{proof}
    We apply \cref{lm:lis-dc} to obtain the condensed representation of $M^s$ in time $\Oh(n \log^2 n)$.
    We then spend additional $\Oh(n \log n)$ time to compute $\mds(M^s)$ using \cref{lm:cmo}, where $\delta(M^s) = \Oh(n)$ due to \cref{lm:ms-core-bound}.
    Now whenever we get a query for $\lis(s\fragmentco{i}{j})$, we use $\mds(M^s)$ to obtain $M^s_{i, j}$ in time $\Oh(\log n)$, which is indeed the answer due to \cref{lm:ms-meaning}.
\end{proof}

We now extend the algorithm of \cref{lm:lis-no-reporting} to answer range LIS reporting queries.
This improves upon the result of \cite{KR24}.

\thmlis*

\begin{proof}
    We first run the algorithm of \cref{lm:lis-dc} and store the condensed representations of all intermediate distance matrices and $\pos$ and $\rpos$ correspondences between sequences $s$ and $s^*$.
    Furthermore, we replace all calls to \cref{lm:core-based-multiplication-algorithm} in \cref{lm:lis-dc} with the calls to \cref{lm:product-witness-reconstruction}.
    There is a total of $2n - 1$ recursive calls in this algorithm.\footnote{A full binary tree with $n$ leaves has $2n-1$ nodes.}
    We arbitrarily index them with numbers from $0$ to $2n-2$.
    Denote the root recursive call by $\rho$, and for every internal recursive call $\nu$, denote its children recursive calls by $\nu_L$ and $\nu_R$.
    Let $s_\nu$ be the input permutation of the $\nu$-th recursive call for each $\nu \in \fragmentco{0}{2n-1}$.
    Denote $\tau \coloneqq \ceil{\log^{2 - \alpha} n} + 1$.

    We compute the dynamic programming table with values $dp(\nu, i, \ell)$ for all $\nu \in \fragmentco{0}{2n-1}$, $i \in \fragmentco{0}{|s_\nu|}$, and $\ell \in \fragmentcc{1}{\tau}$.
    Here, $dp(\nu, i, \ell)$ is the smallest index $j \in \fragmentoc{i}{|s_\nu|}$ such that $\lis(s_\nu\fragmentco{i}{j}) \ge \ell$ or $|s_\nu| + 1$ if no such $j$ exists.
    Furthermore, if $dp(\nu, i, \ell) \neq |s_\nu| + 1$, we additionally store the information sufficient to reconstruct the corresponding increasing subsequence of $s_{\nu}\fragmentco{i}{dp(\nu, i, \ell)}$ of length $\ell$ in $\Oh(\ell)$ time (call this additional information $dp'$).
    If $\ell = 1$, we explicitly store the only element of the subsequence of length $1$.
    Otherwise, if $\ell > 1$, we store two dynamic programming states $(\nu_1, i_1, \ell_1)$ and $(\nu_2, i_2, \ell_2)$ with $\ell_1, \ell_2 \ge 1$ and $\ell_1 + \ell_2 = \ell$ such that the corresponding subsequence for $(\nu, i, \ell)$ is the concatenation of the corresponding subsequences for $(\nu_1, i_1, \ell_1)$ and $(\nu_2, i_2, \ell_2)$.
    This way, we can indeed recursively reconstruct the subsequence in time $\Oh(\ell)$.
    We now show how to compute the values of $dp$ and $dp'$.

    \begin{claim}
        The values of $dp$ and $dp'$ can be computed in time $\Oh(n \tau \log n)$.
    \end{claim}

    \begin{claimproof}
        We compute the dynamic programming tables in the bottom-up order of the recursion tree.
        That is, we assume that, when we compute $dp(\nu, \cdot, \cdot)$, the values $dp(\nu_L, \cdot, \cdot)$ and $dp(\nu_R, \cdot, \cdot)$ are already available.
        Fix some recursion node  $\nu$ and let $n_{\nu} \coloneqq |s_{\nu}|$.
        We iterate over $\ell \in \fragmentcc{1}{\tau}$ and compute all values $dp(\nu, \cdot, \ell)$ and $dp'(\nu, \cdot, \ell)$.
        For $\ell = 1$, we compute the values trivially:
        Any nonempty sequence has an increasing subsequence of length one, so $dp(\nu, i, 1) = i + 1$ and $dp'(\nu, i, 1) = s_{\nu}\position{i}$ for all $i \in \fragmentco{0}{|s_{\nu}|}$.\footnote{Actually, $s_{\nu}\position{i}$ is not equal to the value from the initial permutation $s$. 
        To circumvent this issue, for all elements of the permutations in the recursion, we remember the values from the initial permutation they correspond to.}
        
        We now consider the case $\ell > 1$.
        For conciseness, let $A \coloneqq M^{s^{\lo}_{\nu}}, B \coloneqq M^{s^{\hi}_{\nu}}$, and $C \coloneqq M^{s_{\nu}}$.
        Note that $dp(\nu, i, \ell)$ is non-decreasing over $i$.
        We find all these values using three nondecreasing pointers.
        We maintain three values $i, j$, and $k$, where $k = dp(\nu, i, \ell)$, and $j$ is the smallest value such that $(j, m)$ is on some longest path from $(i, 0)$ to $(k, n_{\nu})$ in $G^{s_{\nu}}$, where $m = \floor{\frac{n_{\nu}}{2}}$ is taken from \cref{lm:lis-dc}.
        In other words, $j = \wit{A}{B}_{i, k}$.
        As throughout the algorithm's lifetime, $i$ and $k$ are only increasing, $j$ is also nondecreasing due to \cref{lm:witness-monotonicity}.

        We have that $C$ is the $(\max, +)$-product of $A$ and $B$.
        Initially we build $\lco(A), \lco(B)$, and $\lco(C)$ using \cref{lm:local-core-oracle}.
        Furthermore, we initially set $i = j = k = 0$.
        Throughout the lifetime of the algorithm, we maintain the values $A_{i, j}, B_{j, k}$, and $C_{i, k}$.
        Initially, we compute them using boundary access of $\lco$.
        At each step, for fixed $i$ and $k$, we increment $j$ as long as $A_{i, j} + B_{j, k} \neq C_{i, k}$, that is, until $j$ becomes the smallest witness.
        Furthermore, for a fixed $i$, we increment $k$ as long as $C_{i, k} < \ell$, that is, until $k$ becomes equal to $dp(\nu, i, \ell)$.
        If we still have $C_{i, k} < \ell$ for $k = n_{\nu}$, then we stop the algorithm and set $dp(\nu, i', \ell) = n_{\nu} + 1$ for all $i' \in \fragmentco{i}{n_{\nu}}$.

        When we find the right values $j$ and $k$ for the given $i$, we set $dp(\nu, i, \ell) \coloneqq k$ and compute $dp'(\nu, i, \ell)$ as described below.
        In constant time we can compute the indices $i^{\lo}$ and $j^{\lo}$ in $s^{\lo *}_{\nu}$ corresponding to the indices $i$ and $j$ in $s^{\lo}_{\nu}$ using the $\rpos$ mapping for $s^{\lo}_{\nu}$.
        Similarly, in constant time we can compute the indices $j^{\hi}$ and $k^{\hi}$ in $s^{\hi *}_{\nu}$ corresponding to the indices $j$ and $k$ in $s^{\hi}_{\nu}$.
        If $A_{i, j} = 0$, all the elements of the longest increasing subsequence lie in $s^{\hi}_{\nu}$, and thus we set $dp'(\nu, i, \ell) \coloneqq dp'(\nu_R, j^{\hi}, \ell)$.
        Analogously, if $B_{j, k} = 0$, all the elements of the longest increasing subsequence lie in $s^{\lo}_{\nu}$, and we set $dp'(\nu, i, \ell) \coloneqq dp'(\nu_L, i^{\hi}, \ell)$.
        Finally, if $A_{i, j} > 0$ and $B_{j, k} > 0$, the longest increasing subsequence is split across $s^{\lo}_{\nu}$ and $s^{\hi}_{\nu}$, and thus we set $dp'(\nu, i, \ell) \coloneqq ((\nu_L, i^{\lo}, A_{i, j}), (\nu_R, j^{\hi}, B_{j, k}))$.

        After computing $dp(\nu, i, \ell)$ and $dp'(\nu, i, \ell)$, we increment $i$ and continue the algorithm.
        At each step of the algorithm, we increment one of the variables $i, j$, or $k$ and recompute the corresponding values $A_{i, j}$, $B_{j, k}$, and $C_{i, k}$ using the horizontally or vertically adjacent recomputation of $\lco$.

        We now analyze the time complexity of the algorithm.
        First, we analyze it for a fixed recursion node $\nu$.
        For $\ell = 1$, it takes $\Oh(n_{\nu})$ time to compute $dp(\nu, \cdot, \ell)$ and $dp'(\nu, \cdot, \ell)$.
        For any other $\ell$, we go through all indices $i \in \fragmentco{0}{n_{\nu}}$, and, for each one of them, find the corresponding $j$ and $k$ and then compute the dynamic programming values.
        All operations except for finding $j$ and $k$ take constant time.
        The index $k$ is increased by at most $n_{\nu}$ across all iterations.
        Whenever it is increased, we spend $\Oh(\deltaj{k}(C) + 1)$ time to recompute $C_{i, k}$, which sums up to $\Oh(\delta(C) + n_{\nu})$ time in total.
        Furthermore, we spend $\Oh(\deltaj{k}(B) + 1)$ time to recompute $B_{j, k}$, which sums up to $\Oh(\delta(B) + n_{\nu})$ time in total.
        Analogously, the index $j$ is increased by at most $n_{\nu}$ across all iterations, and we spend at most $\Oh(n_{\nu} + \delta(A) + \delta(B))$ time in total on it.
        Finally, the index $i$ is increased by at most $n_{\nu}$ across all iterations, and we spend at most $\Oh(n_{\nu} + \delta(A) + \delta(C))$ time in total on it.
        Overall, the computation of $dp(\nu, \cdot, \ell)$ and $dp'(\nu, \cdot, \ell)$ takes time $\Oh(n_{\nu} + \delta(A) + \delta(B) + \delta(C))$ in total, which is equal to $\Oh(n_{\nu})$ due to \cref{lm:ms-core-bound}.

        Across all $\ell$, the algorithm takes time $\Oh(n_{\nu} \tau)$.
        There are $\Oh(\log n)$ recursion levels, and the values $n_{\nu}$ sum up to $n$ on each level, so the total time complexity is $\Oh(n \tau \log n)$.
    \end{claimproof}

    This concludes the preprocessing phase, which takes $\Oh(n \log^2 n + n \tau \log n) = \Oh(n \log^{3 - \alpha} n)$ time in total.
    We now describe how to answer a query.
    Suppose that we are asked to report the longest increasing subsequence of $s\fragmentco{i_0}{j_0}$ for some $0\le i_0 < j_0 \le n$.
    Let $\ell_0 = \lis(s\fragmentco{i_0}{j_0})$.
    Note that $M^{s}_{i_0, j_0} = \ell_0$ and every longest path from $(i_0, 0)$ to $(j_0, n)$ in $G^s$ corresponds to a longest increasing subsequence of $s\fragmentco{i_0}{j_0}$  due to \cref{lm:ms-meaning}.
    We backtrack one such path through the recursion tree of \cref{lm:lis-dc}.
    Let the current backtracking recursive call be $(\nu, i, j)$ (initially, it is $(\rho, i_0, j_0)$).
    Let $\ell \coloneqq \lis(s_\nu\fragmentco{i}{j})$.
    If $dp(\nu, i, \tau) > j$, we have $\ell < \tau$.
    We then set $\ell_0 = 1$ and increment it as long as $dp(\nu, i, \ell_0 + 1) \le j$, that is, as long as $\ell_0 < \ell$.
    We use $dp'$ to reconstruct the underlying longest increasing subsequence.

    Hence, from now on we assume that $dp(\nu, i, \tau) \le j$, and thus $\ell \ge \tau$.
    Note that $M^{s_\nu}$ was obtained as the $(\max, +)$-product of $M^{s_\nu^{\lo}}$ and $M^{s_\nu^{\hi}}$ for two subpermutations $s_\nu^{\lo}$ and $s_\nu^{\hi}$ that $s_\nu$ was decomposed into.
    Due to \cref{lm:product-witness-reconstruction}, in time $\Oh(\log n)$ we can reconstruct the witness $h$ such that $M^{s_\nu}_{i, j} = M^{s_\nu^{\lo}}_{i, h} + M^{s_\nu^{\hi}}_{h, j}$, that is, some longest increasing subsequence in $s_\nu\fragmentco{i}{j}$ uses the values from $s_\nu^{\lo}$ in $\fragmentco{i}{h}$, and then the values from $s_\nu^{\hi}$ in $\fragmentco{h}{j}$.
    In constant time, we can recompute the indices in $s_\nu^{\lo *}$ corresponding to $i$ and $h$ in $s_\nu^{\lo}$ using the $\rpos$ mapping for $s_\nu^{\lo}$ and recursively compute the prefix of the longest increasing subsequence in $s_\nu^{\lo}\fragmentco{i}{h}$.
    By analogously recursing in the other direction, we compute the suffix of the longest increasing subsequence in $s_\nu^{\hi}\fragmentco{h}{j}$.

    It remains to analyze the time complexity of the reporting procedure.
    Note that if we have $dp(\nu, i, \tau) > j$ in a recursive call, then we first find $\ell$ in $\Oh(\ell)$ time and then reconstruct the underlying subsequence in $\Oh(\ell)$ time.
    Thus, all leaf recursive calls take $\Oh(\ell_0)$ time in total.

    We now prove that there are $\Oh(\ell_0 \log n / \tau)$ internal recursive calls.
    Call a leaf recursive call \emph{heavy} if in it we have $\ell \ge \tau / 2$.
    We claim that each internal recursive call $\nu$ has at least one heavy leaf recursive call in its backtracking subtree.
    Consider the deepest internal recursive call $\mu$ in the subtree of $\nu$.
    As $\mu$ is the deepest internal recursive call, both its children are leaf recursive calls.
    As $\mu$ is not a leaf recursive call, we have that $\ell_{\mu} \ge \tau$ for it.
    As $\ell_{\mu} = \ell_{\mu^L} + \ell_{\mu^R}$, we have that $\max\{\ell_{\mu^L}, \ell_{\mu^R}\} \ge \tau / 2$, and thus either $\mu^L$ or $\mu^R$ is a heavy leaf recursive call.
    Furthermore, note that, as the sum of $\ell$ over all leaf recursive calls is equal to $\ell_0$, there are at most $2 \ell_0 / \tau$ heavy leaf recursive calls.
    Finally, as the depth of the recursion is $\Oh(\log n)$, each heavy leaf recursive call has $\Oh(\log n)$ ancestors, and thus there are $\Oh(\ell_0 \log n / \tau)$ internal recursive calls in total as required.

    Each internal recursive call takes $\Oh(\log n)$ time dominated by the witness reconstruction.
    Therefore, the total time complexity is $\Oh(\ell_0 + \ell_0 \log n / \tau \cdot \log n) = \Oh(\ell_0 \log^{\alpha} n)$, thus proving the claim.
\end{proof}

\subsection{Open Problem: Weighted Range LIS}\label{sec:open}

Since \cref{lm:core-based-multiplication-algorithm} works for arbitrary Monge matrices and not only for unit-Monge matrices, the algorithm of \cref{lm:lis-no-reporting} can be easily extended to \emph{weighted} range LIS queries, where each element of $s$ is equipped with a positive weight and the queries ask to find maximum-weight increasing subsequences of $s[i\dd j)$.
For this, in \cref{def:alignment-graph}, we make the weights of the diagonal edges equal to the weights of the corresponding elements of $s$, and the weights of backward edges larger than all weights in $s$.
The only difference compared to the unweighted version of the problem is its time complexity, which depends on the core size of $M^s$.
In the unweighted case we have $\delta(M^s) = \Oh(n)$ due to \cref{lm:simple-core-bound,lm:calculate-value-from-core} because all elements of $M^s$ are bounded by $\Oh(n)$.
In the weighted case, we are only aware of the trivial bound $\delta(M^s)=\Oh(n^2)$ if the individual weights are $\Omega(n)$.

However, if $\delta(M^s) \le \tOh(n^e)$ holds some constant $e \ge 1$ and every length-$n$ sequence $s$, the whole algorithm takes time $\tOh(n^e)$ for preprocessing and $\Oh(\log n)$ time for queries.
To the best of our knowledge, no algorithm for weighted range LIS with truly sub-quadratic preprocessing time and poly-logarithmic query time is known.
Therefore, a bound of the form $\delta(M^s) \le \tOh(n^e)$ for $e < 2$ valid for every length-$n$ sequence $s$ would be of interest.

One way of proving such a bound could be a strengthened version of \cref{lm:core-preservation}.
Suppose that every two $n\times n$ Monge matrices $A$ and $B$ satisfy $\delta(A \minplus B) \le c \cdot (\delta(A) + \delta(B)) + \tOh(n)$ for some constant $c$; in particular, \cref{lm:core-preservation} proves this fact for $c = 2$.
We can use the divide-and-conquer procedure of \cref{lm:lis-dc} to show a bound on $M^s$ in terms of $c$.
Note that in a single recursive call, we recurse into two subproblems of size $n / 2$, ``desparsify'' the resulting matrices, and then $(\max, +)$-multiply them.
The desparsification step adds a linear number of core elements to the matrices.
Therefore, the core size of $M^s$ can be bounded using the following recursive relation:
\begin{align*}
    f(1) &= \Oh(1),\\
    f(n) &= c \cdot ((f(n / 2) + \Oh(n)) + (f(n / 2) + \Oh(n))) + \tOh(n) = 2c \cdot f(n / 2) + \tOh(n),
\end{align*}
which yields $\delta(M^s) \le f(n) \le \tOh((2c)^{\log_2 n}) = \tOh(n^{1 + \log_2 c})$ for any $c \ge 1$.
For $c = 2$ from \cref{lm:core-preservation}, we recover a trivial bound $\delta(M^s) \le \Oh(n^2)$, but for any $c<2$, this would give a truly sub-quadratic bound on $\delta(M^s)$.
Therefore, any strengthening of \cref{lm:core-preservation} would immediately yield a new algorithm for weighted range LIS queries.

\end{document}